\newcommand{\Ex}{\mathbb{E}}
\newtheorem{theorem}{Theorem}
\newtheorem{prop}{\bf Proposition}
\newtheorem{lemma}{Lemma}
\newtheorem*{THEO}{Theorem~\ref{thm:synch}'}
\newcounter{definition}
\newenvironment{definition}[1][]{\refstepcounter{definition}\par\medskip\noindent%
   \textbf{Definition~\thedefinition. #1}}{\newline}
\newcounter{example}
\newenvironment{example}[1][]{\refstepcounter{example}\par\medskip\noindent%
   \textbf{Example~\theexample. }}{\medskip}
\newcounter{construction}
\newenvironment{construction}[1][]{\refstepcounter{construction}\par\medskip\noindent%
   \textbf{Construction~\theconstruction. }}{\medskip}
\newcommand{\set}[1]{\mathcal{#1}}
\newcommand{\mat}[1]{\mathbf{#1}}
\renewcommand{\vec}[1]{\boldsymbol{#1}}
\long\def\symbolfootnote[#1]#2{\begingroup
\def\thefootnote{\fnsymbol{footnote}}\footnote[#1]{#2}\endgroup}
\title{Construction and Applications of CRT Sequences}
\author{Kenneth W. Shum, \IEEEmembership{Member, IEEE,} and Wing Shing Wong, \IEEEmembership{Fellow, IEEE}
\thanks{Kenneth W. Shum and Wing Shing Wong are with the Dept. of Information Engineering, %
The Chinese University of Hong Kong, Shatin, Hong Kong. Emails: wkshum@inc.cuhk.edu.hk, wswong@ie.cuhk.edu.hk.}
}
\begin{document}

\maketitle

\begin{abstract}
Protocol sequences are used for channel access in the collision channel without feedback. Each user accesses the channel according to a deterministic zero-one pattern, called the protocol sequence. In order to minimize fluctuation of throughput due to delay offsets, we want to construct protocol sequences whose pairwise Hamming cross-correlation is as close to a constant as possible. In this paper, we present a  construction of protocol sequences which is based on the bijective mapping between one-dimensional sequence and two-dimensional array by the Chinese Remainder Theorem (CRT). In the application to the collision channel without feedback, a worst-case lower bound on system throughput is derived.
\end{abstract}

{\it Tags:} Protocol sequences, collision channel without feedback, cyclically permutable constant-weight codes, optical orthogonal codes.

\section{Introduction}

\subsection{Background and Motivation}
\symbolfootnote[0]{This work was partially supported by a grant from the Research Grants Council of the Hong Kong Special Administrative Region under Project 417909.}
\symbolfootnote[0]{The result in this paper was partially presented in IEEE Int. Symp. on Inform. Theory, Austin, 2010. This paper appears in IEEE Trans. on Inform. Theory, Nov, 2010.}
Randomness is commonly used in the design of multiple-access schemes. For example, in slotted ALOHA, each user transmits a packet with probability $p$ independently. Implementations of such random access schemes in practice usually substitute random variables by pseudo-random numbers. However, high-quality pseudo-random number generation may be too complicated for applications, such as wireless sensor networks, where computing power is limited. The objective of this paper is to construct binary pseudo-random sequences, called protocol sequences, that are tailored to the quality-of-service requirements of interest, such as high throughput and bounded delay.

Protocol sequences are used in multiple-access in the collision channel without feedback~\cite{Massey85}. In this paper, we consider a time-slotted system, consisting of a number of transmitters and one receiver. A user sends a packet within the boundaries of a time slot.  If exactly one user transmits in a time slot, the received packet is received successfully. If two or more users transmit in the same time slot, a collision is incurred, and the received packet is assumed unrecoverable. If no user transmits in a time slot, that time slot is idle. Since there is no feedback from the receiver, collision resolution algorithm such as the stack algorithm is not possible. Each user repeats his assigned binary protocol sequence periodically, and transmits a packet if and only if the value of the protocol sequence at that time slot equals one. We note that the transmission schedule is independent of the data being sent and there is no cooperation among the users.

The capacity of the collision channel without feedback is characterized by Massey and Mathys in~\cite{Massey85}. It is shown that the zero-error sum-capacity is $e^{-1}$.  They use protocol sequences having the special property that the Hamming cross-correlation is independent of relative delay offsets. In fact, the Hamming cross-correlation they considered is a generalized notion of Hamming cross-correlation, which is defined for all nonempty subsets of users, not just for pairs of users. Protocol sequences with this property are called {\em shift-invariant} sequences~\cite{SCSW09}.   Shift-invariant protocol sequences have the advantage that there is no fluctuation in throughput no matter what the delay offsets are, and hence have the largest worst-case system throughput. Constructions of shift-invariant protocol sequences are considered in~\cite{Massey85, Rocha00, SCSW09}. Nevertheless, shift-invariant protocol sequences have the drawback that the period grows exponentially in the number of users~\cite{SCSW09}. Even if we relax this requirement and consider protocol sequence sets with only pairwise Hamming cross-correlation being constant, it is shown in~\cite{ZSW} that the period grows exponentially in the number of users as well. Long period length has the disadvantage that individual or system throughput is invariant only if it is averaged over a long period. Individual users may suffer short-time starvation. In order to achieve short period length, we must seek for protocol sequences with some small variance in Hamming cross-correlation allowed.

After the seminal work of~\cite{Massey85}, more constructions of protocol sequences are given in~\cite{Nguyen92, GyorfiVajda93,  MZKZ95, Bitan95, chung89}, sometime under the name of {\em cyclically permutable constant-weight codes}, or {\em optical orthogonal codes}. The main difference between these works in the literature and our construction  is that, the protocol sequences in these papers are required to have small Hamming cross-correlation and auto-correlation. In our construction, Hamming auto-correlation may be very large.

Another class of protocol sequences, called {\em wobbling sequences}~\cite{Wong07}, has period equal to $M^4$, where $M$ is the number of users, with worst-case system throughput provably larger than a positive constant that is approximately equal to 0.25 when $M$ is large. The result in this paper improves upon the wobbling sequences by constructing protocol sequences of order $O(M^2)$ or $O(M^3)$, depending on the models of user activity, while the guarantee of the worst-case system throughput remains the same.

The construction proposed in this paper is based on the Chinese remainder theorem (CRT), and thereby the constructed sequences are called {\em CRT sequences}.
We remark that the use of CRT in the construction of protocol sequences is not new. The constructions of optical orthogonal codes in~\cite{Nguyen92, GyorfiVajda93, MZKZ95} also employ CRT. However, the specific construction proposed here is novel. Although the analysis of CRT sequences is quite involved, the generation of such sequences requires no more than computing linear operations in modular arithmetics

The proposed CRT sequences have two special features, {\em user-identification} and {\em frame-synchronization} capability.
The sender of each successfully received packet can be identified by looking at the channel activity only, without looking into the packet contents. Also, we can determine the start time of a protocol sequence, again, based on the channel activity only. One potential application based on this property is tracking targets by detecting energy pulses coming from multiple distributed sources, for example, as in a multistatic radar system or in a ultrasound based sensor network. While it may be difficult to identify the transmitting source of a single pulse, if each source employs a CRT sequence and sends multiple pulses according to the corresponding protocol sequence, then the identifiability
property ensures that a detector can identify the transmitting sources of
the pulses and make use of  the information for more accurate tracking of the targets.

\subsection{Main Results}

The main construction is given in Section~\ref{sec:CRT}. One of the key ingredients in the construction is the one-to-one correspondence between one-dimensional sequence and two-dimensional arrays via CRT. The correlation property of the constructed sequences is  analyzed in Section~\ref{sec:cross-correlation}.

Applications to the collision channel without feedback are addressed in Section~\ref{sec:throughput} with two different user activity models~\cite{GG07}. In the first model, an infinite backlog of data is assumed and the users are active throughout the transmission. We show in Theorem~\ref{thm:throughput1} that under this user activity model, we can achieve a throughput of 0.25 with $O(M^{2+\epsilon})$ sequence period, where $M$ is the number of users.

In the second model, users become active only if they have data to send, and remain idle otherwise. If a user becomes active, it is required that the user remains active for at least one period of the protocol sequence assigned. We show in the second part of Section~\ref{sec:throughput} that if the number of active users is no more than one half of the number of potential users, and if the protocol sequences are sufficiently long, namely $O(M^3)$,  the receiver can detect the set of active users and determine their starting time correctly, even without any packet header.

To facilitate comparison between different protocol sequences, we introduce in Section~\ref{sec:uniform} a notion called $\epsilon$-uniformity, which measures the variation of Hamming cross-correlation. Other applications of CRT are given in Section~\ref{sec:discussions}.

\subsection{Definitions and Notations}

Let $\mathbb{Z}_n$ be the ring of residues mod $n$ for a positive integer~$n$.  We will reserve the letter $L$ for sequence length.

\begin{definition}
The components in a sequence of length $L$ are indexed from 0 to $L-1$. The time indices $\{0,1,\ldots, L-1\}$ is identified with~$\mathbb{Z}_L$. The {\em Hamming weight} of a binary sequence $a(t)$ of length $L$ is denoted by $w_a$. The {\em duty factor}~\cite{Massey85} of $a(t)$ is the Hamming weight divided by the length,
\[
 f_a := \frac{1}{L} \sum_{t=0}^{L-1} a(t).
\]
For two binary sequences $a(t)$ and $b(t)$ of length $L$, their {\em Hamming correlation function} is defined as
\[
 H_{ab}(\tau) := \sum_{t=0}^{L-1} a(t) b({t-\tau}),
\]
where $\tau$ is the delay offset, and  $t - \tau$ is the difference in modulo-$L$ arithmetic. When $a(t)=b(t)$,  $H_{aa}(\tau)$ is called the {\em Hamming auto-correlation} of $a(t)$. When $a(t)$ and $b(t)$ are two different sequences,  $H_{ab}(\tau)$ is called the {\em Hamming cross-correlation} of $a(t)$ and $b(t)$.
\end{definition}

When the number of ones in a zero-one sequence is small in compare to the length, the sequence can be compactly represented by specifying the locations of ones.

\begin{definition}
Given a sequence $a(t)$ of length $L$, let the {\em characteristic set} of $a(t)$, denoted by $\set{I}_a$, be the subset of $\mathbb{Z}_L$ such that $t\in \set{I}_a$ if and only if $a(t)=1$.
\end{definition}
Shifting a sequence cyclically by $\tau$ is equivalent to translating its characteristic set by~$\tau$, with addition performed modulo~$L$.
Given a subset $\set{I}$ in $\mathbb{Z}_L$, and $\tau\in \mathbb{Z}_L$, we denote the translation of $\set{I}$ by $\tau$ as
\[ \set{I} + \tau := \{ x +\tau \in \mathbb{Z}_L :\, x\in \set{I} \}.
\]
Expressed in terms of the characteristic set, the Hamming correlation of sequences $a(t)$ and $b(t)$ equals
\[
 H_{ab}(\tau) = | \set{I}_a \cap (\set{I}_b + \tau) |
\]
for all $\tau = 0,1,\ldots, L-1$, where $|\set{S}|$ denote the size of a set~$\set{S}$.

\section{The CRT Construction} \label{sec:CRT}

\subsection{The CRT correspondence}

We shall construct sequences with length $L = pq$, where $p$ and $q$ are relatively prime integers. In subsequent discussions, we  take $p$ to be a prime number and $q$ an integer not divisible by~$p$.

%We apply a generalized version of the Chinese Remainder Theorem (CRT)~\cite[Theorem 2.5.1]{DingPeiSalomaa}, which says that given integers $a$, $b$, $c$ and $d$ such that
%\begin{equation}
%\gcd(a,p) = 1 = \gcd(c,q), \label{eq:ac}
%\end{equation}
%the system of modular equations
%\begin{align}
%  a x &\equiv b \bmod p \label{eq:modular1}\\
%  c x &\equiv d \bmod q \label{eq:modular2}
%\end{align}
%has exactly one solution modulo~$pq$. When $a=c=1$, it reduces to the ordinary CRT.

Define a mapping from $\mathbb{Z}_{pq}$ to the direct sum
\[
G_{p,q} := \mathbb{Z}_p \oplus \mathbb{Z}_q
\]
by
\[
 \Phi_{p,q}(x) := (x \bmod p, x \bmod q).
\]
By Chinese remainder theorem~\cite[p.34]{IrelandRosen}, $\Phi_{p,q}$ is a bijective map.
We will call $\Phi_{p,q}$ the {\em CRT correspondence}. When the values of $p$ and $q$ are clear from the context, we  write $\Phi(x)$ instead of $\Phi_{p,q}(x)$.

It can be easily checked that the CRT correspondence is a linear map, meaning that
\[\Phi_{p,q}(x+x') = \Phi_{p,q}(x) + \Phi_{p,q}(x').\]
Here, the addition on the left hand side is the addition in $\mathbb{Z}_{pq}$, and the addition on the right hand side is the addition in $G_{p,q}$.

A sequence $s(t)$ of length $L$ is associated with a $p\times q$ array $\mat{S}(t_1,t_2)$, where $t_1$ and $t_2$ range from 0 to $p-1$ and  0 to $q-1$ respectively, via the relation
\[ \mat{S}(t \bmod p, t \bmod q) = s(t). \]
The corresponding characteristic set of $s(t)$ in $\mathbb{Z}_L$ is mapped to a subset of $G_{p,q}$ via $\Phi_{p,q}$ as well.

Under the CRT correspondence, a one-dimensional cyclic shift of $s(t)$ by one time unit is equivalent to a column-wise shift followed by a row-wise shift. One-dimensional correlation properties can be translated to the two-dimensional ones.

For illustration, consider the time indices $(0,1,2,\ldots, 14)$ as an integer sequence of length $L=15$,  $p=3$ and $q=5$.  By the bijection $\Phi_{3,5}$, this integer sequence is mapped to
\[
\begin{bmatrix}
0  & 6 & 12 & 3  & 9\\
10 & 1 & 7 & 13  & 4 \\
5  & 11 & 2 & 8  & 14
\end{bmatrix}.
\]
When $(0,1,\ldots, 14)$ is cyclically shifted to $(14,0,1,2,\ldots, 13)$, the corresponding array is
\[
\begin{bmatrix}
14  & 5 & 11 & 2  & 8\\
9 & 0 & 6 & 12  & 3 \\
4  & 10 & 1 & 7  & 13
\end{bmatrix}.
\]
Note that the second array can be obtained from the first one by cyclically shift downward by one row and then to the right one column.

For $\vec{\tau} = (\tau_1, \tau_2) \in G_{p,q}$,
define the Hamming correlation between two 2-dimensional arrays $\mat{A}$ and $\mat{B}$ by
\[
 H_{\mat{AB}}(\vec{\tau}) := \sum_{\vec{t}} \mat{A}(\vec{t}) \mat{B}(\vec{t} - \vec{\tau}),
\]
with the subtraction calculated in $G_{p,q}$, and $\vec{t}$ running over all elements in~$G_{p,q}$. It is easy to check that this definition of Hamming correlation is compatible with the 1-dimensional analog, i.e.,
\[
  H_{\mat{AB}}(\tau_1, \tau_2) = H_{ab}(\tau)
\]
with $\tau_1 \equiv \tau \bmod p$ and $\tau_2 \equiv \tau \bmod q$.

The Hamming correlation between two 2-dimensional arrays $\mat{A}$ and $\mat{B}$ can be expressed in terms of their characteristic set $\set{I}_\mat{A}$ and $\set{I}_\mat{B}$ as
\[
 H_{\mat{AB}}(\vec{\tau}) = | \set{I}_{\mat{A}} \cap (\set{I}_{\mat{B}} + \vec{\tau}) |,
\]
where $\set{I}_{\mat{A}}$ denotes the characteristic set of $\mat{A}$,
\[
\set{I}_{\mat{A}} := \{ (i,j) \in \mathbb{Z}_p \oplus \mathbb{Z}_q:\,
\mat{A}(i,j) = 1\},
\]
and the addition in $\set{I}_{\mat{B}} + \vec{\tau}$ is performed in $\mathbb{Z}_p \oplus \mathbb{Z}_q$.

\subsection{The CRT sequences}

We will construct sequences by specifying characteristic sets in~$G_{p,q}$. Rows and columns of matrices and arrays  will be indexed by $\{0,1,\ldots, p-1\}$ and $\{0,1,\ldots, q-1\}$ respectively.

\begin{definition}  Let $p$ be prime and $q$ be an integer not divisible by~$p$. For $g  \in \mathbb{Z}_p$, we define
\begin{equation}
\set{I}_{g,p,q}:=  \{ (g,1)t \in G_{p,q}:\, 0\leq t < q\}.  \label{eq:CRT}
\end{equation}
The notation $(g,1)t$ simply means the sum of $t$ copies of $(g,1)$ in $G_{p,q}$,
\[
(g,1)t := \underbrace{(g,1) + (g,1) + \ldots + (g,1)}_t.
\]
We say that $\set{I}_{g,p,q}$ is generated by~$g$, and $g$ is the {\em generator} of $\set{I}_{g,p,q}$. If $p$ and $q$ are clear from the context, we write~$\set{I}_g$ instead of $\set{I}_{g,p,q}$. \label{def:array}
\end{definition}

The elements in $\set{I}_{g,p,q}$ form an arithmetic progression in $G_{p,q}$ with common difference $(g,1)$. We can re-write $\set{I}_{g,p,q}$ in the following form
\[
 \set{I}_{g,p,q} = \{ (gt, t) \in G_{p,q}:\, 0 \leq t < q\},
\]
with the product $gt$ in the first component reduced mod $p$.

{\em Remarks:}
For $g=0,1,\ldots, p-1$, the array with $\set{I}_{g,p,q}$ as the characteristic set contains exactly one ``1'' in each column.  For $g\neq 0$, each block of $p$ consecutive columns form a permutation matrix. (Recall that a permutation matrix is a square zero-one matrix with exactly one ``1'' in each row and each column.)

\begin{definition} (CRT sequences)  For $g=0,1,\ldots, p-1$, define the {\em CRT sequence generated by~$g$}, denoted by $s_{g,p,q}(t)$, be the binary sequence of length $L$ obtained by setting
\[
 s_{g,p,q}(t) = \begin{cases}
 1 & \text{ if } \Phi_{p,q}(t) \in\set{I}_{g,p,q} \\
 0 & \text{ otherwise}.
 \end{cases}
\]
We will write $s_g(t)$ if the values of $p$ and $q$ are understood. \label{def:CRT}
\end{definition}

%From the definition of CRT sequence, we see that for $t=0,1,\ldots, pq-1$,
%\[
%s_{g,p,q,\gamma}(t) = \begin{cases}
%1& \text{if }  (t\gamma \bmod q)g \equiv t \bmod p, \\
%0& \text{otherwise}.
%\end{cases}
%\]

%{\em Remarks:} The integer $\gamma$ appears only in Definition~\ref{def:CRT} but not in Definition~\ref{def:array}. For given $g$, $p$ and $q$, we can produce different sets of CRT sequences by different values of~$\gamma$. The integer~$\gamma$ can be interpreted as a decimation factor that permutes the columns of the arrays in Definition~\ref{def:array}.

\begin{example} $p=3$ and $q = 5$.
The three characteristic sets are: {\small
\begin{align*}
 \set{I}_{0,3,5} &= \{ (0,0), (0,1), (0,2), (0,3), (0,4)\}, \\
 \set{I}_{1,3,5} &= \{ (0,0), (1,1), (2,2), (0,3), (1,4)\}, \\
 \set{I}_{2,3,5} &= \{ (0,0), (2,1), (1,2), (0,3), (2,4)\}.
\end{align*} }
%{\small
%\begin{align*}
% \set{I}_{0,5,8} &= \{ (0,0), (0,1), (0,2), (0,3), (0,4), (0,5), (0,6), (0,7)\}, \\
% \set{I}_{1,5,8} &= \{ (0,0), (1,1), (2,2), (3,3), (4,4),
% (0,5), (1,6), (2,7)\}, \\
% \set{I}_{2,5,8} &= \{ (0,0), (2,1), (4,2), (1,3), (3,4),
% (0,5), (2,6), (4,7)\}, \\
% \set{I}_{3,5,8} &= \{ (0,0), (3,1), (1,2), (4,3), (2,4),
% (0,5), (3,6), (1,7)\}, \\
% \set{I}_{4,5,8} &= \{ (0,0), (4,1), (3,2), (2,3), (1,4),
% (0,5), (4,6), (3,7)\}.
%\end{align*} }
The three arrays are shown in Fig.~\ref{fig:exampleA1}. The top left corner in each array is the $(0,0)$-entry.
The generated CRT sequences are listed as follows
\begin{align*}
s_{0}(t):&\ 10010\,01001\,00100\\
s_{1}(t):&\ 11111\,00000\,00000\\
s_{2}(t):&\ 10010\,00100\,01001.
\end{align*}
%%%\begin{align*}
%s_{0}(t):&\ 10000\,10000\,10000\,10000\,10000\,10000\,10000\,10000\\
%s_{1}(t):&\ 10001\,00000\,01000\,10000\,00100\,01000\,00010\,00100\\
%s_{2}(t):&\ 11100\,00000\,00001\,10000\,00000\,00111\,00000\,00000\\
%s_{3}(t):&\ 10010\,01001\,00100\,10010\,01000\,00000\,00000\,00000\\
%s_{4}(t):&\ 10000\,00000\,00010\,10101\,00000\,00000\,00001\,01010.
%\end{align*}
\label{ex:A}
\end{example}

\begin{figure}
\[ \set{I}_{0,3,5}:\ { \small
\begin{array}{|ccccc|} \hline
1&1&1&1&1 \\
0&0&0&0&0 \\
0&0&0&0&0 \\ \hline
\end{array} }
\]

\[ \set{I}_{1,3,5}:\ {\small
\begin{array}{|ccccc|} \hline
1&0&0&1&0 \\
0&1&0&0&1 \\
0&0&1&0&0 \\ \hline
\end{array} }
\]

\[ \set{I}_{2,3,5}:\ {\small
\begin{array}{|ccccc|} \hline
1&0&0&1&0 \\
0&0&1&0&0 \\
0&1&0&0&1 \\ \hline
\end{array} }
\]
\caption{The three arrays associated with characteristic sets $\set{I}_{0,3,5}$, $\set{I}_{1,3,5}$, $\set{I}_{2,3,5}$.}
\label{fig:exampleA1}
\end{figure}

CRT sequences satisfy the following properties.

\begin{theorem}\

\begin{enumerate}
\item The Hamming weight of each CRT sequence is~$q$.

\item $s_0(t+p) = s_0(t)$, i.e., the least period of sequence $s_0(t)$ is~$p$.

\item For each residue $i \bmod q$, there is exactly one ``1'' at $t=i, i+q, i+2q, \ldots, i+(p-1)q$ in $s_g(t)$.
\end{enumerate}
\label{thm:I}
\end{theorem}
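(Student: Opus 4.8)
The plan is to reduce all three statements to the combinatorial structure of the characteristic set $\set{I}_{g,p,q} = \{(gt \bmod p,\, t) : 0 \le t < q\}$ inside $G_{p,q}$, and to pass between the one-dimensional sequence $s_g(t)$ and its associated array via the CRT correspondence $\Phi_{p,q}$. The governing observation is that the map $t \mapsto (gt \bmod p,\, t)$, $0 \le t < q$, is injective, since distinct values of $t$ already give distinct second coordinates. This immediately settles Part~1: the Hamming weight of $s_g$ equals $|\set{I}_{g,p,q}|$, and injectivity gives $|\set{I}_{g,p,q}| = q$.

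For Part~3, I would first identify the set of time indices $\{i, i+q, i+2q, \ldots, i+(p-1)q\}$ as precisely the residue class of $i$ modulo $q$ inside $\mathbb{Z}_{pq}$. Applying $\Phi_{p,q}$, every such index maps to an element whose second coordinate is the fixed value $i \bmod q$. The crucial point, which uses $\gcd(p,q)=1$, is that the first coordinates $(i+kq) \bmod p$ for $k = 0,1,\ldots,p-1$ run through all of $\mathbb{Z}_p$: incrementing $k$ shifts the first coordinate by $q \bmod p$, an element of additive order $p$ because $p$ does not divide $q$. Hence $\Phi_{p,q}$ carries these $p$ indices bijectively onto the entire column $\{(r,\, i \bmod q) : r \in \mathbb{Z}_p\}$. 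Finally, by construction $\set{I}_{g,p,q}$ meets this column in exactly one point, namely $(g\,(i \bmod q) \bmod p,\, i \bmod q)$, so precisely one of the $p$ indices carries a one, which is the assertion.

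For Part~2, I would specialize to $g = 0$, so that $\set{I}_{0,p,q} = \{(0,t) : 0 \le t < q\}$ is exactly the set of elements of $G_{p,q}$ with first coordinate $0$. Consequently $s_0(t) = 1$ if and only if $t \equiv 0 \pmod p$, which makes the identity $s_0(t+p) = s_0(t)$ transparent. To confirm that $p$ is the \emph{least} period, I would observe that the support of $s_0$ is $\{0,p,2p,\ldots,(q-1)p\}$, whose consecutive ones are spaced exactly $p$ apart; any period must therefore be a multiple of $p$, while $p$ itself is a period, so the least period equals~$p$.

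I expect no single step to present a serious obstacle; the only places demanding care are the coprimality argument in Part~3 (ensuring that the residue class modulo $q$ genuinely sweeps out a full column rather than merely landing somewhere within it) and the least-period claim in Part~2, where exhibiting one period does not suffice and one must rule out any smaller period using the spacing of the ones.
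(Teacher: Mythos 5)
Your proof is correct and follows essentially the same route as the paper: Part 1 via the size of the characteristic set and the weight-preserving CRT correspondence, Part 3 via the fact that the indices $i, i+q, \ldots, i+(p-1)q$ map onto a full column of the array, which meets $\set{I}_{g,p,q}$ in exactly one point, and Part 2 via identifying the support of $s_0$ with the multiples of $p$. You are in fact slightly more careful than the paper in two spots: you justify explicitly that the first coordinates $(i+kq) \bmod p$ sweep all of $\mathbb{Z}_p$ (using that $q \bmod p$ has additive order $p$, which the paper leaves implicit when it says the images ``go through the $i$th column''), and you rule out any period smaller than $p$ for $s_0$, whereas the paper only exhibits $p$ as a period.
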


\begin{proof}
As the characteristic set $\set{I}_{g,p,q}$  contains $q$ elements and the CRT correspondence $\Phi$ preserves Hamming weight, the first statement in the proposition follows immediately.

For $g=0$, the elements in characteristic set $\set{I}_{0,p,q}$ have the first coordinate identically equal to zero. For $j=0,1,\ldots, q-1$, The pre-image of $(0,j)$ under $\Phi$ is a multiple of $p$.
We have $s_0(t) = 1$ whenever $t$ is a multiple of~$p$.

For the last statement in the proposition, consider the image of $i+kq$, for $k=0,1,\ldots, p-1$, under the mapping $\Phi$,
\[
 \Phi(i+kq) = (i+kq \bmod p,  i \bmod q).
\]
The second coordinate is constant. If $k$ goes through $0$ to $p-1$,  $\Phi(i+kq)$ will go through the $i$th column in the array with characteristic set $\set{I}_{g,p,q}$. Because there is exactly one ``1'' in each column, there is exactly one ``1'' in positions $t=i, i+q, \ldots, i+(p-1)q$.
\end{proof}

%{\em Remarks:} The last part of Prop.~\ref{prop:I} relies on the property that in each column of $\set{I}_{g,p,q}$, there is exactly one ``1''. Also, we can see that for $g=1$, $2, \ldots, p-1$, there is exactly one ``1'' in $p$ consecutive horizontal entries in a row.

\section{Correlation Properties of CRT Sequences} \label{sec:cross-correlation}

We continue to use the notation that $p$ is a prime number, $q$ is an integer not divisible by $p$.
Let $L=pq$ denotes the sequence length. In this section, we determine the Hamming correlation of the CRT sequences.  In this section, we will use the notation $\bar{x}$ for the remainder of $x$ after division by~$p$. We distinguish the two additions mod $p$ and mod $q$ by $\oplus_p$ and $\oplus_q$, respectively.

For $g=0,1,\ldots, p-1$, the $p\times q$ array corresponding to sequence $s_{g}(t)$ is denoted by $\mat{A}_g$. Recall that the characteristic set of $\mat{A}_g$ is given in~\eqref{eq:CRT}, and $\mat{A}_g(i,j) = 1$ if and only if $i \equiv \bar{j}g \bmod p$, for $i=0,\ldots,p-1$ and $j=0,\dots, q-1$. For notational convenience, we let
\begin{equation}
  H_{gh}(\tau_1, \tau_2) := H_{\mat{A}_g \mat{A}_h}(\tau_1, \tau_2).
  \label{def:H2D}
\end{equation}

\subsection{Hamming Cross-correlation}

Let $g$ and $h$ be two distinct elements in $\mathbb{Z}_p$.
As argued in the previous section, the  cross-correlation of CRT sequences $s_g(t)$ and $s_h(t)$ is equivalent  to the cross-correlation of the associated $p\times q$ arrays $\mat{A}_g$ and $\mat{A}_h$, namely by counting the number of elements in
common to $\set{I}_g$ and $\set{I}_h+(\tau_1, \tau_2)$.
By definition, the translation $\set{I}_h+(\tau_1, \tau_2)$ of $\set{I}_h$ is
\begin{equation}
 \{ ((\bar{j}h)\oplus_p \tau_1, j \oplus_q \tau_2):\, j=0,1,\ldots, q-1\}.
 \label{eq:cross-correlation}
\end{equation}
 By a change of variable, \eqref{eq:cross-correlation} can be written as
\[
 \{ ((( \overline{j \ominus_q \tau_2})h) \oplus_p \tau_1, j):\, j=0,1,\ldots, q-1\}.
\]
where $\ominus_q$ denotes subtraction mod $q$. In the calculation of $( \overline{j \ominus_q \tau_2})h$,  $j \ominus_q \tau_2$ must be first reduced to an integer between 0 and $q-1$ before we reduce modulo~$p$. The intersection of $\set{I}_g$ and $\set{I}_h + (\tau_1, \tau_2)$ equals to the number of solutions to
\begin{equation}
 \bar{x}g  \equiv ((\overline{x \ominus_q \tau_2})h) \oplus_p  \tau_1 \bmod p. \label{eq:Ham}
\end{equation}
for $x = 0,1,\ldots, q-1$.

For nonzero $h$, we can divide both sides of \eqref{eq:Ham} by $h$ and re-write it as
\[
  \bar{x} (h^{-1} g) \equiv (\overline{x \ominus_q \tau_2}) \oplus_p (h^{-1} \tau_1) \bmod p.
\]
For each fixed $\tau_2$, as $\tau_1$ runs through $\mathbb{Z}_p$, $h^{-1} \tau_1$ also runs through $\mathbb{Z}_p$. Therefore, the distribution of Hamming cross-correlation between $s_g(t)$ and $s_h(t)$ is the same as the distribution of Hamming cross-correlation between $s_{g/h}(t)$ and $s_1(t)$. From this observation, it suffices to take $h=1$ without any loss of generality. We will consider the Hamming cross-correlation  between $s_g(t)$ and $s_1(t)$, for $g=0$ and $g=2,3,\ldots, p-1$.

The following simple lemma is used repeatedly in the derivation of Hamming cross-correlation properties.

\begin{lemma} For each $b \in\mathbb{Z}_p$, the number of solutions to
\[
\bar{x} \equiv b \bmod p
\]
for $x$ going through $d$ consecutive integers $c, c+1, \ldots c+d-1$, equals
\[
  \begin{cases}
   d/p  & \text{if $p$ divides $d$}, \\
  \lfloor d/p \rfloor + \delta & \text{otherwise,}
   \end{cases}
\]
where $\delta$ equals either 0 or~1. \label{lemma:simple}
\end{lemma}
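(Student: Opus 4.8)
The plan is to first strip away the notation. Since $\bar{x}$ denotes the residue of $x$ after division by $p$, the condition $\bar{x}\equiv b \bmod p$ is equivalent to $x\equiv b \bmod p$. Hence the quantity to be counted is simply the number of integers in the window $\{c,c+1,\ldots,c+d-1\}$ lying in the single residue class $b$ modulo $p$, and the lemma reduces to an elementary counting fact about residues in an interval of integers. The key observation I would record at the outset is that any run of $p$ consecutive integers contains exactly one representative of each residue class modulo $p$; this is immediate because $x\mapsto\bar{x}$ restricted to $p$ consecutive integers is a bijection onto $\mathbb{Z}_p$.

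Next I would write $d = p\lfloor d/p\rfloor + r$ with $0\le r<p$ and partition the window into $\lfloor d/p\rfloor$ consecutive blocks of length $p$, followed by a final block of $r$ consecutive integers. By the observation just made, each length-$p$ block contributes exactly one solution, so the full blocks together contribute $\lfloor d/p\rfloor$ solutions. When $p\mid d$ we have $r=0$, there is no leftover block, and the count is exactly $d/p$, giving the first case. When $p\nmid d$, so that $0<r<p$, the leftover block is a run of fewer than $p$ consecutive integers, which meets the residue class $b$ in at most one point (again by injectivity of $x\mapsto\bar{x}$ on a run of length at most $p$); it therefore contributes a term $\delta\in\{0,1\}$, and adding this to $\lfloor d/p\rfloor$ yields the second case.

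I do not expect a serious obstacle here, as the argument is a direct partition-and-count; the only point requiring a little care is the leftover block, where one must verify that a run of fewer than $p$ consecutive integers can contain at most one member of a given residue class. Note that the lemma only asserts $\delta\in\{0,1\}$ and does not pin down its exact value, so no further case analysis on the position of $c$ relative to $b$ is needed, which keeps the proof short.
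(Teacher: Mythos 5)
Your proposal is correct and matches the paper's own proof in essence: both split the window of $d$ consecutive integers into the full-period part of length $p\lfloor d/p\rfloor$ (contributing exactly $\lfloor d/p\rfloor$ hits on the class $b$) plus a leftover run of $r = d - p\lfloor d/p\rfloor < p$ consecutive integers whose residues are pairwise distinct, hence meeting the class $b$ at most once. Your finer decomposition into $\lfloor d/p\rfloor$ individual blocks of length $p$ is a cosmetic variant, not a different argument.
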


\begin{proof}
Suppose that $d$ is divisible by~$p$. If we reduce the integers $c$, $c+1, \ldots, c+d-1$ mod $p$, we have each element in $\mathbb{Z}_p$ repeated $d/p$ times. Hence, for each $b \in \mathbb{Z}_p$, there are exactly $d/p$ integers in $\{c, c+1, \ldots, c+d-1\}$ whose residue mod $p$ equal~$b$

Now suppose that $d$ is not divisible by~$p$, we divide the $d$ consecutive integers into two parts. Among the first $\lfloor d/p \rfloor p$ integers, for each $b \in \mathbb{Z}_p$, exactly $\lfloor d/p \rfloor$ equals $b$ mod $p$. The residues of the remaining $d - \lfloor d/p \rfloor p$ integers are distinct. The number of integers in $\{c, c+1, \ldots, c+d-1\}$ whose residues equal $b$ is either $\lfloor d/p \rfloor$ or $\lfloor d/p \rfloor + 1$.
\end{proof}

\medskip

We obtain the following theorem immediately from Lemma~\ref{lemma:simple}.

\begin{theorem}
The Hamming cross-correlation of $s_1(t)$ and $s_0(t)$ is equal to either $\lfloor q/p \rfloor$ or $\lfloor q/p \rfloor +1$. \label{thm:cross-correlation0}
\end{theorem}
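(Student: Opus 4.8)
The plan is to obtain the theorem as an immediate corollary of the solution-counting formula \eqref{eq:Ham} and Lemma~\ref{lemma:simple}. By the reduction carried out just before the statement, it suffices to treat $s_1(t)$ as the reference sequence, so the Hamming cross-correlation of $s_0(t)$ and $s_1(t)$ at offset $(\tau_1,\tau_2)$ is exactly the number of $x\in\{0,1,\ldots,q-1\}$ satisfying \eqref{eq:Ham} with $g=0$ and $h=1$. Substituting $g=0$ annihilates the left-hand side, while $h=1$ retains the term $\overline{x\ominus_q\tau_2}$, so \eqref{eq:Ham} collapses to
\[
\overline{x\ominus_q\tau_2}\equiv -\tau_1\bmod p.
\]
The task is thus to count how many of the $q$ indices $x=0,1,\ldots,q-1$ drive $\overline{x\ominus_q\tau_2}$ into the single residue class $-\tau_1$ modulo $p$.

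The key step is a change of variable that removes $\tau_2$ altogether. Writing $y:=x\ominus_q\tau_2$, as $x$ runs once over the $q$ consecutive integers $0,1,\ldots,q-1$ the value $y$ is merely a cyclic relabelling of this set, and hence also runs exactly once over $\{0,1,\ldots,q-1\}$. Because $0\le y<q$, we have $\bar{y}=y\bmod p$, so the count equals the number of $y\in\{0,1,\ldots,q-1\}$ with $y\equiv -\tau_1\bmod p$. This is precisely the setting of Lemma~\ref{lemma:simple} with $c=0$, $d=q$, and $b=-\tau_1\bmod p$.

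To finish I would invoke the standing hypothesis that $p$ does not divide $q$, which places us in the second (``otherwise'') branch of Lemma~\ref{lemma:simple}: the count is $\lfloor q/p\rfloor+\delta$ for some $\delta\in\{0,1\}$. Since the argument is uniform in $\tau_2$ and produces the same two candidate values for every target residue $b$ (that is, for every $\tau_1$), every entry of the cross-correlation function lies in $\{\lfloor q/p\rfloor,\lfloor q/p\rfloor+1\}$, which is the assertion.

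I do not expect a genuine obstacle, since the result is essentially a one-line consequence of Lemma~\ref{lemma:simple} once \eqref{eq:Ham} is specialized. The only points demanding care are specializing \eqref{eq:Ham} correctly for the pair $(g,h)=(0,1)$ so that $\tau_1$ becomes the target residue, and justifying the substitution $y=x\ominus_q\tau_2$; the latter is what makes the answer independent of the offset and lets the lemma be applied on the interval $\{0,\ldots,q-1\}$ of length exactly $q$.
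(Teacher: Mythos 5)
Your proposal is correct and takes essentially the same approach as the paper: both specialize the solution-counting equation \eqref{eq:Ham} to the pair $\{s_0(t),s_1(t)\}$ and then apply Lemma~\ref{lemma:simple} to the $d=q$ consecutive integers $0,1,\ldots,q-1$, using that $p$ does not divide $q$. The only difference is orientation --- the paper puts $(g,h)=(1,0)$, so the wrapped term is annihilated at once and \eqref{eq:Ham} collapses to $\bar{x}\equiv\tau_1\bmod p$ with no further work, whereas your choice $(g,h)=(0,1)$ retains that term and you remove it by the (valid) substitution $y=x\ominus_q\tau_2$, a bijection of $\{0,1,\ldots,q-1\}$; since $H_{ab}(\tau)=H_{ba}(-\tau)$, the two orientations attain the same set of correlation values, so your version establishes the stated claim just as well.
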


\begin{proof}
If we put $g=1$ and $h=0$ in~\eqref{eq:Ham}, we get
\[
  \bar{x} \equiv \tau_1 \bmod p.
\]
The number of integers in $\{0,1,\ldots, q-1\}$ that equal $\tau_1 \bmod p$ is either  $\lfloor q/p \rfloor$ or $\lfloor q/p \rfloor +1$
by Lemma~\ref{lemma:simple}.
\end{proof}

From now on, we  assume $q > p$, which is the case of practical interest.

\begin{theorem} Suppose $q>p$. Let $m$ be the quotient of $q$ divided by~$p$, i.e., $m=\lfloor q/p \rfloor$, and let $g\in\mathbb{Z}_p$, $0 \neq g\neq 1$. Let $\bar{q}$ be the residue of $q$ mod $p$, and
\begin{equation}b_g \equiv (g-1)^{-1} \bar{q} \bmod p.
\label{eq:bg}
\end{equation}
The Hamming cross-correlation between $s_g(t)$ and $s_1(t)$ is bounded between
\[
\begin{cases}
m-1 \text{ and } m+1 & \text{if } 0 < b_g < p-\bar{q}, \text{ or }\\
m \text{ and } m+2 & \text{if } p-\bar{q} < b_g < p.
\end{cases}
\]
\label{thm:cross-correlation1}
\end{theorem}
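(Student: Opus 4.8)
The plan is to work directly with the counting reformulation already set up in the excerpt: for $h=1$, the Hamming cross-correlation equals the number of $x\in\{0,1,\ldots,q-1\}$ satisfying \eqref{eq:Ham}, i.e. $\bar{x}g\equiv(\overline{x\ominus_q\tau_2})\oplus_p\tau_1\pmod p$, and I would then maximise and minimise this count over all $\tau_1\in\mathbb{Z}_p$ and $\tau_2\in\{0,\ldots,q-1\}$. The first move is to eliminate the mod-$q$ subtraction by splitting the index range at $\tau_2$. For $x\ge\tau_2$ we have $x\ominus_q\tau_2=x-\tau_2$, whereas for $x<\tau_2$ the wrap-around gives $x\ominus_q\tau_2=x-\tau_2+q$, contributing an extra $\bar{q}$ modulo $p$. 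Since $0\neq g\neq1$, the factor $g-1$ is invertible mod $p$, so on each sub-range the congruence collapses to a single residue class: on the high range $\{\tau_2,\ldots,q-1\}$ it reads $x\equiv u\pmod p$ with $u:=(g-1)^{-1}(\tau_1-\tau_2)$, and on the low range $\{0,\ldots,\tau_2-1\}$ it reads $x\equiv u+b_g\pmod p$, the two target residues differing by exactly $b_g$ from \eqref{eq:bg}. As $\tau_1$ sweeps $\mathbb{Z}_p$ with $\tau_2$ fixed, $u$ sweeps all of $\mathbb{Z}_p$.

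Next I would apply Lemma~\ref{lemma:simple} to each range. The lengths are $d_1=q-\tau_2$ and $d_2=\tau_2$ with $d_1+d_2=q$; writing $r_1=d_1\bmod p$ and $r_2=d_2\bmod p=\bar\tau_2$, the count on range $i$ equals $\lfloor d_i/p\rfloor$ plus an indicator that the relevant target residue lands in the ``excess window'' of $r_i$ consecutive residues pinpointed in the proof of the lemma, namely $S_L=[0,r_2)$ for the low range and $S_H=[r_2,\,r_2+r_1)$ (cyclically in $\mathbb{Z}_p$) for the high range. Because $d_1+d_2=q$, the floor parts sum to $m$ when $r_1+r_2=\bar q$ (equivalently $\bar\tau_2\le\bar q$) and to $m-1$ when $r_1+r_2=\bar q+p$ (equivalently $\bar\tau_2>\bar q$). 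Hence the cross-correlation equals this base value plus the sum of two $\{0,1\}$-indicators, and everything depends on $\tau_2$ only through the free parameter $r_2=\bar\tau_2$, which by $q>p$ can be made to take any value in $\{0,\ldots,p-1\}$. So the whole problem reduces to: for which offsets $b_g$ can the two indicators fire simultaneously (pushing a base-$m$ value up to $m+2$) or vanish simultaneously (pulling a base-$(m-1)$ value down to $m-1$)?

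The crux is this last combinatorial step, a question about two cyclic intervals whose targets are offset by $b_g$. In the $\bar\tau_2\le\bar q$ regime (base $m$, with $S_L=[0,r_2)$ and $S_H=[r_2,\bar q)$ disjoint), reaching $m+2$ forces $u\in[r_2,\bar q)$ while $u+b_g$ wraps into $[0,r_2)$; the wrap forces $u\ge p-b_g$, which is compatible with $u<\bar q$ precisely when $b_g>p-\bar q$, and is impossible otherwise, so the upper bound is $m+1$ exactly when $b_g<p-\bar q$. Symmetrically, in the $\bar\tau_2>\bar q$ regime (base $m-1$, with $S_H^{\,c}=[\bar q,r_2)$ and $S_L^{\,c}=[r_2,p)$), reaching $m-1$ forces $u\in[\bar q,r_2)$ with $u+b_g$ staying in $[r_2,p)$ without wrapping, which needs $u<p-b_g$; this is compatible with $u\ge\bar q$ precisely when $b_g<p-\bar q$, so the lower bound is $m$ exactly when $b_g>p-\bar q$. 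Since $g\neq0$ forces $b_g\neq p-\bar q$, the two cases $0<b_g<p-\bar q$ and $p-\bar q<b_g<p$ are exhaustive, giving the two stated intervals. The hard part, and where I expect the delicate work to lie, is the bookkeeping in this final step: correctly locating the excess windows from Lemma~\ref{lemma:simple}, tracking which way the mod-$p$ wrap goes in each regime, and checking that the extremal residue $u$ is realised by a genuine $\tau_2\in\{0,\ldots,q-1\}$ (not merely formally), while disposing of the degenerate corners where $r_1$ or $r_2$ is zero and an excess window is empty.
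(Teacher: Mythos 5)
Your proposal is correct and takes essentially the same route as the paper's own proof in Appendix~\ref{app:proof_of_cross1}: splitting the range of $x$ at $\tau_2$ and dividing by $g-1$ is exactly the reduction to~\eqref{eq:Ham4} in Lemma~\ref{lemma:H} (your $u$ is the paper's $a_g(\tau_1,\tau_2)$), your observation that everything depends on $\tau_2$ only through $\bar{\tau}_2$ is part 2 of that lemma, and the final counting rests on the same Lemma~\ref{lemma:simple} with the same case split $\bar{\tau}_2 \leq \bar{q}$ versus $\bar{\tau}_2 > \bar{q}$. Your only departure is in the endgame bookkeeping: where the paper excludes the extreme values $m+2$ and $m-1$ by subcase contradiction arguments, you pinpoint the excess windows of Lemma~\ref{lemma:simple} and recast the exclusion as a cyclic interval-intersection condition on $u$ and $u+b_g$ --- a sound refinement that essentially anticipates the pair-counting the paper itself carries out in Appendix~\ref{app:proof_of_cross2} to prove Theorem~\ref{thm:frequency}.
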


The proof of Theorem~\ref{thm:cross-correlation1} is in Appendix~\ref{app:proof_of_cross1}. We remark that in Theorem~\ref{thm:cross-correlation1}, $b_g$ is neither 0 nor $p-\bar{q}$.
The value of  $b_g$ is nonzero mod $p$ because both $g-1$ and $\bar{q}$ are nonzero. On the other hand, $b_g$ is equal to $p-\bar{q}$ only if $g=0$, which is excluded by assumption.

\medskip

With more careful book-keeping, we can determine exactly the frequencies of occurrence of Hamming cross-correlation.

\begin{definition}
Let the distribution of $H_{g1}(\tau_1, \tau_2)$ be
\begin{equation}
 N_{g}(j) := |\{ (\tau_1, \tau_2) \in G_{pq}:\, H_{g1}(\tau_1,\tau_2)=j\}|,
 \label{eq:N}
\end{equation}
for $j=0,1,\ldots, q$.
\end{definition}

\begin{theorem} With notation as in Theorem~\ref{thm:cross-correlation1}, we have

\begin{enumerate}
\item $N_0(m) = (p-\bar{q})q$, $N_0(m+1) = \bar{q}q$.

\item If $0< b_g <p-\bar{q}$, then
\begin{align}
N_g(m-1) &= \eta \label{eq:dist1}\\
N_g(m) &= q (p-\bar{q})- 2\eta \label{eq:dist2}\\
N_g(m+1) &= q \bar{q}+ \eta \label{eq:dist3},
\end{align}
where
\[
 \eta := m b_g (p-b_g-\bar{q}).
\]

\item If $p-\bar{q} < b_g <p$, then
\begin{align}
N_g(m) &= q (p-\bar{q})+ \theta \label{eq:dist4} \\
N_g(m+1) &= q \bar{q}- 2\theta , \label{eq:dist5}\\
N_g(m+2) &= \theta \label{eq:dist6}
\end{align}
where
\[
 \theta :=(m+1)(p-b_g)(\bar{q} +b_g - p).
\]
\end{enumerate}
\label{thm:frequency}
\end{theorem}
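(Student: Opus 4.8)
The plan is to count, for each fixed $\tau_2 \in \mathbb{Z}_q$, the number of values $\tau_1 \in \mathbb{Z}_p$ that yield each possible cross-correlation value, then sum over $\tau_2$. This refines the proof of Theorem~\ref{thm:cross-correlation1}, which presumably already establishes the \emph{range} of $H_{g1}$; here I need the exact multiplicities.

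Let me think through the structure.

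For the cross-correlation $H_{g1}(\tau_1,\tau_2)$, after the normalization (taking $h=1$), the count is the number of solutions $x \in \{0,\ldots,q-1\}$ to
$$\bar{x}g \equiv (\overline{x \ominus_q \tau_2}) \oplus_p \tau_1 \pmod p.$$

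The key simplification: the subtraction $x \ominus_q \tau_2$ inside the bar creates a "wraparound" that splits $\{0,\ldots,q-1\}$ into two intervals. For $x \geq \tau_2$ we have $x \ominus_q \tau_2 = x - \tau_2$, and for $x < \tau_2$ we have $x \ominus_q \tau_2 = x - \tau_2 + q$. So the equation becomes $\bar{x}(g-1) \equiv \tau_1 \pmod p$ on one part and $\bar{x}(g-1) \equiv \tau_1 - \bar{q} \pmod p$ on the other (roughly), where the $\bar{q}$ shift is exactly where $b_g = (g-1)^{-1}\bar{q}$ enters.

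Let me sketch the steps.

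**Step 1.** Fix $\tau_2$. Reduce the solution-counting equation to counting $x$ in two consecutive-integer intervals satisfying a congruence of the form $\bar{x}(g-1) \equiv c \pmod p$, where the two intervals have lengths $q - \tau_2$ and $\tau_2$, and the two constants $c$ differ by $\bar{q}$ (this is the source of $b_g$). Since $\gcd(g-1,p)=1$, the map $x \mapsto \bar{x}(g-1)$ is a bijection on residues, so Lemma~\ref{lemma:simple} applies: each interval contributes either $\lfloor \ell/p\rfloor$ or $\lfloor\ell/p\rfloor+1$ hits.

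**Step 2.** Determine, as a function of $\tau_2$ and $\tau_1$, the exact $\delta$-values (the $0$-or-$1$ correction in Lemma~\ref{lemma:simple}) for each of the two intervals. The correlation is the sum of the two contributions, so its value is $m-1$, $m$, $m+1$, or $m+2$ depending on the two $\delta$'s and on whether the interval lengths push the base count to $m-1$ or $m$. The sign condition $b_g \lessgtr p-\bar{q}$ governs which of the two regimes in Theorem~\ref{thm:cross-correlation1} we are in, hence which triple of values can occur.

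**Step 3.** For each fixed $\tau_2$, count how many $\tau_1 \in \mathbb{Z}_p$ give the extreme value (the $m-1$ case, or the $m+2$ case). As $\tau_1$ ranges over $\mathbb{Z}_p$, the constant $c$ shifts through all residues; the extreme value occurs precisely when both intervals simultaneously hit their low (or high) correction, and the number of such $\tau_1$ is a linear function of the "overlap" between two residue-windows of sizes determined by $\bar{q}$ and $b_g$. I expect this count to be something like $b_g(p-b_g-\bar{q})$ per suitable $\tau_2$, or $(p-b_g)(\bar{q}+b_g-p)$ in the other regime — matching the $\eta/m$ and $\theta/(m+1)$ forms.

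**Step 4.** Sum over $\tau_2$. The factor $m$ (resp.\ $m+1$) in $\eta$ (resp.\ $\theta$) should emerge because the per-$\tau_2$ extreme count is nonzero only for a range of $\tau_2$ whose size scales with $m$ (the number of complete length-$p$ blocks in the interval). Finally, the middle value $N_g(m)$ (resp.\ $N_g(m+1)$) is pinned down by the identity $\sum_j N_g(j) = pq$ together with the weight constraint $\sum_j j\,N_g(j) = q^2$ (since the total correlation summed over all shifts equals the product of Hamming weights, $q\cdot q$); these two linear relations force the "$-2\eta$" and "$-2\theta$" middle terms. The $g=0$ case (item~1) is the direct specialization, immediate from Theorem~\ref{thm:cross-correlation0} and Lemma~\ref{lemma:simple}.

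**The main obstacle** I anticipate is Step~3: getting the exact affine count of "bad" $\tau_1$'s as a function of $\tau_2$, with correct boundary/wraparound bookkeeping, so that the product structure $b_g(p-b_g-\bar{q})$ falls out cleanly rather than with off-by-$\pm 1$ corrections at the interval endpoints. The two global constraints in Step~4 are a useful safety net: once the extreme counts $N_g(m-1)$ and $N_g(m+2)$ are correct, the remaining entries are determined by linear algebra, so I would lean on $\sum_j N_g(j)=pq$ and $\sum_j j N_g(j)=q^2$ to both derive and cross-check the middle terms.
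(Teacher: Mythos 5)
Your proposal follows essentially the same route as the paper's proof in Appendix~\ref{app:proof_of_cross2}: the same reduction to a two-interval congruence with a $b_g$-shift (the paper's Lemma~\ref{lemma:H}), the same block decomposition in $\tau_2$ using period-$p$ invariance so that only the extreme value $N_g(m-1)$ (resp.\ $N_g(m+2)$) is counted directly, and the same use of the moment identities $\sum_j N_g(j)=pq$ and $\sum_j j\,N_g(j)=q^2$ (Lemma~\ref{lemma:Pursley}) to pin down the remaining entries, including your direct treatment of the $g=0$ case. The delicate count you flag in Step~3 is exactly what the paper executes via the change of variable $u=(g-1)^{-1}(\tau_1-\tau_2)\bmod p$, yielding $b_g(p-b_g-\bar{q})$ pairs per length-$p$ block of $\tau_2$ over $m$ blocks (resp.\ $(p-b_g)(\bar{q}+b_g-p)$ over $m+1$ blocks), in agreement with your expectation.
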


The proof is relegated to Appendix~\ref{app:proof_of_cross2}.

\begin{example}
For the CRT sequences in Example~\ref{ex:A} with parameters $p=3$ and $q=5$,
we tabulate the distribution of Hamming cross-correlation between $s_g(t)$, and $s_1(t)$, for $g \neq 1$,  as follows.
\begin{center}
\begin{tabular}{|c|c||c|c|c|} \hline
$g$ & $b_g$&  $N_g(1)$ & $N_g(2)$ & $N_g(3)$ \\ \hline \hline
0 & 1 &  5 & 10 & 0 \\ \hline
2 & 2 &  7 & 6 & 2\\ \hline
\end{tabular}
\end{center}
We note that if $g=2$, then $b_g$ is 2 mod 3. By  Theorem~\ref{thm:cross-correlation1}, $H_{21}(\tau)$ equals 1, 2, or 3. The Hamming cross-correlation is distributed according to the third part of Theorem~\ref{thm:frequency}.
\end{example}

When $q \equiv \pm 1 \bmod p$, the Hamming cross-correlations have three distinct values, for all pairs of distinct CRT sequences chosen from $s_1(t)$, $s_2(t), \ldots, s_{p-1}(t)$.

\begin{theorem} Suppose $q>p$.

\begin{enumerate}

\item Let $q$ be of the form $mp+1$. For $g=2,3,\ldots, p-1$, $H_{g1}(\tau)$ is between $m-1$ and $m+1$.

\item Let $q$ be of the form $mp+(p-1)$. For $g = 2,3,\ldots, p-1$, $H_{g1}(\tau)$ is between $m$ and $m+2$.
\end{enumerate}
\label{thm:cross-correlation2}
\end{theorem}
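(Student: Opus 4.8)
The plan is to read off both statements as direct corollaries of Theorem~\ref{thm:cross-correlation1}, since that theorem already pins down the range of $H_{g1}(\tau)$ in terms of where the quantity $b_g \equiv (g-1)^{-1}\bar{q} \bmod p$ sits relative to the threshold $p-\bar{q}$. So the entire task reduces to computing the residue $\bar{q}$ for each of the two special forms of $q$ and then checking that every admissible generator $g \in \{2,3,\ldots,p-1\}$ forces $b_g$ into one fixed regime, uniformly in $g$.

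First I would treat the case $q = mp+1$, where $m = \lfloor q/p\rfloor$ and the residue is $\bar{q}=1$, so the threshold is $p-\bar{q}=p-1$ and $b_g \equiv (g-1)^{-1}\bmod p$. Because $g\neq 0,1$, both $g-1$ and $\bar{q}$ are units mod~$p$, so $b_g\neq 0$; moreover $b_g=p-1\equiv -1$ would force $g-1\equiv -1$, i.e.\ $g\equiv 0$, which is excluded. Hence $b_g\in\{1,2,\ldots,p-2\}$, i.e.\ $0<b_g<p-\bar{q}$, and the first branch of Theorem~\ref{thm:cross-correlation1} gives $H_{g1}(\tau)\in\{m-1,m,m+1\}$.

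Next I would handle $q = mp+(p-1)$, where $\bar{q}=p-1$, the threshold is $p-\bar{q}=1$, and $b_g \equiv -(g-1)^{-1}\bmod p$. Again $b_g\neq 0$, and now $b_g=1$ would force $(g-1)^{-1}\equiv -1$, hence $g\equiv 0$, excluded as before. Therefore $b_g\in\{2,3,\ldots,p-1\}$, i.e.\ $p-\bar{q}<b_g<p$, and the second branch of Theorem~\ref{thm:cross-correlation1} gives $H_{g1}(\tau)\in\{m,m+1,m+2\}$. The uniformity over all $g$ is the real point: for each of these two residues, one of the two open intervals $(0,p-\bar{q})$ and $(p-\bar{q},p)$ contains no integer at all, so every admissible $b_g$ is forced into the surviving interval regardless of the choice of~$g$.

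There is no genuinely hard step here; the only thing to be careful about is the boundary behaviour. One must confirm that $b_g$ never equals the threshold $p-\bar{q}$ and never equals~$0$, since those are precisely the two endpoints that Theorem~\ref{thm:cross-correlation1} leaves uncovered. Both exclusions follow from $g\not\equiv 0,1 \bmod p$, and indeed they are already flagged in the remark following Theorem~\ref{thm:cross-correlation1}, so the argument is purely a matter of bookkeeping on units modulo~$p$.
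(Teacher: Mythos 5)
Your proposal is correct and follows essentially the same route as the paper: both reduce the theorem to Theorem~\ref{thm:cross-correlation1} by computing $\bar{q}$ for each form of $q$ and verifying that $b_g \equiv (g-1)^{-1}\bar{q} \bmod p$ avoids the excluded endpoints $0$ and $p-\bar{q}$, forcing it into a single regime for all $g \in \{2,\ldots,p-1\}$. Your direct check that $b_g = p-\bar{q}$ would force $g \equiv 0 \bmod p$ is just a rephrasing of the paper's observation that $-1$ is its own multiplicative inverse, so the range of $(g-1)^{-1}$ is $\mathbb{Z}_p \setminus \{0,-1\}$.
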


\begin{proof}
For the first part of the theorem, we have $\bar{q}$ equal to 1 mod~$p$.  So
\[
b_g \equiv (g-1)^{-1} \bar{q} \equiv (g-1)^{-1} \bmod p.
\]
When $g$ runs from $2$ to $p-1$, the value of $g-1$ runs over all elements in $\mathbb{Z}_p$ except  0 and $p-1$. For odd $p$, the multiplicative inverse of $p-1$ is itself. Hence the range of $(g-1)^{-1} \bmod p$ is $\mathbb{Z}_p\setminus \{0,-1\}$. We thus obtain
$0< b_g < p-1$. The result now follows from Theorem~\ref{thm:cross-correlation1}.

The second part can be proved similarly.
\end{proof}

\subsection{Hamming Auto-correlation}
The distribution of the Hamming auto-correlation can also be determined explicitly.

\begin{theorem}
\[
H_{00}(\tau) =
\begin{cases}
q & \text{if $\tau$ is a multiple of $p$}, \\
0 & \text{otherwise}.
\end{cases}
\]

For $g=1,2,\ldots, p-1$, and $k=0,1,\ldots, q-1$,
\[
H_{gg}(\tau) =
\begin{cases}
q-k & \text{if } \Phi(\tau) = \pm  (g,1)k, \\
0 & \text{otherwise}.
\end{cases}
\]
\label{thm:auto-correlation}
\end{theorem}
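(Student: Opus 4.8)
The plan is to compute the Hamming auto-correlation directly from the characteristic set description, using the intersection formula $H_{gg}(\vec{\tau}) = |\set{I}_g \cap (\set{I}_g + \vec{\tau})|$ established in Section~\ref{sec:CRT}. For $g=0$ the set $\set{I}_0 = \{(0,j):\, 0\le j<q\}$ is the entire top row. Translating by $\vec{\tau}=(\tau_1,\tau_2)$ shifts every element's first coordinate to $\tau_1$, so the translate lies in row $\tau_1$. The two sets therefore overlap completely when $\tau_1 = 0$ and are disjoint otherwise. Since $\tau_1 \equiv \tau \bmod p$, the condition $\tau_1 = 0$ is exactly ``$\tau$ is a multiple of $p$,'' giving $H_{00}(\tau)=q$ in that case and $0$ otherwise, which matches the claim.

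For the main case $g\neq 0$, I would use the arithmetic-progression form $\set{I}_g = \{(g,1)t:\, 0\le t<q\}$. Because $\Phi$ and hence the correlation is linear, and because $\set{I}_g$ is a run of $q$ consecutive multiples of the generator $(g,1)$, the key observation is that $\set{I}_g + \vec{\tau}$ intersects $\set{I}_g$ in a large overlap precisely when $\vec{\tau}$ itself is (up to sign) a multiple of the common difference $(g,1)$. Concretely, an element $(g,1)s$ of $\set{I}_g$ lies in the translate $\set{I}_g+\vec{\tau}$ iff $(g,1)s - \vec{\tau} = (g,1)t$ for some $t$ with $0\le t<q$, i.e. iff $\vec{\tau} = (g,1)(s-t)$ in $G_{p,q}$. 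So a nonzero overlap forces $\Phi(\tau)=\vec{\tau}$ to equal $(g,1)k$ for some residue $k$; write $k$ for the difference $s-t$ taken in the appropriate range. When this holds, the overlap consists of those $s\in\{0,\ldots,q-1\}$ for which the matching index $t=s-k$ also lies in $\{0,\ldots,q-1\}$.

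The heart of the argument is then a counting step: for $\vec{\tau}=(g,1)k$ with $0\le k<q$, the number of valid $s$ is the number of integers in $\{0,\ldots,q-1\}$ whose shift by $-k$ stays in the same range, which is $q-k$. One must be careful that the indices $s,t$ range over the length-$q$ window while the arithmetic $(g,1)t$ is read in $G_{p,q} = \mathbb{Z}_p\oplus\mathbb{Z}_q$; here the second coordinate $t\bmod q$ is what truly indexes the elements of $\set{I}_g$ (each column has a unique ``1'' by Theorem~\ref{thm:I}), so the matching of $s$ and $t$ is genuinely a matching of second coordinates shifted by $k\bmod q$, and no wraparound ambiguity inflates the count. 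The $\pm$ in the statement accounts for the two ways $\vec{\tau}$ can be a multiple of $(g,1)$: if $\Phi(\tau)=-(g,1)k$ one simply swaps the roles of the two copies of $\set{I}_g$, again yielding $q-k$. Finally I would confirm that for $k$ and $-k$ giving distinct nonzero $\vec{\tau}$ the formula is consistent, and that $\vec{\tau}$ not of the form $\pm(g,1)k$ forces disjointness, giving the value $0$.

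The main obstacle I anticipate is the bookkeeping in the counting step: ensuring that the condition ``$t=s-k$ lies in $\{0,\ldots,q-1\}$'' is handled without off-by-one errors and that I correctly separate the $+k$ and $-k$ cases so that the range $0\le k<q$ produces each auto-correlation value exactly once. The linearity of $\Phi$ and the single-``1''-per-column structure from Theorem~\ref{thm:I} make the algebra clean, so I expect the only real care to be in translating between the integer window $\{0,\ldots,q-1\}$ and the second-coordinate group $\mathbb{Z}_q$.
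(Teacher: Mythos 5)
Your proposal is correct and takes essentially the same route as the paper: the paper likewise views $\set{I}_g$ as an arithmetic progression with common difference $(g,1)$, uses that $t \mapsto (g,1)t$ is a bijection from $\mathbb{Z}_{pq}$ onto $G_{p,q}$ (so the length-$q$ window meets its shift in exactly $q-|k|$ points precisely when the shift is $\pm(g,1)k$, and is disjoint otherwise), which is the same injectivity/no-wraparound fact you verify coordinate-wise via the unique ``1'' per column and $q \not\equiv 0 \bmod p$. Your handling of $g=0$ by translating the whole top row is a direct equivalent of the paper's appeal to the least period $p$ of $s_0(t)$ from Theorem~\ref{thm:I}.
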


\begin{proof}
The first statement follows by the second part of Theorem~\ref{thm:I} that $s_0(t)$  has least period~$p$.

For the second statement, we note that $(g,1)t$ runs through the whole group $G_{p,q}$ as $t$ runs from 0 to $pq-1$. This follows from CRT, because for any $x$ and $y$,
\begin{align*}
gt \equiv x &\mod p \\
t  \equiv y &\mod q
\end{align*}
has one and only one solution mod $pq$. Via a bijective mapping
$t \mapsto (g,1)t$, the set $\set{I}_g$ can be regarded as an arithmetic progression $\{0,1,2,\ldots, q-1\}$ in $\mathbb{Z}_{pq}$.
The intersection $\set{I}_g \cap (\set{I}_g + (i,j))$ is a subset of $\set{I}_g$ that is also an arithmetic progression. The intersection $\set{I}_g \cap (\set{I}_g + (i,j))$ is nonempty if and only $\pm (i,j) \in \set{I}_g$. If $|\set{I}_g \cap (\set{I}_g + (i,j))|$  is nonzero, then $(i,j)$ is equal to $\pm(g,1)k$ for some $k=0,1,\ldots, q-1$, and
\[
| \set{I}_g \cap (\set{I}_g + k(g, 1))| = q-|k|.
\]
\end{proof}

\section{Application to the Collision Channel without Feedback}
\label{sec:throughput}

Consider a time-slotted collision channel with $K$ transmitters and one receiver as described in the introductory section in this paper, and assume that there is no cooperation among the users, and no feedback from the receiver. We regard a packet as a $Q$-ary alphabet chosen from the alphabet set~$\Omega = \{1,2,\ldots, Q\}$. The channel output at slot $t$ equals
\[
\begin{cases}
0 & \text{if no user transmits at slot $t$}, \\
* & \text{if two or more users transmit at slot $t$}, \\
x & \text{if exactly one user transmits a packet with content $x$}.
\end{cases}
\]

Each user is assigned a deterministic and periodic zero-one sequence, called protocol sequence~\cite{Massey85}.
For $i=1,2,\ldots, K$, the protocol sequence associated with user~$i$ is denoted by $s_i(t)$, which is a periodic binary sequence of period~$L$.
As there is no feedback from the receiver and no cooperation among the users, each user has a relative delay offset $\tau$, which is random but remains fixed throughout the communication session. User $i$ sends a packet at slot $t$ if $s_i(t+\tau) = 1$, and remains silent if $s_i(t+\tau) = 0$.

We consider two different factors in the system model: (i) systems with packet header or without packet header, and (ii) systems with permanent user activity or partial user activity.
For system without packet header, the receiver has to identify the set of active users (user identification), and determine the sender of each successfully received packet (packet identification). If packet header is present, information such as user identity is obtained readily by looking up the header, and there is no user and packet identification problem.

For system with partial user activity, the number of active users $M$ is smaller than the total number of potential users $T$. When a user changes from inactive to active, packets are sent according to the protocol sequence assigned. It is assumed that after all the data are sent, the user must remain inactive for at least $L$ slots before becoming active again. On the other hand, if permanent user activity is assumed, the number of active users $M$ is equal to the total number of users, and each user sends packets periodically according to the protocol sequence.

\subsection{Permanent User Activity with Header}
For systems with a packet header, the users are simply identified by packet headers.
We calculate the achievable throughput when CRT sequences are used. Let the number of users be $M$. We pick $M$ sequences from the CRT sequences generated with parameters $p$ and~$q$. To take advantage of the three-valued property mentioned in Theorem~\ref{thm:cross-correlation2} we pick a $q$ which is $-1$ mod~$p$.

\begin{construction}
Let $p$ be a prime number and $k$ a positive integer. We choose $q= kp-1$.  The CRT construction in Definition~\ref{def:CRT} yields $p$ protocol sequences of length $L=kp^2-p$ and weight $w=kp-1$.  We pick any $M$ sequences from this set of CRT sequences.
\label{construction:b}
\end{construction}

Since the Hamming weight is $kp-1$ and the pairwise Hamming cross-correlation is at most $k+1$ by Theorem~\ref{thm:cross-correlation2}, the total number of successful packets (summed over all $M$ users) is lower bounded by
\begin{equation}
M  [kp-1 - (M-1)(k+1)].  \label{eq:lower_bound}
\end{equation}
By completing square, we can write the above as
\begin{equation}
 (k+1)\left[ -\Big( M - \frac{k(p+1)}{2(k+1)} \Big)^2 + \Big(  \frac{k(p+1)}{2(k+1)} \Big)^2 \right]. \label{eq:lower_bound1}
\end{equation}

Consider~\eqref{eq:lower_bound1} as a function of $M$. The maximum value is obtained when
\begin{equation}
 M^* = \frac{k(p+1)}{2(k+1)} , \label{eq:best_M}
\end{equation}
and the  maximal value is
\begin{equation}
 \frac{(p+1)^2}{4} \cdot \frac{k^2}{k+1}. \label{eq:opt_throughput}
\end{equation}
After dividing the above by the period, we obtain the following lower bound on throughput.

\begin{theorem}
Let
\[
\tilde{M}=  \Big\lfloor \frac{k(p+1)}{2(k+1)} \Big\rfloor
\]
in Construction~\ref{construction:b}.
The system throughput is lower bounded by
\begin{equation}
\frac{1}{4}\cdot \frac{(p+1)^2}{kp^2-p} \cdot\frac{k^2}{k+1} - \frac{k+1}{kp^2-p} \label{eq:throughput}
\end{equation}
\label{thm:throughput1}
\end{theorem}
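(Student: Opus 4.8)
The plan is to combine the uniform cross-correlation ceiling from Theorem~\ref{thm:cross-correlation2} with a short optimization over the integer number of users. The quantity to be bounded is the worst-case system throughput, i.e., the minimum over all delay offsets of the fraction of slots carrying a successfully received packet, so the argument must produce a bound that holds uniformly in the offsets.

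First I would establish that the count in~\eqref{eq:lower_bound} is a valid lower bound on the number of successful packets \emph{uniformly over all delay offsets}. Fix any offset assignment. Each of the $M$ chosen sequences has weight $w = kp-1$, so user $i$ transmits in exactly $w$ slots. A transmission of user $i$ fails only if some other user transmits in the same slot; the number of slots in which user $i$ collides with a fixed user $j$ equals the Hamming cross-correlation $H_{ij}$ at the relevant offset, which Construction~\ref{construction:b} (with $q = kp-1$, so $\bar q = p-1$ and $m = k-1$) forces to be at most $m+2 = k+1$ by the second part of Theorem~\ref{thm:cross-correlation2}. A union bound over the $M-1$ other users then shows that at least $w - (M-1)(k+1)$ of user $i$'s transmissions succeed, and summing over $i$ yields~\eqref{eq:lower_bound} with no double counting, since each successful slot has a unique transmitter. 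The crucial point is that the ceiling in Theorem~\ref{thm:cross-correlation2} is independent of the offset, so this count is a genuine worst-case bound.

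Next I would treat~\eqref{eq:lower_bound} as the downward parabola $f(M) := M[\,kp-1-(M-1)(k+1)\,]$ and put it in the vertex form~\eqref{eq:lower_bound1}, namely $f(M) = -(k+1)(M-M^*)^2 + f(M^*)$ with $M^* = k(p+1)/(2(k+1))$ from~\eqref{eq:best_M} and $f(M^*)$ equal to~\eqref{eq:opt_throughput}. Since the number of users must be an integer, I evaluate at $\tilde M = \lfloor M^*\rfloor$. Because $|\tilde M - M^*| < 1$, the quadratic loss satisfies $(k+1)(\tilde M - M^*)^2 < k+1$, hence $f(\tilde M) > f(M^*) - (k+1)$. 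Dividing this inequality by the period $L = kp^2 - p$ gives exactly~\eqref{eq:throughput}, since $f(M^*)/L$ is the leading term and $(k+1)/L$ is the subtracted correction.

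The only real subtlety, and the step I would treat most carefully, is this rounding analysis: one must verify that replacing the real maximizer $M^*$ by the integer $\tilde M = \lfloor M^*\rfloor$ costs no more than the additive constant $k+1$ in the packet count, which is precisely what the vertex form buys through the estimate $(\tilde M - M^*)^2 < 1$. Everything else—the weight $w=kp-1$, the cross-correlation ceiling $k+1$, and the completion of the square—is supplied directly by Construction~\ref{construction:b}, Theorem~\ref{thm:cross-correlation2}, and~\eqref{eq:lower_bound1}, so no further estimates are required.
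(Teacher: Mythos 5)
Your proposal is correct and takes essentially the same route as the paper: the union bound with the offset-independent cross-correlation ceiling $k+1$ from Theorem~\ref{thm:cross-correlation2} (with $q=kp-1$, so $m=k-1$) yields~\eqref{eq:lower_bound}, and the vertex form~\eqref{eq:lower_bound1} shows that rounding to $\tilde{M}=\lfloor M^*\rfloor$ costs at most $(k+1)(\tilde{M}-M^*)^2\leq k+1$ packets, which after dividing by $L=kp^2-p$ gives~\eqref{eq:throughput}. The only difference is presentational: the paper derives~\eqref{eq:lower_bound}--\eqref{eq:opt_throughput} in the text preceding the theorem and confines the formal proof to the rounding estimate, which you spell out in the same way.
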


\begin{proof}
Consider the expression in~\eqref{eq:lower_bound1} as a function of~$M$, and denote it by $f(M)$. The coefficient of $M^2$ is $-(k+1)$. If we take $M = \tilde{M}$, we deviate from the optimal value of $M^*$ by at most~1. Therefore,
\[
 f(M^*) - f(\tilde{M}) = (k+1)(\tilde{M} - M^*)^2 \leq k+1.
\]
If we put $M=  \lfloor \frac{k(p+1)}{2(k+1)} \rfloor$ in~\eqref{eq:lower_bound}, we drop from the optimal value in~\eqref{eq:opt_throughput} by at most $k+1$. Hence the total number of successful packets, divided by the period, is lower bounded by~\eqref{eq:throughput}.
\end{proof}

Theorem~\ref{thm:throughput1} provides a lower bound on the worst-case throughput. The mean system throughput, averaged over all delay offsets, is however much higher than the lower bound.

\begin{example}We consider an example with $M=19$ users, using CRT sequences with $p=37$. The throughput is plotted against sequence length with increasing $k$, while keeping the duty factor fixed at $1/p$. We compare the lower bound in~\eqref{eq:throughput} of worst-case throughput with the average throughput obtained by simulation in Fig.~\ref{fig:simulate}. For each~$k$, 100000 delay offset combinations are randomly generated. Beside the mean throughput, the maximum and minimum throughput obtained among these 100000 delay offset combinations are also plotted.  We can observe that the mean system throughput is about 0.31, in accordance with the theoretical value
\[
M \cdot \frac{1}{p} \Big( 1 - \frac{1}{p} \Big)^{M-1} =  \frac{19}{37}  \Big( 1 - \frac{1}{37} \Big)^{18} = 0.314.
\]
When the length increases, the lower bound approaches $0.25(p+1)^2/p^2 - 1/p^2= 0.263$.
\end{example}

\begin{figure}
\begin{center}
  \includegraphics[width=3.8in]{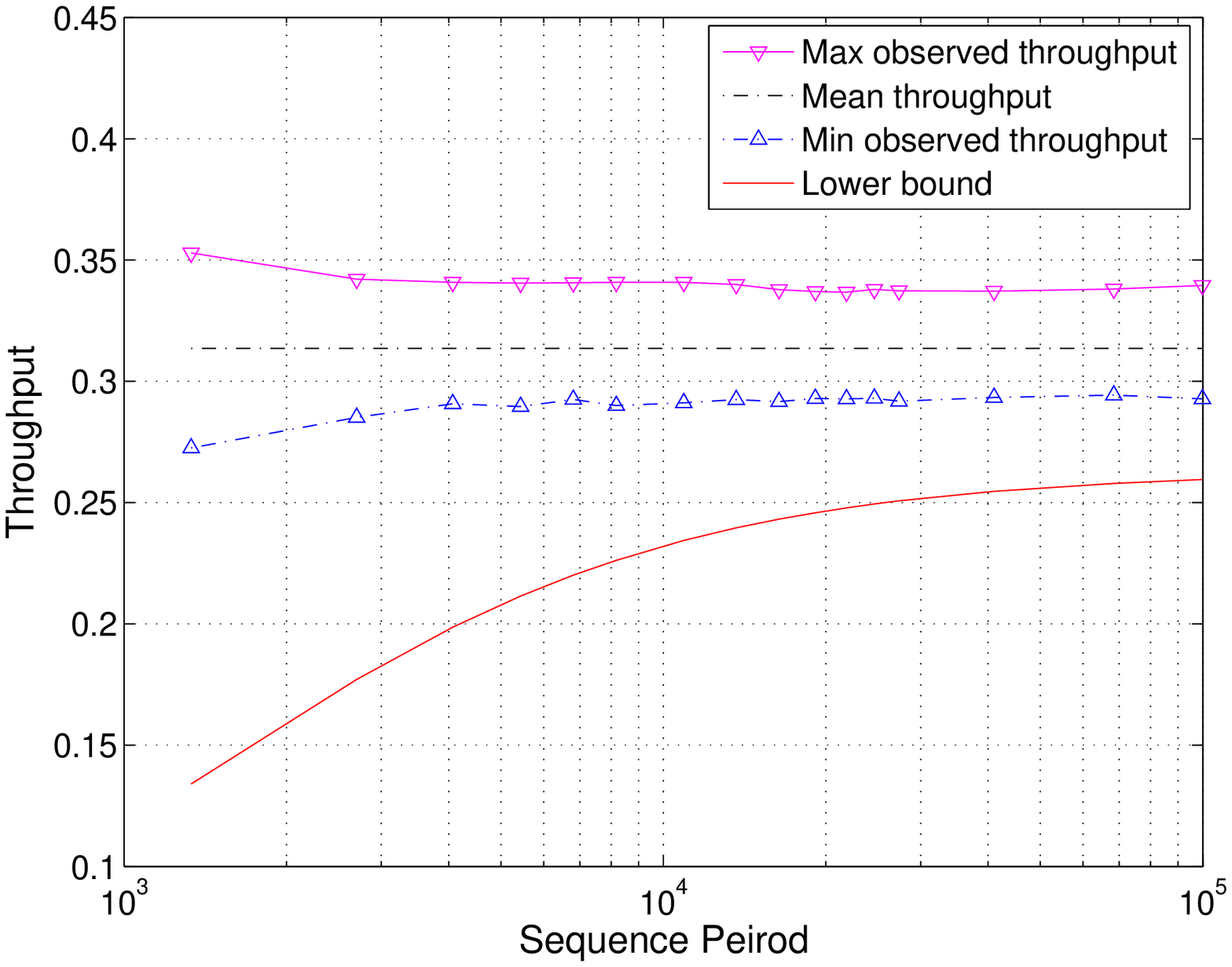}
\end{center}
\caption{System Throughput of CRT Sequences for 19 Users, $p=37$.}
\label{fig:simulate}
\end{figure}

Asymptotically, if we increase $k$ and $p$ in such a way that $k$ increases much slower than  $p$, we obtain a lower bound on the system throughput of $1/4$.

\begin{theorem} For arbitrarily small $\epsilon, \delta > 0$, there exists an infinite class of protocol sequence sets of length $O(M^{2+\epsilon})$, where $M$ is the number of sequences, with system throughput lower bounded by $0.25-\delta$ for all sufficiently large~$M$.  \label{thm:onequater}
\end{theorem}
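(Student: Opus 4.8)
The plan is to read the asymptotics directly off the throughput lower bound \eqref{eq:throughput} already established in Theorem~\ref{thm:throughput1}, and to let the two free parameters of Construction~\ref{construction:b}, the prime $p$ and the integer $k$, grow together with $k$ increasing much more slowly than $p$, exactly as anticipated in the remark preceding the statement. Ranging $p$ over the infinitely many primes produces the required infinite class, with the number of sequences $M=\tilde M$ determined by $p$ and $k$ through \eqref{eq:best_M}.

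First I would isolate the dominant term of \eqref{eq:throughput}. Writing $kp^2-p=p(kp-1)$, the leading factor is $\frac14\cdot\frac{(p+1)^2}{p(kp-1)}\cdot\frac{k^2}{k+1}$, and since $\frac{(p+1)^2}{p(kp-1)}\to\frac1k$ as $p\to\infty$ (uniformly enough that this persists even while $k$ itself grows), this factor tends to $\frac14\cdot\frac{k}{k+1}$. The subtracted error term $\frac{k+1}{p(kp-1)}$ is of order $1/p^2$ and tends to $0$. Hence the whole bound converges to $\frac14\cdot\frac{k}{k+1}$, which in turn tends to $\frac14$ as $k\to\infty$. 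Keeping the correction terms explicit, I would conclude that for every $\delta>0$ the throughput exceeds $0.25-\delta$ once $k$ and $p$ are large enough; note that a fixed $k\ge\lceil 1/(4\delta)\rceil$ already suffices, so this half of the argument is undemanding.

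Next I would convert the length $L=kp^2-p$ into a bound in terms of $M$. From \eqref{eq:best_M} we have $\tilde M=\big\lfloor\frac{k(p+1)}{2(k+1)}\big\rfloor$, so $\tilde M=\Theta(p)$, and as $k\to\infty$ more precisely $\tilde M\sim p/2$; inverting gives $p=\Theta(M)$. Substituting yields $L=\Theta(kM^2)$. To meet the target $L=O(M^{2+\epsilon})$ it therefore suffices to keep $k=O(M^{\epsilon})=O(p^{\epsilon})$. Choosing, say, $k=\lfloor p^{\epsilon/2}\rfloor$ simultaneously forces $k\to\infty$ (so the throughput tends to $\frac14$ and eventually beats $0.25-\delta$) and $k=O(p^{\epsilon/2})$ (so $L=O(M^{2+\epsilon})$). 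One should also confirm $\tilde M\le p$, which holds since $\tilde M\sim p/2$, so that $M$ distinct CRT sequences are genuinely available in Construction~\ref{construction:b}.

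I expect the only real care to lie in the bookkeeping of the two competing constraints: the throughput estimate pushes $k$ upward while the length estimate pushes $k$ downward, and the floor in $\tilde M$ together with the $O(1/p)$ corrections must be tracked closely enough to pin down $p=\Theta(M)$ with explicit constants. This balancing is precisely what the $M^{\epsilon}$ slack in the exponent buys: holding $k$ fixed would instead give the stronger length $O(M^2)$ but a throughput bounded away from $\frac14$ by $\frac{1}{4(k+1)}$, the very gap that a slowly growing $k$ closes. The remaining ingredients, infinitude of primes and the monotonicity of $M$ in $p$, are standard.
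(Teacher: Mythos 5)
Your proposal is correct and follows essentially the same route as the paper: both invoke the bound of Theorem~\ref{thm:throughput1}, let $k$ grow slowly relative to $p$ so the bound tends to $1/4$, use $\tilde M \sim p/2$ to get $L \sim 4kM^2$, and range over primes $p$ for the infinite class. The only difference is the choice of the slowly growing parameter---you take $k=\lfloor p^{\epsilon/2}\rfloor$ where the paper takes $k=\log p$ (which gives the marginally sharper $L\sim 4M^2\log M$)---and your added observations (fixed $k$ suffices for a given $\delta$, and $\tilde M\le p$) are correct refinements of the same argument.
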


\begin{proof}
We choose $k=\log(p)$. The lower bound of system throughput in Theorem~\ref{thm:throughput1} tends to $1/4$, with
$$L \sim kp^2 \sim k(2M)^2 \sim 4M^2 \log M.$$ We can find an integer $M_0$ sufficiently large such that $4M_0^2 \log M_0$ is less than $M_0^{2+\epsilon}$. The theorem then holds for all $M \geq M_0$.
\end{proof}

\subsection{Partial User Activity without Header}

In this application, we use a modified version of the CRT correspondence. Let $\gamma$ be the multiplicative inverse of $p$ in $\mathbb{Z}_q$, i.e., $\gamma p \equiv 1 \bmod q$. Since $p$ and $q$ are relatively prime, such inverse exists. Define the mapping $\Phi_{p,q}': \mathbb{Z}_{pq} \rightarrow G_{p,q}$ by
\[
 \Phi_{p,q}'(x) := (x \bmod p, \gamma x \bmod q).
\]
It can also be shown that $\Phi'_{p,q}(x)$ is a group isomorphism between $\mathbb{Z}_{pq}$ and $G_{p,q}$.

Because of its unfavorable Hamming auto-correlation property, the sequence generated by $g=0$ is not used in this application. Given a prime number $p$, the system supports $p-1$ potential users. We label the users from 1 to $p-1$.

\begin{construction}
 Let $p$ be a prime number, and $q$ be an integer relatively prime to $p$. Choose $\gamma$ as described above. For $g=1,\ldots, p-1$, construct the CRT sequence $s_g(t)$ of length $pq$ by setting
\[
 s_g(t) = \begin{cases}
  1 & \text{ if } \Phi'_{p,q}(t) \in \set{I}_{g,p,q} \\
  0 & \text{ otherwise}.
 \end{cases}
\]
 \label{construction:async}
\end{construction}

The sequence generated by $g$ is assigned to user~$g$. We have used $\Phi'_{p,q}(x)$ in the above construction, instead of $\Phi_{p,q}(x)$ as in Construction~\ref{construction:b}. The cross-correlation properties of the resulting modified CRT sequences are exactly the same as in the previous section, because the proof in the previous section is essentially about two-dimensional Hamming cross-correlation. If $\Phi$ is replaced by $\Phi'$, all theorems about Hamming cross-correlation and auto-correlation also hold.

\medskip

{\bf Sequence synchronization}

For systems without a header, we need to identify the sender of a successful packet. Also, as users may come and go, we also need to determine when a user becomes active. We show that these two tasks can be achieved by merely observing the channel activity, that is whether a time slot is idle, containing a collision or a successful transmission, without looking into the packet contents.

\begin{definition}
For each time index $t$, let $c(t)$ be 0 if it is an idle slot, 1 if exactly one user transmits a packet, or * if two or more users transmit. We  call $c(t)$ the {\em channel-activity signal}. We say that $c(t)$ is {\em matched} to $s_i(t)$ at time $t_0$ if
$\forall t = 0,1,\ldots, L-1,$ $s_i(t) = 1 \Rightarrow c(t_0+t) = 1 \text{ or } *$.
\end{definition}

The receiver stores the channel-activity signal in a first-in-first-out queue.

We want to determine (a) the time when a user becomes active, and (b)  the time when an active user change status from active to idle. The receiver keeps track of the active users by maintaining $p-1$ Boolean variables $active(i)$, for $i=1,2,\ldots, p-1$. The value of $active(i)$ is set to $\tt{FALSE}$ if the user is idle, and $\tt{TRUE}$ if the user is active. To determine whether user $i$ becomes active at time $t_0$, the receiver wait until time $t_0+L$. At that time the channel-activity signal $c(t_0)$, $c(t_0+1), \ldots, c(t_0+L-1)$ are available. The receiver then checks whether $c(t)$ is matched to $s_i(t)$ at time~$t_0$. If there is a match, we declare that user $i$ is active and the starting time of user $i$ is stored in variable $start(i)$.
We summarizes the procedure in Algorithm~\ref{alg:active} below.

\begin{algorithm}
\begin{algorithmic}[1]
    \FOR{$i=1,2,\ldots, p-1$}
    \IF{$active(i) = \tt{FALSE}$}
       \IF{$c(t)$ is matched to $s_i(t)$ at $t_0$}
          \STATE $active(i) \leftarrow \tt{TRUE}$
           \STATE $start(i) \leftarrow t_0$
       \ENDIF
     \ELSE
        \IF{$t_0 - start(i)$ is a multiple of $L$ $\mathbf{and}$ $c(t)$ is not matched to $s_i(t)$ at time $t_0$}
             \STATE  $active(i) \leftarrow \tt{FALSE}$
             \STATE  $start(i) \leftarrow 0$
        \ENDIF
     \ENDIF
    \ENDFOR
\end{algorithmic}
\caption{Determining when a user becomes active or inactive at time $t_0$.}
\label{alg:active}
\end{algorithm}

Under some conditions on the number of active users and period length, we can show that the above algorithm is able to identify the starting time of each user.

\begin{theorem} Let $p$ be a prime number, and $q > 2p^2$ be an integer relatively prime to $p$.
Construct $p-1$ CRT sequences by Construction~\ref{construction:async} and assign them to users 1 to $p-1$. Suppose that at most $(p+1)/2$ users are active at the same time. Then Algorithm~\ref{alg:active} can successfully identify the active users and determine their starting time.
\label{thm:synch}
\end{theorem}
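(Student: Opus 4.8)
The plan is to recast the ``matched'' condition as a set-covering statement and then bound the covering capacity of a small family of CRT sequences. Write $\set{A}\subseteq\{1,\dots,p-1\}$ for the set of currently active users, with $|\set{A}|\le (p+1)/2$, and let user $j\in\set{A}$ have offset $\tau_j$, so that the slots carrying channel activity ($1$ or $*$) form the union $\set{U}:=\bigcup_{j\in\set{A}}(\set{I}_j+\tau_j)$. By definition $c(t)$ is matched to $s_i(t)$ at $t_0$ exactly when $(\set{I}_i+t_0)\subseteq\set{U}$. The completeness half is immediate: if $i\in\set{A}$ then $\set{I}_i+\tau_i\subseteq\set{U}$, so a genuine user is always detected at its true offset. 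Everything therefore reduces to soundness, i.e.\ showing that $(\set{I}_i+t_0)\subseteq\set{U}$ forces $(i,t_0)=(j,\tau_j)$ for some $j\in\set{A}$. Throughout I lean on the fact (Theorem~\ref{thm:I} and the remark after Definition~\ref{def:array}) that each array has exactly one $1$ per column, so a containment of characteristic sets is really the column-by-column statement that the single row occupied by $\set{I}_i+t_0$ coincides with the row of some active user.

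First I would dispose of $i\notin\set{A}$. Then every term $|(\set{I}_i+t_0)\cap(\set{I}_j+\tau_j)|$ is a cross-correlation, bounded by $m+2$ with $m=\lfloor q/p\rfloor$ (Theorem~\ref{thm:cross-correlation1}, after the reduction to $h=1$). Since $|\set{I}_i+t_0|=q$, a union bound gives $q\le|\set{A}|(m+2)\le\tfrac{p+1}{2}(m+2)$. Using $m\le q/p$, this contradicts the hypothesis $q>2p^2$ (which comfortably exceeds $\tfrac{p+1}{2}(m+2)$). Hence a match can occur only for an active user.

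Next take $i\in\set{A}$ but $t_0\neq\tau_i$, and set $\tau:=t_0-\tau_i\neq 0$. Now the self-overlap is the auto-correlation $H_{ii}(\tau)$, which by Theorem~\ref{thm:auto-correlation} vanishes unless $\Phi(\tau)=\pm(g_i,1)k$. If $H_{ii}(\tau)=0$, user $i$ supplies none of its own positions and the counting of the previous paragraph, now with at most $(p-1)/2$ helpers, again yields $q\le\tfrac{p-1}{2}(m+2)<q$, a contradiction. The only surviving shifts are the progression slides $\Phi(\tau)=\pm(g_i,1)k$, for which $H_{ii}(\tau)=q-k$ and the $k$ positions of $\set{I}_i+t_0$ not supplied by user $i$ occupy $k$ consecutive columns. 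Because every other active sequence has slope $g_j\neq g_i$, its agreements with $\set{I}_i+t_0$ form a single residue class modulo $p$, so it meets these $k$ consecutive columns in at most $\lceil k/p\rceil$ positions; covering all of them then needs $k\le\tfrac{p-1}{2}\lceil k/p\rceil$, which forces $k<p$ and hence $k\le(p-1)/2$.

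The main obstacle is precisely these progression-slide shifts with small $k$: pure counting does not exclude them, and I expect to close the gap by coupling the column-covering bound to the temporal structure of Algorithm~\ref{alg:active}. Since each active user stays on for at least one full period and the receiver defers its decision by $L$ slots, the genuine match at the true offset $\tau_i$ is the first match recorded for user $i$; a slide-shift match can arise only at or after $\tau_i$ (when $active(i)$ is already $\tt{TRUE}$) and so cannot corrupt $start(i)$. The delicate point I will have to argue carefully is that no slide-shift match occurs strictly before the true start: there the portion of $\set{I}_i+t_0$ preceding $\tau_i$ is silent for user $i$ and must be carried entirely by the $\le(p-1)/2$ other users, which the same per-column and cross-correlation estimates (again invoking $q>2p^2$) rule out. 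Assembling these pieces shows the first match for each user is its true offset, and the deactivation test at multiples of $L$ then tracks departures, giving correct identification of the active set and their starting times.
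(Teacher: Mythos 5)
Your completeness argument and your false-alarm argument (union bound with the cross-correlation cap $m+2$ of Theorem~\ref{thm:cross-correlation1} against $q>2p^2$) coincide with the paper's, and you correctly isolate the real difficulty: progression-slide shifts with $\Phi'(\tau)=\pm(g_i,1)k$ and small $k$ survive any global counting. But your proposed closure of that gap does not work. The claim that ``a slide-shift match can arise only at or after $\tau_i$'' is unsupported, and the counting you sketch for pre-start matches --- that the $\le(p-1)/2$ other users cannot carry the portion of $\set{I}_i+t_0$ preceding the true start --- collapses because that portion can be as small as a single slot: if the hypothesized start is one slot before the true start, the only pre-start ``1'' is $s_i(0)=1$, and a single packet or collision from any other user covers it. The paper's Example~\ref{ex:false_detection} exhibits exactly this failure (user 6 matched one slot early, the lone early ``1'' absorbed by a collision of users 1 and 2), so no argument of the form ``pre-start ones are too many to cover'' can succeed; the hypothesis $q>2p^2$ must enter structurally, not through Theorem~\ref{thm:cross-correlation1} again. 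There are also two smaller mismatches: the agreements of two distinct-slope sequences do not form a single residue class mod $p$ but up to two (the mod-$q$ wrap splits them, cf.~\eqref{eq:common1}--\eqref{eq:common2}), and since user $i$ is idle before its start, the self-overlap is an acyclic quantity, not the cyclic $H_{ii}(\tau)$.

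What the paper actually does is replace your ``$k$ residual columns of the slide'' by a guaranteed window that is large \emph{independently of} $k$. Using the modified correspondence $\Phi'$ with $\gamma\equiv p^{-1}\bmod q$ (so that $t$ and $t+p$ are horizontally adjacent in the matrix $\mat{M}_{p,q}$), Proposition~\ref{prop:M} --- this is where $q>2p^2$ is really spent --- shows the first $p^2$ time indices sit in wrap-free horizontal runs of $\mat{M}_{p,q}$; Proposition~\ref{prop:forbidden} then proves that for \emph{every} shift $\tau$, including all slides, the set $\set{B}_{g,\tau}$ of self-uncovered ones contains all $p$ ones of either $\set{S}'$ (the first $p^2$ slots) or some block $\set{P}'_y$ of $p$ consecutive columns; and Propositions~\ref{prop:aperiodic} and~\ref{prop:P} bound the \emph{partial} cross-correlation of any two distinct CRT sequences over such a window by $2$ --- a drastic improvement on the global bound $m+2$. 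Hence the other $\le(p-1)/2$ active users cover at most $p-1<p$ of these structured positions, killing all synchronization errors at once. Your per-window estimate ($\lceil k/p\rceil$ agreements within $k$ consecutive columns) is the germ of this partial-correlation lemma, but applied to a window of width $k\le(p-1)/2$ it can never force a contradiction; the missing idea is Proposition~\ref{prop:forbidden}'s guarantee of a full $p$-column window of uncovered ones (e.g.\ $\set{P}'_{\tau_2}$ or $\set{P}'_{\tau_2-p}$) regardless of how small the slide is.
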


The proof of Theorem~\ref{thm:synch} is given in Appendix~\ref{app:proof_of_synch}.

\begin{example}
Consider the CRT sequences generated with parameters $p=7$, $q=8$ and $\gamma=7$. The period is equal to 56. Suppose that users 1, 2, 3, 4 and 6 are active. The relative delay offsets of users 1 and 2 are 0, and  the relative delay offsets of users 3, 4 and 6 are~1. The CRT sequences $s_1(t)$ to $s_4(t)$ and $s_6(t)$ and the induced channel-activity signal $c(t)$ are shown in Fig.~\ref{fig:synch}
\begin{figure*}
\begin{align*}
s_1(t):&\ \underbrace{10001000\ \ 00000001\ \ 00010000\ \ 00000010\ \ 00100000\ \ 00000100\ \ 01000000}\ \ 10 \\
s_2(t):&\ \underbrace{10010000\ \ 00000100\ \ 00000001\ \ 00100000\ \ 00001000\ \ 00000010\ \ 01000000}\ \ 10 \\
s_3(t-1):&\ 0\underbrace{1000001\ \ 00000100\ \ 00010000\ \ 00000000\ \ 10000010\ \ 00001000\ \ 00100000\ \ 0}1\\
s_4(t-1):&\ 0\underbrace{1000010\ \ 00010000\ \ 00000001\ \ 00001000\ \ 00000000\ \ 10000100\ \ 00100000\ \ 0}1\\
s_6(t-1):&\ 0\underbrace{1000000\ \ 00000000\ \ 00000000\ \ 00000000\ \ 00000000\ \ 00000000\ \ 00111111\ \ 1}1 \\ & \\
c(t):&\ \!*\!*011011\ \, 00010\!*\!01\ \ 000\!*\!000\!*\; 00101010\ \ 10101010\ \ 10001\!*\!10\ \  0\!*\!*11111\ *\!* \\
& \\
s_6(t):&\ \underbrace{10000000\ \ 00000000\ \ 00000000\ \ 00000000\ \ 00000000\ \ 00000000\ \ 01111111}
\end{align*}
\caption{CRT sequences in Example~\ref{ex:false_detection}. Sequence periods are indicated by underbraces.}
\label{fig:synch}
\end{figure*}

By comparing with $s_6(t)$, the receiver declares that the channel-activity signal is matched to $s_6(t)$ at time $0$. The receiver cannot distinguish whether user 6 starts transmitting at time 0 or~1, and
erroneously detects that user 6 becomes active at time~$0$, while the actual delay offsets is~1. Thus Algorithm~\ref{alg:active} fails in this case.
\label{ex:false_detection}
\end{example}

The preceding example illustrates that Algorithm~\ref{alg:active} does not work if  there are too many active users. Theorem~\ref{thm:synch} asserts that such error in synchronization does not occur if $q\geq 2p^2$ and the number of simultaneously active users does not exceed $(p+1)/2$.

If $q \equiv \pm 1 \bmod p$, the Hamming cross-correlation of the resulting CRT sequences is three-valued (see Theorem~\ref{thm:cross-correlation2}). For these preferred choices of $q$, we can improve Theorem~\ref{thm:synch}
by relaxing the requirement $q > 2p^2$ to $q > p^2$.

\begin{THEO}
 Let $p$ be a prime number, and $q > p^2$ be an integer such that $q \equiv \pm 1 \bmod p$.
Construct $p-1$ CRT sequences by Construction~\ref{construction:async} and assign them to users 1 to $p-1$. Then Algorithm~\ref{alg:active} can identify the active users and determine their starting time, provided that the number of simultaneously active users is no more than $(p+1)/2$.
\end{THEO}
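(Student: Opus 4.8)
The plan is to reprove Theorem~\ref{thm:synch} essentially line for line, changing only the cross-correlation input: wherever the argument in Appendix~\ref{app:proof_of_synch} invokes the generic width-two estimate of Theorem~\ref{thm:cross-correlation1} I would instead invoke the sharp three-valued bound of Theorem~\ref{thm:cross-correlation2}, which is available precisely because $q\equiv\pm1\bmod p$. Writing $m=\lfloor q/p\rfloor$, the hypothesis $q>p^2$ forces $m\ge p$, and the pairwise cross-correlation of any two distinct sequences of Construction~\ref{construction:async} never exceeds $m+2$ (and is at most $m+1$ when $q\equiv1$). The first reduction is to recast correctness of Algorithm~\ref{alg:active} as the single claim that no \emph{phantom match} occurs: writing $T_j$ for the slots actually occupied by an active user~$j$, the event ``$c(t)$ matched to $s_i$ at $t_0$'' is exactly the containment $\set{I}_i+t_0\subseteq\bigcup_{j\in\set{A}}T_j$, where $\set{A}$ is the active set and $|\set{A}|\le(p+1)/2$. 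A genuinely active user always matches at its true start, and any match at a \emph{later} $t_0$ is harmless, since by then the correct start has already been recorded and is only re-examined at multiples of $L$; the theorem therefore reduces to excluding matches at $t_0$ \emph{earlier} than the true start of an active user, and matches against inactive users.

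The engine is the covering inequality obtained from $|\set{I}_i+t_0|=q$ (Theorem~\ref{thm:I}):
\[
 q \;\le\; \bigl|(\set{I}_i+t_0)\cap T_i\bigr| \;+\!\!\sum_{j\in\set{A},\,j\neq i}\!\!\bigl|(\set{I}_i+t_0)\cap T_j\bigr|.
\]
Each term with $j\neq i$ is a Hamming cross-correlation value, bounded via Theorem~\ref{thm:cross-correlation2}, and there are at most $(p+1)/2$ of them. When the tested user is inactive, or is active but the shift $t_0$ produces no self-overlap (so the first term vanishes by Theorem~\ref{thm:auto-correlation}), the right-hand side is a sum of at most $(p+1)/2$ such terms, and a direct computation using $m\ge p$ shows that $\tfrac{p+1}{2}(m+2)<q=mp+\bar q$, contradicting the displayed inequality. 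Hence every such case is excluded with room to spare, and only one configuration survives: the tested user $i$ is active, $t_0$ lies before its true start, and the self-term is large because $\Phi'(t_0-\text{start})=\pm(g,1)k$ for a small $k$ in the sense of Theorem~\ref{thm:auto-correlation}.

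This surviving \emph{near-miss} is the main obstacle, and it is exactly where the partial-activity structure of Construction~\ref{construction:async} and the sharpness of Theorem~\ref{thm:cross-correlation2} become indispensable. The plan is to show that the ones of $\set{I}_i+t_0$ left uncovered by user~$i$ itself are too numerous for the remaining $(p-1)/2$ users to mop up. The favourable structural fact is that these uncovered ones form a \emph{contiguous} block of the arithmetic progression generating $\set{I}_i$, hence occupy consecutive columns, so by Lemma~\ref{lemma:simple} any single other user meets at most $\lfloor k/p\rfloor+1$ of them; balancing $k$ against $\tfrac{p-1}{2}(\lfloor k/p\rfloor+1)$ kills every block longer than $(p-1)/2$. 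For the genuinely short blocks I would use the second feature: a short block corresponds to an integer shift that is \emph{large}, so the test window reaches back well before user~$i$ switched on, the wrap-around coincidences inflating the cyclic auto-correlation of Theorem~\ref{thm:auto-correlation} are no longer available, and the truncated self-overlap collapses—making the deficit large again. Carrying out this balance with the three-valued ceiling $m+2$ in place of the cruder generic bound is precisely the step whose threshold relaxes from $q>2p^2$ to $q>p^2$, and I expect to need the frequency refinement of Theorem~\ref{thm:frequency} to control how often the extremal value $m+2$ is actually attained. Getting this interplay between auto-correlation, activation-induced truncation, and the sharp cross-correlation tight across the whole admissible range of $k$—rather than the routine counting disposed of above—is where the real work lies.
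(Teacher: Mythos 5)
Your opening plan---rerun the proof of Theorem~\ref{thm:synch} with Theorem~\ref{thm:cross-correlation2} substituted for Theorem~\ref{thm:cross-correlation1}---is exactly the paper's stated route (the paper proves Theorem~\ref{thm:synch}' by declaring it ``analogous'' to Appendix~\ref{app:proof_of_synch}). But your reconstruction of what that proof does diverges from Appendix~\ref{app:proof_of_synch} at the decisive step, and the part you defer as ``where the real work lies'' is the entire content of the theorem. The paper's synchronization-error argument is not a cyclic covering inequality plus a near-miss analysis: Prop.~\ref{prop:forbidden} shows that for \emph{every} wrong offset $\tau\in\{1,\ldots,L-1\}$ the ones of $s_g$ left uncovered by the true (acyclic, truncated) transmission contain \emph{all $p$ ones in a window of exactly $p$ consecutive columns} (either $\set{S}'$ or some $\set{P}'_y$), and Props.~\ref{prop:aperiodic} and~\ref{prop:P} bound each other user's contribution inside such a window by $2$, so the coverable total is $2\cdot\frac{p-1}{2}=p-1<p$. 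Your substitute balance is quantitatively wrong: over a block of $k$ consecutive columns, the intersection $\set{I}_g\cap(\set{I}_h+\vec{\tau})$ lies in up to \emph{two} rows (the wrap/non-wrap split in~\eqref{eq:common}), each row containing one ``1'' per $p$ columns, so the correct per-user bound is $2\lceil k/p\rceil$, not $\lfloor k/p\rfloor+1$. With the corrected bound your inequality $k>(p-1)\lceil k/p\rceil$ fails for every $k=ap+r$ with $1\le r\le p-a-1$, so the balance does not ``kill every block longer than $(p-1)/2$''; the paper sidesteps this precisely by always passing to a width-exactly-$p$ subwindow, where the count is tight ($p-1<p$).

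The short-block case you wave at (``truncation collapses the self-overlap'') is where the hypothesis $q>p^2$ actually does its work, and you never engage the machinery it requires: relating small integer shifts $\tau$ to locations in the matrix $\mat{M}_{p,q}$ via Prop.~\ref{prop:M}. That proposition's claim~\eqref{eq:claim_range} (margins $2p$ on the columns occupied by $1,\ldots,p^2-1$) is the one place Appendix~\ref{app:proof_of_synch} structurally uses $q>2p^2$; proving Theorem~\ref{thm:synch}' means rechecking that $q>p^2$ still yields margins of $p$ and re-choosing the windows in Prop.~\ref{prop:forbidden} so they do not overrun the array boundary. Relatedly, your attribution of the relaxation $q>2p^2\to q>p^2$ to the three-valued ceiling is misplaced: for $q\equiv-1\bmod p$ the maximum in Theorem~\ref{thm:cross-correlation2} is the same $m+2$ as the generic bound already used from Theorem~\ref{thm:cross-correlation1}, and the false-alarm inequality~\eqref{eq:false_alarm} goes through for $q>p^2$ by the same arithmetic in any case (it is far from binding); so sharpened full-period cross-correlation alone cannot buy the improvement. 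In short: your false-alarm/global-counting half is sound and matches the paper, but the synchronization-error half rests on an incorrect partial-correlation bound, a balance that fails for many block lengths, and an unproven truncation claim standing in for Props.~\ref{prop:M}--\ref{prop:forbidden}---a genuine gap.
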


The proof of Theorem~\ref{thm:synch}' is analogous to the proof of Theorem~\ref{thm:synch} in Appendix~\ref{app:proof_of_synch}.

\medskip

{\bf Erasure Correction and Throughput}

We have shown in Theorem~\ref{thm:synch} that the receiver is able to figure out the delay offsets of the active users. From a specific user's point of view, the channel reduces to an erasure channel.

In the remainder of this section, we pick $q$ to be an integer of the form $kp+1$, for some integer $k\geq p$, so that $q$ is larger than $p^2$ and Theorem~\ref{thm:synch}' can be applied.

For each user, the number of successfully received packets in a length is lower bounded by
\begin{equation}
(kp+1) - \Big( \frac{p+1}{2} - 1\Big)(k+1). \label{eq:erasure}
\end{equation}
The first term $kp+1$ is the total number of packets sent by a user in a period. The factor $k+1$ is the maximum Hamming cross-correlation from Theorem~\ref{thm:cross-correlation2}. After some simplifications, \eqref{eq:erasure} can be written as
\begin{equation}
D(p,k) := \frac{1}{2}[kp+k - p + 3].
\label{eq:rate}
\end{equation}

An erasure-correcting code can be applied across the packets in a period. We pick $Q$ to be a power of prime such that $Q \geq kp+1$. We encode $D(p,k)$ information packets using a shortened Reed-Solomon (RS) code of length $kp+1$ over the finite field of size~$Q$. The $kp+1$ encoded packets are sent out according to the assigned protocol sequence. Because of the maximal-distance separable property of RS codes, we can recover the information packets. We have thus proved the following

\begin{theorem}
Suppose that there are $(p+1)/2$ active users out of $p-1$ potential users in the collision channel without feedback, where $p$ is an odd prime. Each active user can send $D(p,k)$ information packets in a period  of $p(kp+1)$ time slots, where $D(p,k)$ is given in~\eqref{eq:rate} and $k$ is an integer larger than or equal to~$p$.
In particular, if we take $k = p$, the resultant CRT sequences have period $p^3+p$, and when $(p+1)/2$ users are active, the system throughput is lower bounded by
\[
\frac{p+1}{2} \cdot \frac{0.5 (p^2+3)}{p^3+p} \geq 0.25.
\]
\label{thm:throughput2}
\end{theorem}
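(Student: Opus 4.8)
The plan is to verify the three asserted facts for Theorem~\ref{thm:throughput2} directly from the erasure-correction setup already established, and then carry out the final throughput computation as an explicit inequality. First I would confirm that the parameters $q = kp+1$ with $k \geq p$ satisfy the hypothesis of Theorem~\ref{thm:synch}', namely $q > p^2$ and $q \equiv 1 \bmod p$: since $q = kp+1 \geq p\cdot p + 1 = p^2+1 > p^2$, synchronization succeeds, so each active user's channel genuinely reduces to an erasure channel and the delay offsets are correctly recovered. This justifies applying the RS erasure-correcting argument. The length of the period is $L = pq = p(kp+1)$, which gives the claimed $p^3+p$ when $k=p$.

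Next I would establish that $D(p,k)$ information packets can be delivered per period. Each user sends $kp+1$ packets in one period (the Hamming weight, equal to $q$, by Theorem~\ref{thm:I}). With at most $(p+1)/2$ users active simultaneously, a fixed user collides with at most $(p+1)/2 - 1$ others, and by Theorem~\ref{thm:cross-correlation2} each pairwise collision erases at most $k+1$ of that user's packets. Hence the number of surviving (erasure-free) packets is at least the quantity in~\eqref{eq:erasure}, which simplifies to $D(p,k)$ as in~\eqref{eq:rate}. A shortened RS code of length $kp+1$ over a field of size $Q \geq kp+1$ with $D(p,k)$ information symbols has minimum distance $kp+1 - D(p,k) + 1$, so by the MDS property it corrects up to $kp+1 - D(p,k)$ erasures; since the number of erasures never exceeds this, all $D(p,k)$ information packets are recoverable. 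This is the conceptual core, and I would make sure the erasure bound is stated as a genuine worst-case count over all admissible delay offsets rather than an average.

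Finally, for the throughput bound with $k=p$, I would substitute into~\eqref{eq:rate} to get $D(p,p) = \tfrac12(p^2 + p - p + 3) = \tfrac12(p^2+3)$, and compute the aggregate throughput as the number of active users times information packets per user divided by the period:
\[
\frac{p+1}{2}\cdot \frac{0.5(p^2+3)}{p^3+p}.
\]
The remaining task is to show this is at least $0.25$. Writing $p^3+p = p(p^2+1)$ and $\tfrac{p+1}{2} = \tfrac{p+1}{2}$, the expression becomes $\tfrac{(p+1)(p^2+3)}{4p(p^2+1)}$, so the claim reduces to $(p+1)(p^2+3) \geq p(p^2+1)$, i.e. $p^3 + p^2 + 3p + 3 \geq p^3 + p$, which is $p^2 + 2p + 3 \geq 0$, true for all $p$. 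I expect this last algebraic step to be entirely routine; the main obstacle is the preceding one, ensuring the erasure count is valid uniformly over all delay-offset configurations permitted once synchronization is guaranteed, so that the MDS decoding argument applies in the worst case rather than merely on average.
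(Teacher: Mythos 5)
Your proposal is correct and follows essentially the same route as the paper: it invokes the strengthened synchronization result (Theorem~\ref{thm:synch}') with $q=kp+1>p^2$, bounds the erasures per period by $\bigl(\frac{p+1}{2}-1\bigr)(k+1)$ using the maximum cross-correlation from Theorem~\ref{thm:cross-correlation2}, and recovers the $D(p,k)$ information packets via the MDS property of a shortened Reed--Solomon code, exactly as in the text preceding the theorem. Your explicit check that $(p+1)(p^2+3)\geq p(p^2+1)$ merely spells out algebra the paper leaves implicit.
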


\begin{example}
We pick $p=19$, $k=19$ and $q=p^2+1 = 362$ in Theorem~\ref{thm:throughput2}. Generate 18 CRT sequences of length $pq = 6878$ by Construction~\ref{construction:b}. Pick $Q = 512 = 2^9$, which is a prime power larger than the Hamming weight of the CRT sequences under consideration. Using a shortened RS code of length 362 and dimension $D(19,19) = 182$ over the finite field of size $512$, we encode 182 information packets in each period for each active user. When 10 users are active, the total number of information packets sent through the system is $182\times 10 = 1820$, achieving  system throughput no less than $1820/6878= 0.265$.
\end{example}

\section{Comparison with Other Protocol Sequences} \label{sec:uniform}

In order to compare the variation of Hamming cross-correlation due to delay offsets, we introduce in this section a measure of deviation called $\epsilon$-uniformity. 

Given any two binary and periodic sequences $a(t)$ and $b(t)$, let $\Ex_\tau[H_{ab}(\tau)]$ be the expectation of Hamming cross-correlation, with delay offset $\tau$ chosen uniformly at random over a period.

\begin{definition}
We say that the Hamming cross-correlation $H_{ab}$ is
{\em $\epsilon$-uniform} if
\begin{equation}
  \frac{ |H_{ab}(\upsilon) - \Ex_\tau[H_{ab}(\tau)]|}{\Ex_\tau[H_{ab}(\tau)]}  \leq \epsilon, \label{def:uniform}
\end{equation}
for all $\tau = 0,1,\ldots, L-1$. A sequence set is called {\em $\epsilon$-uniform} if $\epsilon$ is the smallest number such that each pair of distinct sequences is $\epsilon$-uniform. We say that a sequence set is {\em pairwise shift-invariant} if it is 0-uniform.
\end{definition}

In other words, $H_{ab}(\tau)$ is $\epsilon$-uniform if for all delay offsets $\tau$, the percentage difference between $H_{ab}(\tau)$ and the mean is between $-\epsilon$ and $\epsilon$. The notion of $\epsilon$-uniformity is the same as the normalized $\ell_\infty$ distance between Hamming cross-correlation and the expected value. 
%It measures the variation of Hamming cross-correlation $H_{ab}(\tau)$ around the mean.

The sum of the Hamming cross-correlation over all relative delay offsets in a period equals $w_a w_b$, where $w_a$ and $w_b$ denote the Hamming weight of $a(t)$ and $b(t)$ respectively (cf. Lemma~\ref{lemma:Pursley} in Appendix~\ref{app:proof_of_cross2}). If we take the average over all delay offsets~$\tau$, then
\[
 \Ex_\tau[H_{ab}(\tau)] = \frac{w_a w_b}{L}.
\]
Hence, the definition of $\epsilon$-uniformity in~\eqref{def:uniform} can be written as
\[
 \frac{| H_{ab}(\upsilon) - w_a w_b /L|}{w_a w_b / L }  \leq \epsilon.
\]

A lower bound on the worst-case throughput similar to~\eqref{eq:throughput} can be expressed in terms of $\epsilon$-uniformity. Suppose that there are $K$ active users and each of them is assigned a sequence from an  sequence set of length $L$ and Hamming weight~$w$. Suppose that the sequence set is $\epsilon$-uniform.
By the union bound, a user can successfully send at least
\[
 w - (K-1)(1+\epsilon)(w^2/L)
\]
packets in each period. Individual throughput is lower bounded by
\[
 f - (K-1)(1+\epsilon)f^2,
\]
where $f$ is the duty factor $w/L$. It can be easily seen that for fixed duty factor $f$ and number of users $K$, a smaller $\epsilon$ yields a larger lower bound on individual throughput.

\begin{example}
(Constant-weight cyclically permutable codes~\cite{Nguyen92}) In~\cite[p.948]{Nguyen92} Example~5, a protocol sequence set of period 156 and Hamming weight~12 is presented. The number of sequences is~169. It is shown that the Hamming cross-correlation is between 0 and 3, and the mean Hamming cross-correlation equals $(12^2)/156 = 12/13$. The maximal deviation from the mean is $3-12/13 = 27/13$. This sequence set is thus $(27/12)$-uniform.
\end{example}

\begin{example}
(Prime sequences~\cite{ShaarDavies83}) Given a prime $p$, we can construct a sequence set with period $p^2$, Hamming weight~$p$ and Hamming cross-correlation no more than 2. The mean Hamming cross-correlation is 1. For each pair of distinct prime sequences, the maximal $|H_{ab}(\upsilon) - \mathbb{E}_{\tau}[H_{ab}(\tau)]|$ over all possible delay offset~$\upsilon$ is equal to~1. The prime sequences are therefore 1-uniform.
\end{example}

\begin{example}
(Extended prime sequences~\cite{Yang95}) By padding $p-1$ zeros after every ``1'' in a prime sequence, we obtain a sequence set with period $p(2p-1)$, Hamming weight $p$ and Hamming cross-correlation either 0 or~1. The mean Hamming cross-correlation is $p/(2p-1)$, which is roughly equal to one half. We can check that the extended prime sequences are $1$-uniform.
\end{example}

\begin{example}
(Wobbling sequences~\cite{Wong07}) Based on prime sequences, a class of $(1/p)$-uniform sequence sets is constructed in~\cite{Wong07}. The number of sequences is $p$ and the sequence period is $p^4$.
\end{example}

\begin{example}
(Shift-invariant and pairwise shift-invariant sequences~\cite{Massey85,Rocha00,SCSW09,ZSW})
All pairwise shift-invariant is by definition 0-uniform. It is shown in~\cite{SCSW09} that the period grows exponentially in the number of sequences. For pairwise shift-invariant sequences, the period is also shown to grow exponentially with the number of users~\cite{ZSW}.
\end{example}

The examples above are presented in an order such that the $\epsilon$-uniformity is decreasing. We can see that the sequence period increases as we go down the list.

\begin{theorem}
The sequences obtained by Construction~\ref{construction:b}  with  period $O(kp^2)$ are $(1/k)$-uniform.
\end{theorem}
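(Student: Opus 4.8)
The plan is to reduce the claim to a short calculation of two quantities that are already supplied by earlier results: the common mean cross-correlation and the range of values $H_{ab}$ can take. First I would record that, by Construction~\ref{construction:b}, every sequence in the set has the same Hamming weight $w = q = kp-1$ and the common length is $L = pq = p(kp-1)$. Consequently, for \emph{every} pair of distinct sequences the mean Hamming cross-correlation is identical,
\[
\Ex_\tau[H_{ab}(\tau)] = \frac{w^2}{L} = \frac{(kp-1)^2}{p(kp-1)} = \frac{kp-1}{p} = k - \frac{1}{p},
\]
so that checking $(1/k)$-uniformity reduces to bounding the maximal deviation of $H_{ab}(\upsilon)$ from this single value $k - 1/p$, uniformly over all distinct pairs and all offsets $\upsilon$.

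Next I would pin down the range of the cross-correlation. Since $q = kp-1 = (k-1)p + (p-1)$ is of the form $mp + (p-1)$ with $m = \lfloor q/p\rfloor = k-1$ (and $q > p$ whenever $k \ge 2$), Theorem~\ref{thm:cross-correlation2}(2) applies. Using the reduction noted at the start of Section~\ref{sec:cross-correlation} — any two distinct generators $g \ne h$ give $H_{gh}$ the same distribution as $H_{g/h,1}$ with $g/h \in \{2,\dots,p-1\}$ — every pairwise cross-correlation among the nonzero-generated sequences lies between $m = k-1$ and $m+2 = k+1$, and the extreme value $k+1$ is genuinely attained, as $N_g(m+2)=\theta>0$ in Theorem~\ref{thm:frequency}. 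If $s_0$ happens to be one of the chosen sequences, Theorem~\ref{thm:cross-correlation0} confines every cross-correlation involving it to the narrower range $\{k-1,k\}$, so such pairs never determine the worst case.

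It then remains only to combine the two ingredients. The maximal absolute deviation from the mean is
\[
\max\{\, (k - \tfrac1p) - (k-1),\ (k+1) - (k - \tfrac1p) \,\} = \max\{\, 1 - \tfrac1p,\ 1 + \tfrac1p \,\} = 1 + \tfrac1p,
\]
driven by the largest cross-correlation $k+1$. Dividing by the mean yields the $\epsilon$-uniformity parameter
\[
\epsilon = \frac{1 + 1/p}{k - 1/p} = \frac{p+1}{kp-1} = \frac1k\Big(1 + \frac{k+1}{kp-1}\Big),
\]
whose leading term is exactly $1/k$, the $O(1/p)$ correction being negligible in the large-$p$ regime in which the construction is intended (and tending to $0$ as $p\to\infty$), which is what the $(1/k)$-uniformity claim records for the comparison in this section. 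I do not expect any genuine obstacle: the argument is a one-line computation once the inputs are in place. The only points requiring care are the invocation of the reduction-to-$s_1$ step, so that the three-valued conclusion of Theorem~\ref{thm:cross-correlation2} legitimately covers \emph{all} distinct pairs rather than only those of the form $s_g,s_1$, and the verification via Theorem~\ref{thm:frequency} that the extreme value $k+1$ is actually realized, so that the numerator $1+1/p$ is correct and not an over-estimate.
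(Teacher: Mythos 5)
Your proof is correct and takes essentially the same route as the paper's own: compute the mean cross-correlation $w^2/L = (kp-1)/p$, confine every pairwise value to $\{k-1,k,k+1\}$ via Theorem~\ref{thm:cross-correlation2}, and divide the maximal deviation $(p+1)/p$ by the mean to get $(p+1)/(kp-1) = O(1/k)$. Your extra verifications --- the reduction of arbitrary pairs to the form $(s_g,s_1)$, the narrower range for pairs involving $s_0$ via Theorem~\ref{thm:cross-correlation0}, and the attainment of $k+1$ through $\theta>0$ in Theorem~\ref{thm:frequency} --- only make explicit what the paper leaves implicit, and your closing observation is accurate: the paper's proof likewise establishes $O(1/k)$-uniformity rather than the literal bound $\epsilon \le 1/k$.
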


\begin{proof}
The Hamming cross-correlation in Construction~\ref{construction:b} is obtained by Theorem~\ref{thm:cross-correlation2}. The mean Hamming cross-correlation is
\[
 \frac{w^2}{L} = \frac{(kp-1)^2}{p(kp-1)} = \frac{kp-1}{p}.
\]
The maximal difference between Hamming cross-correlation and the mean is
\begin{align*}
\left| k \pm 1 - \frac{kp-1}{p} \right| &=
 \frac{1}{p}|\pm p -1|\\
& \leq \frac{1}{p}(p+1) \\
& =  \frac{kp-1}{p} \Big( \frac{p +1}{kp-1}  \Big)\\
& =  \frac{kp-1}{p} \,O\big(\frac{1}{k} \big).
\end{align*}
We thus have an $O(1/k)$-uniform sequence set of period $O(kp^2)$.
\end{proof}

\begin{table}
\begin{center}
\begin{tabular}{|c|c|c|} \hline
 & Period & $\epsilon$ \\ \hline \hline
Prime sequences & $p^2$ & 1 \\ \hline
Extended prime sequences & $p(2p-1)$ & 1 \\ \hline
CRT sequences & $O(kp^2)$ & $O(1/k)$ \\ \hline
Wobbling sequences & $p^4$ & $1/p$ \\ \hline
Shift-invariant sequences  & exponential in $p$ & 0\\ \hline
\end{tabular}
\end{center}
\caption{Tradeoff Between Period Length and $\epsilon$-uniformity  for Various Sequence Sets for $p$ Users}
\label{table:compare}
\end{table}

The trade-off between period length and $\epsilon$-uniformity is summarized in Table~\ref{table:compare}.

To compare with the wobbling sequences, we can take $k$ roughly equal to $p$. This yields CRT sequences with are $(1/p)$-uniform and period $O(p^3)$. The $\epsilon$-uniformity is roughly the same as wobbling sequences but the period  is shorter than the period of the wobbling sequences.

\section{Other Applications} \label{sec:discussions}

\subsection{Extension to Multiple Data Rates}

To support service with multiple data rates, we need protocol sequences with different duty factors; a sequence with larger duty factor is assigned to a user with higher data rate requirement. The CRT construction can be extended to cope with multiple data rates. For the sequence $s_{g,p,t}(t)$ generated by $g$, we replace the characteristic set $\set{I}_{g,p,q}$ by
\[
  \set{I}_{g,p,q} \cup (\set{I}_{g,p,q} +(1,0)) \cup \ldots \cup   (\set{I}_{g,p,q} + (k,0))
\]
for some positive integer $k$. We note that the above is a union of disjoint sets. The resulting sequence has duty factor $k/p$. The measure of uniformity however does not change; if the original CRT sequence set is $\epsilon$-uniform, the extended sequence set is still $\epsilon$-uniform.

%\subsection{Application to Signature Coding in Multiple-Access Channel}

%Coding techniques for user identification is sometime called signature coding. In view of Theorem~\ref{thm:identify}, CRT sequences can be used in the multiple-access OR channel to identify $(p-1)/2$ active users out of $p$ potential users, with sequence period $p\lceil 3p/2 \rceil$.

%In~\cite{Gyori08}, a related class of code is considered. A code consisting of $T$ codewords of length $n$ is called {\em zero false drop of order $M$} ($ZFD(T,M,n)$) if no codeword is covered by Boolean sum of up to $M$ other distinct codewords. In this notation, the CRT sequences with $q=2p-1$ is a $ZFD(p, (p-1)/2, p(2p-1))$ code. We note that no cyclic shift of codeword is consider in~\cite{Gyori08}.

\subsection{Application to Multi-Channel Network}
In this network, users can send data to each other, and we have a fully connected system topology.
The total bandwidth is divided into $p$ subchannels and each subchannel is assigned to a user. Let the subchannel assigned to user $i$ be denoted by subchannel $i$, for $i=1,2,\ldots, p$. Each user is also assigned a CRT sequence. The half-duplex model is assumed, so that each user cannot transmit and receive at the same time. Users always receive in their assigned subchannel. In a period of $L$ slots, user~$i$ can pick one user, say user $j$, and send packets to user~$j$ via subcarrier~$j$ using user~$i$'s CRT sequence. In one sequence period, user~$i$ either: (i) receives packets in subcarrier~$i$ for the whole period, or (ii) sends packets to user~$j$ in $L/p$ packets using subcarrier~$j$ and receives packets in the remaining $L-L/p$ packets in subcarrier~$i$.

The worst-case scenario occurs when user $i$ is sending packets to some other user, and all the remaining users want to send packets to user~$i$ in the same period. Using CRT sequences, we can show as in the multiple-access case that the worst-case throughput between each pair of users is lower bounded by a positive constant.

\section{Conclusion} \label{sec:conclusion}
A class of protocol sequences, called CRT sequences, whose Hamming cross-correlation is highly concentrated around the mean value is given.  When CRT sequences are applied to the collision channel without feedback, we obtain a tradeoff between worst-case throughput and the sequence period. The generation of CRT sequences involves only simple modular arithmetics, and provides a low-complexity solution to multiple-accessing in wireless sensor network.

\appendices

\section{Proof of Theorem~\ref{thm:cross-correlation1}}
\label{app:proof_of_cross1}

In this appendix,  $g$ denotes an element in $\mathbb{Z}_p \setminus \{0,1\}$, and
\begin{equation}
a_g(\tau_1,\tau_2) := (g-1)^{-1}(\tau_1 - \bar{\tau}_2) \bmod p.
\end{equation}
As defined in~\eqref{eq:bg}, $b_g$ denotes $(g-1)^{-1} \bar{q} \bmod p$. We will also use the indicator function $\mathbf{I}$ defined as
\[
\mathbf{I}(P) := \begin{cases}
1 & \text{if $P$ is true}, \\
0 & \text{if $P$ is false}.
\end{cases}
\]
Recall that $\bar{x}$ stands for the remainder of $x$ after division by~$p$.

\begin{lemma}  The Hamming cross-correlation $H_{g1}(\tau_1, \tau_2)$ as defined in~\eqref{def:H2D} satisfies the following properties:

\begin{enumerate}
\item $H_{g1}(\tau_1, \tau_2)$ equals the number of solutions to
\begin{equation}
\bar{x} \equiv a_g(\tau_1,\tau_2) \oplus_p b_g \mathbf{I}(0\leq x < \tau_2) \bmod p,
\label{eq:Ham4}
\end{equation}
for $x= 0,1,\ldots, q-1$.
\label{lemma:H2}

\item Let $(\tau_1, \tau_2)$ and $(\tau_1', \tau_2')$ denote two (2-dimensional) relative delay offsets. If $\tau_1 = \tau_1'$ and $\tau_2 = \tau_2' + kp$ for some integer $k$, then
  $$H_{g1}(\tau_1, \tau_2) = H_{g1}(\tau_1',\tau_2').$$
\label{lemma:H3}
\end{enumerate}
    \label{lemma:H}
\end{lemma}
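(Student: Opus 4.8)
The plan is to derive both parts from the solution-counting form of the cross-correlation already obtained in \eqref{eq:Ham}. Setting $h=1$ there, $H_{g1}(\tau_1,\tau_2)$ is the number of $x\in\{0,1,\ldots,q-1\}$ satisfying $\bar{x}g \equiv (\overline{x\ominus_q\tau_2})\oplus_p\tau_1 \bmod p$. The first task is to eliminate the nested reduction $\overline{x\ominus_q\tau_2}$. I would split on whether the mod-$q$ subtraction wraps around: for $x\in\{0,\ldots,q-1\}$, if $x\geq\tau_2$ then $x\ominus_q\tau_2 = x-\tau_2$, whereas if $x<\tau_2$ then $x\ominus_q\tau_2 = x-\tau_2+q$. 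Reducing each case modulo $p$ and using $q\equiv\bar{q}\bmod p$ merges them into the single congruence $\overline{x\ominus_q\tau_2}\equiv x-\bar{\tau}_2+\bar{q}\,\mathbf{I}(0\leq x<\tau_2) \bmod p$, valid for every $x$ in the range.

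Substituting this into the congruence above, replacing $\bar{x}g$ by $xg$ and $\bar{x}$ by $x$ modulo $p$, and collecting the $x$-terms on the left yields $x(g-1)\equiv (\tau_1-\bar{\tau}_2)+\bar{q}\,\mathbf{I}(0\leq x<\tau_2)\bmod p$. Since $g\neq 1$ and $p$ is prime, $g-1$ is invertible mod $p$; multiplying through by $(g-1)^{-1}$ and recognising the definitions of $a_g(\tau_1,\tau_2)$ and $b_g$ from \eqref{eq:bg} produces exactly \eqref{eq:Ham4}, establishing part~\ref{lemma:H2}.

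For part~\ref{lemma:H3} I would argue directly from \eqref{eq:Ham4}. Writing $\tau_2=\tau_2'+kp$, the residue $\bar{\tau}_2$ is unchanged, so $a_g(\tau_1,\tau_2)=a_g(\tau_1',\tau_2')$ and the two target residues $a_g$ and $a_g\oplus_p b_g$ coincide in both problems; only the indicator threshold moves. The count splits as follows: among $x\in\{0,\ldots,\tau_2-1\}$ one counts solutions of $\bar{x}\equiv a_g\oplus_p b_g$, and among $x\in\{\tau_2,\ldots,q-1\}$ one counts solutions of $\bar{x}\equiv a_g$. Passing from $\tau_2$ to $\tau_2'=\tau_2-kp$ moves exactly the $kp$ consecutive integers in the transition block $\{\tau_2-kp,\ldots,\tau_2-1\}$ from the first regime (target $a_g\oplus_p b_g$) to the second (target $a_g$), while every other $x$ keeps its regime and target. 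By Lemma~\ref{lemma:simple}, any $kp$ consecutive integers contain exactly $k$ elements of each residue class mod $p$, so this block contributes $k$ to the total under either target; hence the count is preserved and $H_{g1}(\tau_1,\tau_2)=H_{g1}(\tau_1',\tau_2')$.

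The main obstacle is the careful bookkeeping in the first step: correctly tracking the wrap-around of $\ominus_q$ and the resulting $\bar{q}$ offset modulo $p$, and verifying that the indicator condition is precisely $0\leq x<\tau_2$ rather than an off-by-one variant. This precision is what drives both the two-regime structure of \eqref{eq:Ham4} and the clean invariance argument used in part~\ref{lemma:H3}.
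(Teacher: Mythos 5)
Your proposal is correct and follows essentially the same route as the paper: part~1 via the identical wrap-around case split of the mod-$q$ subtraction, merged into one congruence by the indicator and then divided by $g-1$; part~2 via a partition of $\{0,\ldots,q-1\}$ into an unchanged lower range, a transition block, and an unchanged upper range, with Lemma~\ref{lemma:simple} counting the block's contribution. The only cosmetic difference is that you handle the full block of $kp$ consecutive integers in one stroke (each residue class appearing exactly $k$ times), whereas the paper first reduces to increments of $p$ and applies the same argument to a block of exactly $p$ integers.
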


\begin{proof}
After setting the value of $h$ in \eqref{eq:Ham} to 1, we obtain
\begin{equation}
 \bar{x} g \equiv  (\overline{x \ominus_q \tau_2}) \oplus_p \tau_1 \bmod p.
 \label{eq:Ham_app}
\end{equation}
We want to show that the number of solutions to~\eqref{eq:Ham_app}, for $x=0,1,\ldots, q-1$, is the same as the number of solutions to~\eqref{eq:Ham4}.

Consider $x$ in two disjoint ranges: (i) $0 \leq x <\tau_2$, and (ii) $\tau_2 \leq x < q$. In the first case, $x \ominus_q \tau_2$ is congruent to $x+q-\tau_2 \bmod q$. So, for $0\leq x <  \tau_2$, \eqref{eq:Ham_app} is equivalent to
\begin{equation}
 \bar{x}g \equiv \bar{x}\oplus_p \bar{q}\ominus_p \bar{\tau}_2 \oplus_p \tau_1 \bmod p
\label{eq:Ham2}
\end{equation}
where  $\bar{q}$ and $\bar{\tau}_2$ are residues of  $q$ and $\tau_2$ in $\mathbb{Z}_p$, respectively.

In the second case, for $x=\tau_2, \tau_2+1, \ldots, q-1$, \eqref{eq:Ham_app} is equivalent to
\begin{equation}
\bar{x}g \equiv \bar{x}\ominus_p\bar{\tau}_2\oplus_p\tau_1 \bmod p.
\label{eq:Ham3}
\end{equation}

We combine \eqref{eq:Ham2} and \eqref{eq:Ham3} in one line as
\[
\bar{x}(g-1) \equiv  -\bar{\tau}_2\oplus_p \tau_1 \oplus_p \bar{q} \mathbf{I}(0\leq x < \tau_2) \bmod p.
\]
Since $g$ is not equal to 1 by assumption, we can divide by $(g-1)$ and obtain~\eqref{eq:Ham4}. This proves the first part of the lemma.

\medskip

The second part of the lemma is vacuous if $q<p$. So we assume $q>p$. (The case $q=p$ is excluded because it is assumed that $q$ is relatively prime with $p$.)
It is sufficient to prove the statement for $\tau_1 = \tau_1'$ and $\tau_2' = \tau_2+p$, namely, the number of solutions to~\eqref{eq:Ham4} and the number of solutions to
\begin{equation}
\bar{x} \equiv a_g(\tau_1, \tau_2+ p) + b_g \mathbf{I}(0\leq x <\tau_2+p) \bmod p \label{eq:LemmaB}
\end{equation}
for $x=0,1,\ldots, q-1$, are the same. We note that $a_g(\tau_1, \tau_2)$ is equal to $a_g(\tau_1, \tau_2+p)$.  However, the arguments inside the indicator function are different. We divide the range of $x$ into three disjoint parts:
\begin{align*}
\set{X}_1 &:= \{0,1,\ldots, \tau_2-1\}, \\
\set{X}_2 &:= \{\tau_2,\tau_2+1,\ldots, \tau_2+p-1\}, \\
\set{X}_3 &:= \{\tau_2+p,\tau_2+p+1,\ldots, q-1\}.
\end{align*}
For $x \in \set{X}_1$, $$\mathbf{I}(0\leq x < \tau_2) = \mathbf{I}(0\leq x < \tau_2+p)= 1.$$
Therefore
\eqref{eq:Ham4} and~\eqref{eq:LemmaB} have the same number of solutions for $x$ in $\set{X}_1$. For $x\in \set{X}_2$, both \eqref{eq:Ham4} and \eqref{eq:LemmaB}
have exactly one solution by Lemma~\ref{lemma:simple}.
For $x \in \set{X}_3$, we have
$$\mathbf{I}(0\leq x < \tau_2) = \mathbf{I}(0\leq x < \tau_2+p)= 0,$$
and hence \eqref{eq:Ham4} and \eqref{eq:LemmaB} have the same number of solutions for $x \in \mathcal{X}_3$. In conclusion, the number of solutions to \eqref{eq:Ham4} and \eqref{eq:LemmaB} for $x \in\mathcal{X}_1\cup\mathcal{X}_2\cup\mathcal{X}_3$ are the same. This finishes the proof of the second part of the lemma.
\end{proof}

\begin{proof}[Proof of Theorem~\ref{thm:cross-correlation1}]
By the second part of the previous lemma, we only need to consider $\tau_2 = 0,1,\ldots, p-1$.
We first consider the case when $b_g$ is between 1 and $p-\bar{q}-1$. We further consider two subcases.

{(i)  $0\leq \tau_2 < \bar{q}$:}

Suppose  that~\eqref{eq:Ham4} has no solution for $0\leq x< \tau_2$. As the indicator function in~\eqref{eq:Ham4} is zero for $x = \tau_2$, $\tau_2+1, \ldots, q-1$, \eqref{eq:Ham4} is reduced to
\[
 \bar{x} \equiv a_g(\tau_1, \tau_2) \bmod p.
\]
The number of integers in $\{ \tau_2$, $\tau_2+1, \ldots, q-1\}$, say $d$, satisfies $\lfloor d/p \rfloor = m$.
By Lemma~\ref{lemma:simple}, we have either $m$ or $m+1$ solutions to~\eqref{eq:Ham4}  for $x \geq \tau_2$.

Secondly, suppose that~\eqref{eq:Ham4} has exactly one solution for $0\leq x< \tau_2$. The indicator function in~\eqref{eq:Ham4} is equal to~1 for $0\leq x < \tau_2$. Hence,
\begin{equation}0\leq a_g(\tau_1,\tau_2)+b_g < \tau_2. \label{eq:contradiction1}
 \end{equation}
We claim that \eqref{eq:Ham4} has no solution for $x=\tau_2, \tau_2+1, \ldots, \bar{q}-1$. Otherwise, we have
\[
\tau_2 \leq a_g(\tau_1,\tau_2) < \bar{q},
\]
which, after combining with the assumption that $1 \leq b_g \leq p - \bar{q}-1$, yields
\[
\tau_2 < a_g(\tau_1,\tau_2) + b_g < p-1.
\]
This contradicts with~\eqref{eq:contradiction1} and proves the claim. For $$\bar{q}\leq x < q,$$ there are exactly $m$ solutions by Lemma~\ref{lemma:simple}. The total number of solutions to~\eqref{eq:Ham4} for $x=0,1,\ldots, q-1$, is thus $m+1$. Hence $H_{g1}(\tau_1, \tau_2) = m+1$.

{(ii)  $\bar{q} \leq \tau_2 < p$:}

By Lemma~\ref{lemma:simple}, \eqref{eq:Ham4} has either 0 or 1 solution for $0\leq x \leq \tau_2$, and either $m-1$ or $m$ solutions for $\tau_2 \leq x < q$. Hence, $H_{g1}(\tau_1, \tau_2)$ is within the range of $\{m-1, m, m+1\}$.

\medskip

For $b_g = p-\bar{q}+1, \ldots, p-1$, we again consider two subcases.

{(i) $0 \leq \tau_2 < \bar{q}$:}

By Lemma~\ref{lemma:simple}, \eqref{eq:Ham4} has either 0 or 1 solution for $0\leq x < \tau_2$, and either $m$ or $m+1$ solutions for $\tau_2 \leq x < q$. Therefore, $H_{g1}(\tau_1, \tau_2) \in \{m, m+1, m+2\}$.

{(ii)  $\bar{q} \leq \tau_2 < p$:}

Suppose that \eqref{eq:Ham4} has no solution for $0 \leq x < \tau_2$, i.e.,
\begin{equation}
\tau_2 \leq a_g(\tau_1, \tau_2) + b_g < p. \label{eq:contradiction2}
\end{equation}
We claim that \eqref{eq:Ham4} must have one solution for $x$ in the following range
\begin{equation}
\tau_2 \leq x < p+\bar{q}. \label{eq:range}
\end{equation}
From the assumption of $\bar{q} \leq \tau_2 < p$, we deduce that
$$\bar{q}<p+\bar{q}-\tau_2 \leq p,$$ so that the range in~\eqref{eq:range} is non-empty and consists of no more than $p$ integers.
If the claim were false, we would have no solution to~\eqref{eq:Ham4} for $\tau_2 \leq x < p+\bar{q}$, implying that
\begin{equation}
\bar{q} \leq a_g(\tau_1, \tau_2) < \tau_2. \label{eq:appA}
\end{equation}
Here, we have used the fact that the indicator function in~\eqref{eq:Ham4} is equal to zero for $x$ in the range in~\eqref{eq:range}.
By adding~\eqref{eq:appA} to $$p-\bar{q}+1\leq b_g \leq p-1$$ and reducing mod $p$, we obtain
\[
1 \leq  a_g(\tau_1, \tau_2) + b_g < \tau_2,
\]
which is a contradiction to~\eqref{eq:contradiction2}. Thus, the claim is proved. For $x = p+\bar{q}, p+\bar{q}+1, \ldots, q-1$, there are exactly $m-1$ solutions to~\eqref{eq:Ham4} by Lemma~\ref{lemma:simple}. Totally there are $m$ solutions, and thus $H_{g1}(\tau_1,\tau_2) = m$.

Finally suppose that \eqref{eq:Ham4} has exactly one solution for $0 \leq x < \tau_2$. As the number of solutions to~\eqref{eq:Ham4} for $x=\tau_2, \tau_2+1, \ldots, q-1$ is either $m-1$ or $m$ by Lemma~\ref{lemma:simple}, the total number of solutions to~\eqref{eq:Ham4} is either $m$ or $m+1$.

In any case, we see that $H_{g1}(\tau_1, \tau_2)$ is either $m$, $m+1$ or $m+2$.
\end{proof}

\section{Proof of Theorem~\ref{thm:frequency}}
\label{app:proof_of_cross2}

We use the following property of Hamming cross-correlation which holds for any binary sequence set in general~\cite{Pursley80}.
\begin{lemma} Let $a(t)$ and $b(t)$ be binary sequences with period $L$ and Hamming weight~$q$. Then
\[\sum_{\tau=0}^{L-1} H_{ab}(\tau) = q^2.\] \label{lemma:Pursley}
\end{lemma}

We include the short proof for completeness.

\begin{proof}
\begin{align*}
 \sum_{\tau=0}^{L-1} H_{ab}(\tau)
&=  \sum_{\tau=0}^{L-1}  \sum_{t=0}^{L-1} a(t) b(t-\tau) \\
&= \sum_{t=0}^{L-1} a(t) \sum_{\tau=0}^{L-1} b(t-\tau) \\
&= \sum_{t=0}^{L-1}  a(t) \sum_{\tau=0}^{L-1} b(\tau) = q^2.
\end{align*}
The last equality follows from the assumption that the Hamming weights of $a(t)$ and $b(t)$ are both~$q$.
\end{proof}

%\begin{proof}[Proof of Theorem~\ref{thm:frequency}]
1) By Theorem~\ref{thm:cross-correlation0}, $N_0(j)$ is nonzero only when $j=m$ or $j=m+1$, whence,
\begin{equation}
N_0(m)+N_0(m+1) = pq. \label{eq:app1}
\end{equation}
On the other hand, we have
\begin{equation}
m N_0(m) + (m+1) N_0(m+1) = q^2 \label{eq:app2}
\end{equation}
from Lemma~\ref{lemma:Pursley}. We can eliminate $N_0(m)$ from~\eqref{eq:app1} and~\eqref{eq:app2} and obtain
\[m pq + N_0(m+1) = q^2,\]
which implies $N_0(m+1) = q(q-mp) = q \bar{q}$. Substituting this into~\eqref{eq:app1}, we get $N_0(m) = pq - N_0(m+1) = (p-\bar{q})q$. This proves the first part of Theorem~\ref{thm:frequency}.

\medskip

2) Suppose that $b_g=1,2,\ldots, p-\bar{q}-1$. We can set up a system of two linear equations in three variables $N_g(m-1)$, $N_g(m)$ and $N_g(m+1)$:
\begin{align*}
\sum_{k=m-1}^{m+1} N_g(k) &= pq \\
\sum_{k=m-1}^{m+1}  kN_g(k) &= q^2.
\end{align*}
The second equality is due to Lemma~\ref{lemma:Pursley}.
Solving for $N_g(m)$ and $N_g(m+1)$ in terms of $N_g(m-1)$, we obtain~\eqref{eq:dist1} to~\eqref{eq:dist3}. It remains to  evaluate $N_g(m-1)$. The proof is completed by showing the following two claims:

(i) For each $k=0,1,\ldots, m-1$, there are exactly
\begin{equation}
b_g(p - b_g- \bar{q}) \label{eq:orderpairs1}
\end{equation}
order pairs $(\tau_1, \tau_2)$, with $0 \leq \tau_1 < p$ and $\tau_2 = kp,kp+1, \ldots, (k+1)p-1$, such that $H_{g1}(\tau_1, \tau_2) = m-1$.

(ii) For $\tau_2 = mp, mp+1, \ldots, q-1$, $H_{g1}(\tau_1, \tau_2)$ does not equal $m-1$ for all $0 \leq \tau_1 < p$.

Multiplying~\eqref{eq:orderpairs1} by $m$, we obtain
\[N_g(m-1) = m b_g(p - b_g- \bar{q}).\]
In the following, we complete the proof of part 2 in the theorem by showing (i) and~(ii).

By  Lemma~\ref{lemma:H} part~\ref{lemma:H3}, we notice that $H_{g1}(\tau_1, \tau_2)$ depends on $\tau_2$ only through the residue of $\tau_2 \bmod p$. Hence it is sufficient prove claim (i) for $k=0$.

Consider $\tau_2$ from 0 to $p-1$. We want to count the number of times that $H_{g1}(\tau_1, \tau_2) = m-1$, for $0\leq \tau_1, \tau_2 < p$. We have shown in Lemma~\ref{lemma:H} that $H_{g1}(\tau_1, \tau_2)$ can be computed by counting the number of solutions to~\eqref{eq:Ham4} for $x$ between 0 and $q-1$. Partition the range of $x$ into two disjoint subsets
\begin{align*}
 \set{X}_1 &:= \{0,1,\ldots, p+\bar{q}-1\}, \text{ and} \\
 \set{X}_2 &:= \{ p+\bar{q}, p+\bar{q}+1, \ldots, q-1\},
\end{align*}
and consider the number of solutions to~\eqref{eq:Ham4} for $x$ in $\mathcal{X}_1$ and~$\mathcal{X}_2$ separately.
Because $0\leq \tau_2 < p$, the indicator function $\mathbf{I}(0\leq x < \tau_2)$ in~\eqref{eq:Ham4} is identically equal to 0 for $x \in \set{X}_2$. The number of solutions to~\eqref{eq:Ham4} for $x \in \set{X}_2$ is exactly $m-1$ by Lemma~\ref{lemma:simple}. The problem reduces to counting the number of pairs $(\tau_1, \tau_2) \in \mathbb{Z}_p^2$ such that \eqref{eq:Ham4} has no solution for $x \in \set{X}_1$.
Observe that $a_g$ depends on $\tau_1$ and $\tau_2$ through their difference $\tau_1-\tau_2 \bmod p$. We make a change of variables $(\tau_1, \tau_2) \rightarrow (u, \tau_2)$ by defining $u$ as
\begin{equation}
u := (g-1)^{-1}(\tau_1-\tau_2) \bmod p. \label{def:u}
 \end{equation}
Our objective now is to count the number of ordered pairs $(u,\tau_2) \in \mathbb{Z}_p^2$ such that the equation
\begin{equation}
 x \equiv u + b_g \mathbf{I}(0\leq x < \tau_2) \bmod p
 \label{eq:modified}
\end{equation}
has no solution for $x\in \set{X}_1$. We note that $b_g$ does not depend on $\tau_2$ and~$u$.

If
\begin{equation}0 \leq u < \bar{q}, \label{eq:range_of_u_1}
\end{equation}
then \eqref{eq:modified} has at least one solution over $x\in \set{X}_1$, namely $x = ( u \bmod p) + p$. Indeed, as $\tau_2 < p \leq x$, the indicator function $\mathbf{I}(0\leq x < \tau_2)$ is evaluated to 0, and thus if we put $x = ( u \bmod p) + p$ in~\eqref{eq:modified}, we have
$( u \bmod p) + p \equiv u \bmod p$, which obviously holds.

On the other hand, if
\begin{equation} p-b_g \leq u < p, \label{eq:range_of_u_2}
\end{equation}
then~\eqref{eq:modified} also has at least one solution over $x \in \set{X}_1$ no matter what $\tau_2$ is. Indeed, for $0 \leq \tau_2 \leq u$, we can set $x=u$. Then $\mathbf{I}(0\leq x < \tau_2)$ equals 0, and we see that $x=u$ is a solution to~\eqref{eq:modified}. For $u < \tau_2 < p$, we can set $x = u+b_g - p$. Then we have $x < u \leq \tau_2$ and whence $\mathbf{I}(0\leq x < \tau_2)=1$. When $u < \tau_2 <p$ and $x= u+b_g-p$, \eqref{eq:modified} becomes
\[
 u + b_g - p \equiv u + b_g \bmod p.
\]

From \eqref{eq:range_of_u_1} and~\eqref{eq:range_of_u_2}, $H_{g1}(\tau_1, \tau_2)$ is equal to $m-1$ only when  $u = \bar{q}, \bar{q}+1, \ldots, p-b_g-1$.
For each such $u$, we now count the number of $\tau_2\in\mathbb{Z}_p$ for which~\eqref{eq:modified} has no solution over $x\in \set{X}_1$. If $x$ is a solution of~\eqref{eq:modified}, then $x$ can take only two values, namely $u$ or $u+b_g$. When $x=u$ is a solution, $x$ must satisfy $x \geq \tau_2$;  when $x=u+b_g$ is a solution, $x$ must satisfy $0\leq x < \tau_2$. So, \eqref{eq:modified} has no solution over $x\in\set{X}_1$ if and only if for all $x\in \set{X}_1$, we have $u < \tau_2$ and $u+b_g \geq \tau_2$. Putting these two inequalities together, we obtain $u < \tau_2 \leq u+b_g$. Consequently, for each $u = \bar{q}, \bar{q}+1, \ldots, p-b_g-1$, there are exactly $b_g$ values of $\tau_2$ such that $H_{g1}(\tau_1, \tau_2) = m-1$. This proves claim (i).

For claim (ii), we count the solutions to~\eqref{eq:modified} for $0\leq x \tau_2$. Since $\mathbf{I}(0\leq x < \tau_2)$ is identically equal to~1, by Lemma~\ref{lemma:simple}, the number of solutions to~\eqref{eq:modified} over $0\leq x < \tau_2$ is at least~$m$. So $H_{g1}(\tau_1,\tau_2)$ cannot be $m-1$.

This ends the proof of the second part of Theorem~\ref{thm:frequency}.

\medskip

3) We  set up the following system of linear equations in variables $N_g(m)$, $N_g(m+1)$ and $N_g(m+2)$:
\begin{align*}
\sum_{k=m}^{m+2} N_g(k)&= pq \\
\sum_{k=m}^{m+2}  kN_g(k)&= q^2.
\end{align*}
After solving for $N_g(m)$ and $N_g(m+1)$
in terms of $N_g(m+2)$,  we obtain \eqref{eq:dist4} to \eqref{eq:dist6}. So, we just need to evaluate $N_g(m+2)$.

The evaluation of $N_g(m+2)$ relies on the following two claims:

(i) For each $k=0,1,2,\ldots,m$, there are exactly
\begin{equation}
(p-b_g)(\bar{q}+b_g-p) \label{eq:app_claim2}
\end{equation}
ordered pairs $(\tau_1, \tau_2)$, with $0\leq \tau_1 < p$ and $kp \leq \tau_2 < kp+\bar{q}$, such that $H_{g1}(\tau_1, \tau_2) = m+2$.

(ii) For $k=0,1,\ldots, m-1$, none of the ordered pair $(\tau_1,\tau_2)$ with $0\leq \tau_1 < p$ and $kp+\bar{q} \leq \tau_2 < (k+1)p$, satisfies $H_{g1}(\tau_1, \tau_2) = m+2$.

Since (i) and (ii) exhaust all possible $(\tau_1, \tau_2) \in G_{p,q}$, we multiply~\eqref{eq:app_claim2} by $(m+1)$ and obtain
\[N_g(m+2) = (m+1)(p-b_g)(\bar{q}+b_g-p).\]
We prove case (i) and (ii) in the rest of this appendix.

By the second part of Lemma~\ref{lemma:H3}, it is sufficient to prove case (i) for $k=0$. Consider $\tau_2$ from 0 to $\bar{q}-1$. Divide the range of $x$ into two disjoint parts:
\begin{align*}
 \set{X}_3 &:= \{0,1,\ldots, \bar{q}-1\}, \text{ and} \\
 \set{X}_4 &:= \{\bar{q}, \bar{q}+1, \ldots, q-1\}.
\end{align*}
As in the proof of the second part of this theorem, we make a change of variable $(\tau_1, \tau_2) \rightarrow (u, \tau_2)$ by defining $u$ as in~\eqref{def:u}. It is noted that $\set{X}_4$ consists of $mp$ consecutive integers, and the indicator function $\mathbf{I}(0 \leq x < \tau_2)$ in~\eqref{eq:modified} equals 0 for all $x \in\set{X}_4$. By Lemma~\ref{lemma:simple}, there are exactly $m$ solutions to~\eqref{eq:modified} for $x \in \set{X}_4$.  So, $H_{g1}(\tau_1,\tau_2)$ equals $m+2$ if and only if \eqref{eq:modified} has exactly two solutions over $x \in \set{X}_3$. It reduces the problem to counting the number of pairs $(u,\tau_2)$, with $0\leq u <p$ and $0 \leq \tau_2 < \bar{q}$, such that~\eqref{eq:modified} has exactly two solutions over $x \in \set{X}_3$.

The only two candidate solutions for \eqref{eq:modified} are
$x=u$ and $x=u+b_g$. We investigate under what condition both $u$ and $u+b_g$ are indeed solutions to~\eqref{eq:modified}. $x=u$ is a solution only if $\mathbf{I}(0 \leq x < \tau_2) = 0$. This implies that $u < \tau_2$. $x=u+b_g$ is a solution only if $\mathbf{I}(0 \leq x < \tau_2) = 1$. Hence $u + b_g < \tau_2$. Combining these two conditions, we obtain
\[
 u+b_g \leq  \tau_2 < u.
\]
This is possible only if $u+b_g \geq p$, and thus after reduction mod $p$, we have $(u + b_g \bmod p) < u$. The first necessary condition for both $u$ and $u+b_g$ are solutions to~\eqref{eq:modified} is
\begin{equation}
 p - b_g \leq u < p. \label{eq:range_of_u_3}
\end{equation}

Secondly, as it is required that $x \in \set{X}_3$, we must have $u \in \set{X}_3$, i.e.,
\begin{equation}0 \leq u < \bar{q} \label{eq:range_of_u_4}
\end{equation}

Putting \eqref{eq:range_of_u_3} and~\eqref{eq:range_of_u_4} together, we have the following necessary condition on~$u$.
\begin{equation}
 p- b_g \leq u <\bar{q}. \label{eq:range_of_u_5}
\end{equation}
We note that the range of $u$ in~\eqref{eq:range_of_u_5} is nonempty, because $p - b_g < \bar{q}$ by assumption.
$u$ can take on any of the $(b_g + \bar{q} - p)$ value in~\eqref{eq:range_of_u_5}, and for each such $u$, $\tau_2$ can assume values in $u-\{0,1,\ldots, p-b_g-1\}$. Hence, the total number of pairs $(\tau_1, \tau_2)$ such that $0\leq \tau_2 < \bar{q}$ and $H_{g1}(\tau_1,\tau_2) = m+2$ is $(p-b_g)(b_g - (p-\bar{q}))$.

For case (ii), we again use the fact that $H_{g1}(\tau_1, \tau_2)$ depends on $\tau_2$ only through the residue of $\tau_2$ mod $p$, and establish case (ii) only for $k=0$. Let $\tau_2$ be in the range $\bar{q} \leq \tau_2 < p$. Consider the solutions to~\eqref{eq:modified} separately in $0 \leq x < \tau_2$ and $\tau_2 \leq x < q$. For $0 \leq x < \tau_2$, $\mathbf{I}(0 \leq x < \tau_2)$ is identically equal to~1. By Lemma~\ref{lemma:simple}, there are at most 2 solutions to~\eqref{eq:modified} for $x = 0, 1, \ldots,\tau_2-1$. For $\tau_2 \leq x < q$, $\mathbf{I}(0 \leq x < \tau_2)$ is identically equal to~0, and by Lemma~\ref{lemma:simple}, there are at most $m-1$ solutions to~\eqref{eq:modified} for $\tau_2 \leq x < q$. There are totally at most $m+1$ solutions to~\eqref{eq:modified}.

This completes the proof of Theorem~\ref{thm:frequency}.

\begin{figure*}
\begin{center}
\begin{tabular}{|*{19}{c}|} \hline
{\bf 0}&     {\bf 3}&     {\bf 6}&     9&    12&    15&    18&    21&    24&    27&    30&    33&    36&    39&    42&    45&    48&    51&    54 \\
19&    22&    25&    28&    31&    34&    37&    40&    43&    46&    49&    52&   55&    {\bf 1}&     {\bf 4}&     {\bf 7}&    10&    13&    16 \\
38&    41&    44&    47&    50&    53&    56&     {\bf 2}&     {\bf 5}&     {\bf 8}&    11&    14&    17&    20&    23&    26&    29&    32&    35 \\ \hline
\end{tabular}
\end{center}
\caption{Mapping from $\mathbb{Z}_{57}$ to a $3 \times 19$ array $\mat{M}_{3,19}$. The numbers 0 to 8 are highlighted} \label{fig:57}
\end{figure*}

\section{Proof of Theorem~\ref{thm:synch}}
\label{app:proof_of_synch}

In this proof, $p$ is a prime number and $q$ is an integer relatively prime to $p$ and strictly larger than $2p^2$, $L=pq$ is the sequence period, and $\gamma$ is the multiplicative inverse of $p \bmod q$, i.e., $\gamma p \equiv  1 \bmod q$.  The unique integer between 0 and $q-1$ which is equal to $x$ mod $q$ is denoted by$(x \bmod q)$. The translate of a subset $\set{S}$ in $G_{p,q}$ by $(\tau_1, \tau_2)$ is defined as
\[
 \set{S} + (\tau_1,\tau_2) := \{ (x,y)+(\tau_1,\tau_2):\,
 (x,y) \in \set{S}\}.
\]

Consider user~$g$, where $i=1,2,\ldots, p-1$.
If user $g$ starts transmitting at time $t_0$, then the channel-activity signal is matched to $s_g(t)$ at time~$t_0$. The receiver will never fail to detect the presence of user~$g$, meaning that if user $g$ does start transmitting, the receiver can always detect this change of status from idle to active. The only sources of error are (a) detecting a user but in fact that user is not active, and (b) miscalculation of the start time. We refer to the error in (a) as {\em false alarm} and (b) as {\em synchronization error}.

We now show that false alarm cannot occur. Suppose on the contrary that the channel-activity signal is matched to $s_g(t)$ at time~$t_0$, but user~$g$ is idle from time $t_0$ to~$t_0+L-1$.  If this happened, the $q$  time indices in $\set{I}_g + t_0$ would be covered by the protocol sequences of other active users. However, the cross-correlation between user $g$ and each other active user is upper bounded by $\lfloor q/p \rfloor +2$, by Theorem~\ref{thm:cross-correlation1}. Because user~$g$ is assumed to be inactive in this period, the number of simultaneously active users does not exceed the maximum $(p+1)/2$, and hence the number of time slots in $\set{I}_{g}+t_0$ with a packet or collision observed is no larger than
\begin{align}
\big(\lfloor q/p \rfloor +2 \big) \Big( \frac{p+1}{2}\Big) &<
\Big(\frac{q}{p} +2 \Big) \Big( \frac{p+1}{2}\Big) \notag \\
&  = q\Big(\frac{1}{p} + \frac{2}{q} \Big) \frac{p+1}{2} \notag \\
& < q \Big(\frac{1}{p} + \frac{2}{2p^2}  \Big) \frac{p+1}{2}  \label{eq:false_alarm}\\
&=  \frac{q}{2} \Big(\frac{p+1}{p}\Big)^2 \notag
\end{align}
In~\eqref{eq:false_alarm}, we have used the assumption that $q > 2p^2$. We note that for all $p \geq 3$, the factor $(p+1)^2/p^2$ is strictly less than two. We obtain
\[
\big(\lfloor q/p \rfloor +2 \big) \Big( \frac{p+1}{2}\Big) < q.
\]
But $q$ is precisely the total number of ones in a period of $s_g(t)$. The $q$ time slots indices by $\set{I}_g+t+0$ cannot be covered by any other $(p+1)/2$ CRT protocol sequences. The channel-activity signal $c(t)$ cannot be matched to $s_g(t)$, and therefore false alarm cannot occur.

For synchronization error, assume that user $g$ is idle from time $t_0-L+1$ to $t_0-1$, and becomes active at time $t_0$. Our objective is to show that the channel-activity signal is not matched to $s_g(t)$ at time $t_0-\tau$, for any integer $\tau$ between 1 and $L-1$. The idea of showing that synchronization error cannot occur is the following.
If the channel-activity signal were matched incorrectly to $s_g(t)$ at $t_0-\tau$, then the receiver would observe $q$ ``1'' or ``*'' at time slots indexed by $t_0-\tau+\Phi_{p,q}'^{-1}(\set{I}_{g,p,q})$. Among these $q$ time slots, say $b$ of them come from a shifted version of $s_g(t)$, starting at time $t_0$. We then show that the remaining $q-b$ slots cannot be covered by the other active users. We divide the proof into several propositions below.

In the modified CRT correspondence $\Phi'$, an element $t \in \mathbb{Z}_{pq}$ is mapped to $(t \bmod p, \gamma t \bmod q)$.
In order to visualize the mapping, we introduce a matrix $\mat{M}_{p,q}$.
\begin{definition}
Given relatively prime integers $p$ and $q$, let $\mat{M}_{p,q}$ be a $p\times q$ matrix whose $(i,j)$-entry equals $t$ if $i\equiv t \bmod p$ and $j\equiv \gamma t \bmod q$,
for $t=0,1,\ldots, pq-1$. The rows and columns of $\mat{M}_{p,q}$ are indexed by $\{0,1,\ldots, p-1\}$ and $\{0,1,\ldots, q-1\}$, respectively.
\end{definition}

Each integer from 0 to $pq-1$ appears exactly once in $\mat{M}_{p,q}$. An example for $p=3$ and $q=19$ is shown in Fig.~\ref{fig:57}. We pay special attention to the integers from 0 to $p^2-1$, and want to get a handle on where they are located in $\mat{M}_{p,q}$. In Fig.~\ref{fig:57}, we can see that 0, 3 and 6 are on the upper left corner of $\mat{M}_{3,19}$. The numbers 1, 4 and 7 occupy three consecutive entries in row~1. The numbers 2, 5 and 8 occupy three consecutive entries in row~2.

\begin{prop}
(i) For any $i$ and $j$, the $(i,j+1)$-entry is equal to $p$ plus the $(i,j)$-entry mod $pq$.

(ii) Under the modified CRT correspondence $\Phi_{p,q}'$, the integers $kp$, for $k=0,1,2,\ldots, q-1$, are mapped to $(0, k)$. They appear in the first row of $\mat{M}_{p,q}$.

(iii) The numbers from 1 to $p^2-1$, except the multiples of $p$, are located between column $2p+1$ and column $q-p-1$ inclusively in $\mat{M}_{p,q}$.
\label{prop:M}
\end{prop}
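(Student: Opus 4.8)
The plan is to treat the three parts separately, with parts (i) and (ii) following by direct substitution into the definition of the modified correspondence $\Phi_{p,q}'(x) = (x \bmod p,\, \gamma x \bmod q)$, and the bulk of the work going into part (iii). Throughout, the single identity I would lean on is the defining relation $\gamma p \equiv 1 \bmod q$.

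For part (i), I would let $t$ be the $(i,j)$-entry, so that $\Phi_{p,q}'(t) = (i,j)$, i.e. $t \equiv i \bmod p$ and $\gamma t \equiv j \bmod q$. Setting $t' := (t+p) \bmod pq$ and using that reduction mod $pq$ preserves residues mod $p$ and mod $q$, I compute $t' \equiv t \equiv i \bmod p$ and $\gamma t' \equiv \gamma t + \gamma p \equiv j + 1 \bmod q$, the last step invoking $\gamma p \equiv 1 \bmod q$. Hence $\Phi_{p,q}'(t') = (i, j+1)$ (column index read mod $q$), which is exactly the assertion. For part (ii), I substitute $t = kp$ directly: $kp \equiv 0 \bmod p$ and $\gamma(kp) = k(\gamma p) \equiv k \bmod q$; since $0 \le k \le q-1$ this gives $\Phi_{p,q}'(kp) = (0,k)$, placing every multiple of $p$ in row~$0$.

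For part (iii), the key reduction I would establish first is a description of each column. Since $\gamma$ is the inverse of $p$ mod $q$, the condition $\gamma t \equiv j \bmod q$ is equivalent to $t \equiv pj \bmod q$; therefore the entries of column $j$ are precisely the $p$ integers congruent to $pj \bmod q$ in $\{0,\dots,pq-1\}$, and the smallest of them is $f(j) := (pj \bmod q)$. Because any $t$ with $1 \le t \le p^2-1$ satisfies $t < q$ (as $q > 2p^2$), such a $t$ can only be the smallest entry of its column, so $t$ lies in column $j$ iff $f(j) = t$. I would then bound $f(j)$ at the two ends. For $0 \le j \le 2p$ we have $pj \le 2p^2 < q$, hence $f(j) = pj$ is a multiple of $p$, so these columns contain no non-multiple of $p$. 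For $q-p \le j \le q-1$, writing $j = q-l$ with $1 \le l \le p$ gives $pj \equiv -pl \bmod q$ and, since $pl \le p^2 < q$, $f(j) = q - pl \ge q - p^2 > p^2 - 1$, so these columns contain no value in $\{1,\dots,p^2-1\}$. Combining the two exclusions forces every non-multiple of $p$ in $\{1,\dots,p^2-1\}$ into columns $2p+1 \le j \le q-p-1$.

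The routine parts are the substitutions in (i) and (ii); the main obstacle is pinning down part (iii) correctly, namely reducing the problem to the single ``smallest entry'' $f(j) = pj \bmod q$ of each column and then invoking the hypothesis $q > 2p^2$ at exactly the right places --- once to guarantee $pj < q$ (so that $f(j) = pj$) on the left end, and once to guarantee $q - p^2 > p^2 - 1$ on the right end. Care is also needed to keep the excluded multiples of $p$ out of the count: by part (ii) these occupy the low columns $1,\dots,p-1$ of row~$0$, which lie outside the claimed range, and it is precisely because the statement excludes them that the left-end bound causes no conflict.
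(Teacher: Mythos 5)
Your proposal is correct, and while parts (i) and (ii) coincide with the paper's proof (direct substitution into $\Phi'_{p,q}$ together with the defining relation $\gamma p \equiv 1 \bmod q$ --- indeed your (i) is slightly more complete, since the paper only verifies the $(0,1)$-entry and leaves the general step implicit), your part (iii) takes a genuinely different route. The paper works on the image side: it proves by contradiction the claim $2p < (k\gamma \bmod q) \leq q-2p$ for $k=1,\ldots,p-1$ (if $k\gamma \bmod q$ fell within $2p$ of either end of $\{0,\ldots,q-1\}$, then multiplying by $p$ would force $k \equiv k\gamma p \bmod q$ to lie in $\{p,2p,\ldots,2p^2\}$ or in $\{q-2p^2+p,\ldots,q-p\}$, impossible for $1\leq k\leq p-1$ once $q>2p^2$), and then decomposes a general non-multiple $\ell = mp+k$ with $0\leq m\leq p-1$, using part (i) to shift $m$ columns rightward from the position of $k$, landing in columns $2p+1$ through $q-2p+(p-1)=q-p-1$. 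You instead invert the correspondence: since $\gamma^{-1}\equiv p \bmod q$, column $j$ consists precisely of the integers congruent to $pj \bmod q$, so any $t$ with $t < q$ sitting in column $j$ must be the minimal entry $f(j) = (pj \bmod q)$, and both exclusions become direct computations --- $f(j)=pj$ is a multiple of $p$ for $j\leq 2p$, and $f(j)=q-pl > p^2-1$ for $j = q-l$, $1\leq l\leq p$. The two arguments are morally dual (your direct bound on $f(j)$ is essentially the contrapositive of the paper's contradiction, and both invoke $q>2p^2$ at the same two spots), but yours avoids the contradiction and the two-step decomposition $\ell = mp+k$ entirely, since multiplication by $p$ is transparent where multiplication by $\gamma$ is not; the paper's version, in exchange, reuses part (i) structurally and isolates the reusable claim \eqref{eq:claim_range} about the positions of $1,\ldots,p-1$. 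Your closing observation --- that the multiples of $p$ in $\{1,\ldots,p^2-1\}$ sit in columns $1,\ldots,p-1$ by part (ii), which is exactly why the statement must exclude them --- is a correct and worthwhile sanity check that the paper leaves unstated.
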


\begin{proof}
(i) The $(0,1)$-entry in $\mat{M}_{p,q}$ is labeled by~$p$, because
\[
 p \mapsto (p \bmod p, \gamma p \bmod q) = (0,1),
\]
and  $\gamma p \equiv 1 \bmod q$ by the defining property of $\gamma$.

(ii) For $k=0,1,2,\ldots, q-1$,
\[
\Phi_{p,q}'(kp) = (kp, kp\gamma) = (0, k(1) ) = (0,k).
\]

(iii) We have the following claim:
\begin{equation}
 2p < (k\gamma \bmod q) \leq q-2p  \label{eq:claim_range}
\end{equation}
for $k=1,2,\ldots, p-1$.

We prove the claim by contradiction. Suppose that $k\gamma$, after reduction mod $q$, is between 1 and $2p$. Then, $k\gamma p \bmod q$ is equal to $p$, $2p$, $3p, \ldots,$ or $2p^2$. Since $q > 2p^2$, these $p$ numbers remain unchanged after reduction mod $q$. However,
$k\gamma p \equiv k(\gamma p) \equiv k \bmod q$, and this contradicts the assumption that $k$ is between 1 and $p-1$. Now suppose that $(k\gamma \bmod q)$ is equal to $q - 2p+1$, $q-2p+2, \ldots,$ or $q-1$. Then, the value of $k \gamma p$, after reduction mod $q$, is equal to
$$q-2p^2+p,\ q- 2p^2+2p, \ldots,\text{ or } q- p.$$
Since $k\gamma p \equiv k \bmod q$, this also contradicts  that $k$ is between 1 and $p-1$. This finishes the proof of the claim.

Let $\ell$ be an integer between 1 and $p^2-1$ which is not a multiple of $p$. We can write $\ell$ as $mp+k$ for some $m$ and $k$ between 1 and $p-1$. By part (i), the location of $\ell$ in $\mat{M}_{p,q}$ is $m$ steps to the right of the location of $k$ in $\mat{M}_{p,q}$. But $k$ cannot be located to the right of column $q-2p$. The right-most column in $\mat{M}_{p,q}$ which may contain $\ell$ is thus $q-p-1$.
This finishes the proof of part (ii).
\end{proof}

\begin{prop}
(i) Let $\set{S} = \{0, 1, 2,\ldots, p^2-1\}$, and $\set{S}'$ be the image of $\set{S}$ under $\Phi'_{p,q}$, i.e., $\set{S}' = \Phi_{p,q}'(\set{S})$. We have
\begin{equation}
 |\set{I}_g \cap (\set{I}_h+(\tau_1, \tau_2)) \cap \set{S}' | \leq 2 \label{eq:partial}
\end{equation}
for any given $(\tau_1,\tau_2)$ and $g\neq h$.

(ii) For $g=1,2,\ldots, p-1$, there are exactly $p$ ones in the first $p^2$ bits of $s_g(t)$, i.e., there are exactly $p$ ones among $s_g(0)$, $s_g(1), \ldots, s_g(p^2-1)$.
\label{prop:aperiodic}
\end{prop}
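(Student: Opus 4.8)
The plan is to read off an explicit row-by-row description of $\set{S}'$ from Proposition~\ref{prop:M}, and then combine it with the residue-class description of the intersection of two CRT lines that already drives the cross-correlation analysis. First I would pin down $\set{S}'$: writing each $t$ with $0\le t<p^2$ uniquely as $t=mp+k$ with $0\le m,k<p$, the defining relation $\gamma p\equiv 1\bmod q$ gives $\Phi'_{p,q}(t)=(k,(k\gamma+m)\bmod q)$. Hence, for each fixed row $k$, the points of $\set{S}'$ occupy the $p$ consecutive columns $k\gamma,k\gamma+1,\ldots,k\gamma+p-1$ taken mod $q$, and by the range estimate~\eqref{eq:claim_range} these $p$ columns do not wrap around modulo $q$. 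Now $\set{I}_g$ carries exactly one point per column, in row $gj\bmod p$ of column $j$; equivalently, in row $k$ its points sit in the columns with $j\equiv g^{-1}k\bmod p$. Since among any $p$ consecutive integers exactly one lies in a prescribed residue class mod $p$, $\set{I}_g$ meets the $\set{S}'$-columns of row $k$ in exactly one point, for each $k=0,1,\ldots,p-1$. Summing over the $p$ rows proves part (ii): there are exactly $p$ ones among $s_g(0),\ldots,s_g(p^2-1)$.

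For part (i) I would reuse the computation behind Theorem~\ref{thm:cross-correlation1}. In the full grid, $\set{I}_g$ and $\set{I}_h+(\tau_1,\tau_2)$ each have one point per column, and equating their rows in column $j$ yields
\[
 (g-h)\,j\equiv \tau_1-h\tau_2+hq\,\mathbf{I}(j<\tau_2)\pmod p ,
\]
where the indicator records the wrap-around in $(j-\tau_2)\bmod q$. As $g\neq h$ is invertible mod $p$, the columns carrying a point of $\set{I}_g\cap(\set{I}_h+(\tau_1,\tau_2))$ fall into at most two residue classes mod $p$: one value $\alpha$ for $j\ge\tau_2$ and one value $\beta$ for $j<\tau_2$. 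On the other hand, the $p$ points of $\set{I}_g\cap\set{S}'$ found above lie in columns $j_k$ with $j_k\equiv g^{-1}k\bmod p$, and as $k$ runs through $\mathbb{Z}_p$ the residues $g^{-1}k$ run through all of $\mathbb{Z}_p$ exactly once. Therefore at most one of these columns is $\equiv\alpha$ and at most one is $\equiv\beta$, so at most two of the $p$ candidate points can simultaneously lie in $\set{I}_h+(\tau_1,\tau_2)$; the side conditions $j\ge\tau_2$ and $j<\tau_2$ only sharpen this. This gives~\eqref{eq:partial}.

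The main obstacle is the wrap-around term $hq\,\mathbf{I}(j<\tau_2)$: it is exactly what prevents the intersection columns from forming a single residue class and forces the bound $2$ instead of $1$. Handling it cleanly requires keeping $\tau_2$ reduced into $\{0,\ldots,q-1\}$ and using that each row's $\set{S}'$-columns, being $p$ consecutive integers, meet every residue class mod $p$ exactly once. Both of these facts rest on the non-wrap estimate~\eqref{eq:claim_range}, which is where the standing hypothesis $q>2p^2$ enters the argument through Proposition~\ref{prop:M}.
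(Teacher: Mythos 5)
Your proposal is correct and takes essentially the same route as the paper's proof: the same row-by-row description of $\set{S}'$ as $p$ consecutive non-wrapping columns per row (resting on Proposition~\ref{prop:M} and the estimate~\eqref{eq:claim_range}, hence on $q>2p^2$), and the same two-case wrap-around analysis of $\set{I}_g\cap(\set{I}_h+(\tau_1,\tau_2))$, which the paper states as ``the intersection lies in at most two rows of $\mat{A}_g$'' while you state it as ``the intersection columns lie in at most two residue classes mod $p$'' --- equivalent formulations, since along $\set{I}_g$ the row $\bar{j}g$ and the column residue $j \bmod p$ determine one another (for $g\neq 0$, the only case used in the application). Your part (ii) likewise matches the paper's counting of exactly one ``1'' of $\mat{A}_g$ among any $p$ consecutive columns in each row.
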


The quantity on the left hand side of~\eqref{eq:partial} can be interpreted as the {\em partial Hamming cross-correlation}, defined as
\[
  \sum_{t=0}^{p^2-1} s_g(t) s_h(t+\tau).
\]
We only consider the number of overlaps in the first $p^2$ time indices. The proposition asserts that the partial Hamming cross-correlation of two CRT sequences cannot exceed two.

\begin{proof}
(i) By part (i) of Prop.~\ref{prop:M}, for each $k$ between 0 and $p-1$, the following $p$ integers,
\[
 k, k+p, k+2p, \ldots, k+(p-1)p,
\]
occupy $p$ consecutive horizontal entries in the $k$th row of $\mat{M}_{p,q}$. Wrapping around the right boundary of $\mat{M}_{p,q}$ is precluded by Prop.~\ref{prop:M}.

A common element of $\set{I}_g$ and $\set{I}_h+(\tau_1,\tau_2)$ is in the form
\begin{equation}
 (g t_1, t_1) = (h t_2 + \tau_1, t_2 + \tau_2) \label{eq:common}
\end{equation}
for some $t_1$ and $t_2$ between 0 and $q-1$. If $t_2+\tau_2$ is between 0 and $q-1$, then the first coordinates of the two order pairs in~\eqref{eq:common} are equal to $h t_2+\tau_2$, with $t_2$ satisfying
\begin{equation}
 g (t_2+\tau_2) \equiv h t_2 + \tau_1 \bmod p.  \label{eq:common1}
\end{equation}
If $t_2 + \tau_2$ is larger than $q$, then $t_1 = t_2+\tau_2 - q$, and the first coordinates of the two order pairs in~\eqref{eq:common} are equal to $h_2+\tau_2$, with $t_2$ satisfying
\begin{equation}
 g (t_2+\tau_2 - q) \equiv h t_2 + \tau_1 \bmod p. \label{eq:common2}
\end{equation}
Hence,  $h t_2 + \tau_1$ may assume only two values mod $p$, one from~\eqref{eq:common1} and the other one from~\eqref{eq:common2}. Let $\mat{A}_g$ be the $p \times q$ array with characteristic set $\set{I}_g$.
We see that the elements in
$\set{I}_g \cap (\set{I}_h+(\tau_1, \tau_2))$ are located in at most two rows in $\mat{A}_g$.
Since $p$ consecutive entries in a row of $\mat{A}_g$ contain exactly one ``1'', at most two elements in $\set{I}_g \cap (\set{I}_h+(\tau_1, \tau_2))$ are covered by $\set{S}'$.

(ii) Let $\mat{A}_g$ be the $p \times q$ array with $\set{I}_{g}$ as the characteristic set. From the remark before Definition~\ref{def:CRT}, any $p$ consecutive columns in $\mat{A}_g$ form a permutation matrix. For each $b= 0,1,\ldots, p-1$, the time indices
\[
 b,\ b+p, \ldots, b+(p-1)p
\]
are $p$ consecutive entries in a row in $\mat{M}_{p,q}$. Hence there is exactly one ``1'' among $s_g(b)$, $s_g(b+p),\ldots, s_g(b+(p-1)p)$. Since this is true for $b=0,1,\ldots, p-1$, we conclude that there are exactly $p$ ``1'' in $s_g(0)$, $s_g(1), \ldots, s_g(p^2-1)$.
\end{proof}

\begin{prop} For $y = 0,1,\ldots, q-p$, let $\set{P}_y'$ be
the index set
\[
 \set{P}'_y := \{(i,j) \in G_{p,q}:\, 0 \leq i \leq p-1, y \leq j \leq y+p-1 \}.
\]
Let $\set{P}_y$ be the corresponding set of  time indices in $\mathbb{Z}_{pq}$ under the modified CRT correspondence,
\[
\set{P}_y := \Phi_{p,q}'^{-1}(\set{P}').
\]

Then

(i) For $g \neq h$, and delay offset $(\tau_1, \tau_2)$,
\[
|\set{I}_g \cap (\set{I}_h+(\tau_1,\tau_2)) \cap \set{P}'_y| \leq 2.
\]

(ii) For $g=1,2,\ldots, p$, there are exactly $p$ ones in $s_g(t)$ for $t \in \set{P}_y$.
\label{prop:P}
\end{prop}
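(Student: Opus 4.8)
The plan is to reduce both parts to the two-dimensional array $\mat{A}_g$ and to reuse, almost verbatim, the counting ideas already developed for Proposition~\ref{prop:aperiodic}. The one structural simplification here is that $\set{P}'_y$ is an \emph{aligned} block of $p$ consecutive columns (columns $y$ through $y+p-1$, taken over all $p$ rows), rather than the staircase-shaped image $\set{S}'$ of $\{0,\dots,p^2-1\}$. Since $0 \le y \le q-p$, this window never wraps past the right boundary of the array, so no edge corrections are needed. Throughout I take $g,h \in \{1,\dots,p-1\}$, as in Construction~\ref{construction:async}, so that $g \ne 0$.

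For part (ii) I would invoke the remark preceding Definition~\ref{def:CRT}: for $g \ne 0$, $\mat{A}_g(i,j)=1$ iff $i \equiv gj \bmod p$, so each column of $\mat{A}_g$ contains exactly one ``$1$''. As $\set{P}'_y$ is the union of the $p$ full columns $y,\dots,y+p-1$, this gives $|\set{I}_g \cap \set{P}'_y| = p$ immediately. Because $\Phi_{p,q}'$ is a bijection and $\set{P}_y = \Phi_{p,q}'^{-1}(\set{P}'_y)$, the number of ones of $s_g(t)$ over $t \in \set{P}_y$ equals $|\set{I}_g \cap \set{P}'_y| = p$, which is the assertion.

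For part (i) I would run the row-counting argument. A common element of $\set{I}_g$ and $\set{I}_h+(\tau_1,\tau_2)$ has the form $(g t_1, t_1) = (h t_2 + \tau_1, t_2 + \tau_2)$ for some $t_1,t_2 \in \{0,\dots,q-1\}$. Matching the second coordinates forces either $t_1 = t_2 + \tau_2$ or $t_1 = t_2 + \tau_2 - q$, according to whether $t_2 + \tau_2 < q$. Feeding each case into the first-coordinate congruence $g t_1 \equiv h t_2 + \tau_1 \bmod p$ yields $(g-h) t_2 \equiv c \bmod p$ for a constant $c$ depending only on the case; since $g \ne h$ and $p$ is prime, $g-h$ is invertible, so $t_2$, and hence the row index $g t_1 \bmod p$, is determined within each case. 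Thus all common elements lie in at most two rows of $\mat{A}_g$. Finally, in a single row the ones of $\mat{A}_g$ occupy one residue class of columns mod $p$, so the aligned window $\set{P}'_y$ of $p$ consecutive columns meets that row in exactly one ``$1$''. Intersecting the at most two admissible rows with $\set{P}'_y$ leaves at most two common elements, giving $|\set{I}_g \cap (\set{I}_h+(\tau_1,\tau_2)) \cap \set{P}'_y| \le 2$, as claimed.

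The step needing the most care is the split $t_1 = t_2+\tau_2$ versus $t_1 = t_2+\tau_2-q$ together with the check that each case pins the row index down to a single value; this is exactly where the invertibility of $g-h$ (hence $g \ne h$ and $p$ prime) is used. Everything else is bookkeeping, once one records at the outset that $0 \le y \le q-p$ keeps the window $[y,y+p-1]$ inside $\{0,\dots,q-1\}$, so that ``$p$ consecutive columns contain exactly one $1$ in each row'' holds with no wraparound exception.
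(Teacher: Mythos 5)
Your proof is correct and takes essentially the same approach as the paper, which establishes part (ii) from the one-``1''-per-column (permutation matrix) property and disposes of part (i) by noting it is similar to the proof of Prop.~\ref{prop:aperiodic}. Your write-up merely makes that omitted row-counting argument explicit --- the case split $t_1 = t_2+\tau_2$ versus $t_1 = t_2+\tau_2-q$ confining the common elements to at most two rows (via invertibility of $g-h$ in $\mathbb{Z}_p$), with each row meeting the aligned window of $p$ consecutive columns in exactly one ``1'' --- which matches the paper's reasoning exactly.
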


\begin{proof}
The elements in $\set{P}'_y$ can be regarded as a square submatrix in a $p\times q$ matrix. The proof of part (i) of Prop.~\ref{prop:P} is similar to that of Prop.~\ref{prop:aperiodic} and is omitted. The second part follows from the fact that every $p$ consecutive columns in $\mat{A}_g$, which stands for the $p\times q$ matrix with characteristic set $\set{I}_g$, form a permutation matrix, and hence contains exactly $p$ ones.
\end{proof}

\begin{prop}
Let $\set{S}'$, $\set{P}'_y$  be defined as in Prop.~\ref{prop:aperiodic} and~\ref{prop:P}. For $g=1,2,\ldots, p-1$, and $\tau$ between 1 and $L-1$, the subset $\set{B}_{g,\tau}$ of time indices $t$ in $\{0,1,2,\ldots, L-1\}$,  which satisfy
\begin{align*}
s_g(t) = 1 & \text{ for } t=0,1,\ldots, \tau-1, \\
s_g(t) = 1 \text{ and } s_g(t-\tau) = 0& \text{ for } t=\tau, \tau+1,\ldots, L-1,
\end{align*}
contains $\Phi_{p,q}'^{-1}(\set{S}'\cap \set{I}_g)$, or $ \Phi_{p,q}'^{-1}(\set{P}_y'\cap \set{I}_g)$ for some~$y$.
\label{prop:forbidden}
\end{prop}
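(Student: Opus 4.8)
The plan is to translate the combinatorial condition defining $\set{B}_{g,\tau}$ into the two-dimensional language of $G_{p,q}$, and then, for every admissible $\tau$, to exhibit one concrete window---either the initial block $\set{S}'$ or a single block $\set{P}'_y$ of $p$ consecutive columns---all of whose $p$ ones of $\set{I}_g$ fall inside $\set{B}_{g,\tau}$. First I would record the equivalent description that $t\in\set{B}_{g,\tau}$ exactly when $s_g(t)=1$ and either $t<\tau$ or $s_g(t-\tau)=0$, where $t-\tau$ denotes ordinary (non-modular) subtraction. For $t\geq\tau$ the integer $t-\tau$ lies in $\{0,\dots,L-1\}$, and since $\Phi'_{p,q}$ is a group isomorphism, $s_g(t-\tau)=1$ is equivalent to $\Phi'_{p,q}(t)\in\set{I}_g\cap(\set{I}_g+\Phi'_{p,q}(\tau))$. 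Hence $\set{B}_{g,\tau}$ contains every $t$ with $s_g(t)=1$ whose image avoids the periodic two-dimensional overlap $\set{I}_g\cap(\set{I}_g+\Phi'_{p,q}(\tau))$, together with every $t<\tau$ with $s_g(t)=1$.

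Next I would invoke Theorem~\ref{thm:auto-correlation}: this overlap is empty unless $\Phi'_{p,q}(\tau)=\pm(g,1)k$ for some $0<k<q$, and a direct computation (as in the proof of Theorem~\ref{thm:auto-correlation}) shows that when $\Phi'_{p,q}(\tau)=(g,1)k$ the overlap consists precisely of the ones in columns $k,k+1,\dots,q-1$, while those in columns $0,\dots,k-1$ are disjoint from it; for $\Phi'_{p,q}(\tau)=-(g,1)k$ the non-overlapping columns are instead $q-k,\dots,q-1$. Two cases then dispose of themselves. If $\tau$ is non-resonant the overlap is empty, so $\set{B}_{g,\tau}$ equals the full set $\Phi_{p,q}'^{-1}(\set{I}_g)$ and in particular contains $\Phi_{p,q}'^{-1}(\set{S}'\cap\set{I}_g)$. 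If $\tau$ is resonant with $k\geq p$, then the $p$ consecutive columns $\set{P}'_0$ (for the $+$ sign) or $\set{P}'_{q-p}$ (for the $-$ sign) lie entirely in the non-overlapping region, so their $p$ ones, counted by Proposition~\ref{prop:P}(ii), are cyclically uncovered and hence lie in $\set{B}_{g,\tau}$.

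The remaining and most delicate case is a resonant $\tau$ with $1\leq k\leq p-1$, where the non-overlapping region spans fewer than $p$ columns and no full column-window fits. Here I would exploit aperiodicity: because user $g$ is idle before $t_0$, every $t$ with $s_g(t)=1$ and $t<\tau$ lies in $\set{B}_{g,\tau}$ regardless of coverage. The key sub-claim is that small-$k$ resonance forces $\tau>p^2$. Indeed $\Phi'_{p,q}(\tau)=\pm(g,1)k$ gives $\gamma\tau\equiv\pm k\bmod q$, whence $\tau\equiv\pm pk\bmod q$ using $\gamma p\equiv1$; since $pk<p^2<q$, the admissible values of $\tau$ are $\geq q-pk$ in the $-$ case, while in the $+$ case the extra requirement $\tau\equiv gk\bmod p$ with $gk\not\equiv0$ excludes the smallest residue $pk$ (which is $0\bmod p$) and forces $\tau\geq q$. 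In either case $q>2p^2$ yields $\tau>p^2$. Consequently the $p$ ones of $\set{I}_g$ inside $\set{S}'=\Phi'_{p,q}(\{0,\dots,p^2-1\})$, guaranteed by Proposition~\ref{prop:aperiodic}(ii), all carry time indices below $p^2<\tau$ and therefore belong to $\set{B}_{g,\tau}$, giving $\set{B}_{g,\tau}\supseteq\Phi_{p,q}'^{-1}(\set{S}'\cap\set{I}_g)$.

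I expect the main obstacle to be exactly this last step: the arithmetic establishing that resonance with $k<p$ forces $\tau>p^2$. This is precisely where the hypothesis $q>2p^2$ is consumed, and Example~\ref{ex:false_detection}, where $q=8<2p^2$ and $\tau=1$ still triggers a resonant near-match, shows that the conclusion genuinely fails without it. The remaining bookkeeping---tracking the two signs and choosing between the windows $\set{P}'_0$ and $\set{P}'_{q-p}$---is routine once the overlapping columns have been identified.
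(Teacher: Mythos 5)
Your proposal is correct, and I checked the delicate steps: the resonance dichotomy from Theorem~\ref{thm:auto-correlation} (overlap in columns $k,\ldots,q-1$ for $\Phi'_{p,q}(\tau)=(g,1)k$, complement columns $q-k,\ldots,q-1$ for the minus sign), the validity of the windows $\set{P}'_0$ and $\set{P}'_{q-p}$ when $k\geq p$, and the small-$k$ arithmetic, including the easily missed use of the first coordinate ($\tau\equiv gk\not\equiv 0\bmod p$) to exclude the candidate $\tau=pk$ in the plus case. However, your route is genuinely different from the paper's. The paper splits on the magnitude of $\tau$: for $p^2\leq\tau<L$ the first $p^2$ slots are trivially uncovered (giving the $\set{S}'$ window), while for $1\leq\tau<p^2$ it works inside the matrix picture, using Prop.~\ref{prop:M}(iii) --- the fact that the integers $1,\ldots,p^2-1$ that are not multiples of $p$ sit in columns $2p+1$ through $q-p-1$ of $\mat{M}_{p,q}$, which is where $q>2p^2$ is consumed --- together with the arithmetic-progression structure of $\set{I}_g\cap(\set{I}_g+\Phi'(\tau))$ to select a window $\set{P}'_{\tau/p}$, $\set{P}'_{\tau_2}$, or $\set{P}'_{\tau_2-p}$ adjacent to $\tau$'s column. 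You instead split on the two-dimensional resonance parameter $k$, and handle small-$k$ resonance by showing directly via $\gamma p\equiv 1\bmod q$ that it forces $\tau>p^2$, which re-derives the content of the claim~\eqref{eq:claim_range} in a different guise; the hypothesis $q>2p^2$ enters at the exactly analogous spot. Your version buys self-containedness (Prop.~\ref{prop:M}(iii) is not needed for this proposition, though the paper still needs it elsewhere in the appendix), a transparent account of where $q>2p^2$ is used, and the clean observation that non-resonant offsets leave every one of $s_g(t)$ uncovered; the paper's version buys locality (the uncovered window is exhibited next to $\tau$'s own column, uniformly over both signs) and reuse of machinery already established for Props.~\ref{prop:aperiodic} and~\ref{prop:P}. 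At bottom the two arguments run on the same modular identity, packaged once as geometry of $\mat{M}_{p,q}$ and once as a bound on the resonant shift $\tau$.
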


\begin{proof}
Let $\hat{s}_g(t)$ be the acyclic shift of $s_g(t)$ to the right by delay offset $\tau$. The time indices in $\set{B}_{g,\tau}$ correspond to the ``1'' in $s_g(t)$ which is not covered by $\hat{s}_g(t)$.
We consider two cases: (i) $p^2 \leq \tau <L$, and (ii) $1 \leq \tau < p^2$.

Case (i). When $p^2 \leq \tau < L$, the first $p^2$ bits in $s_g(t)$ are not covered by  $\hat{s}_g(t)$, and therefore $\set{B}_{g,\tau}$ contains $\Phi_{p,q}'^{-1}(\set{S}' \cap \set{I}_g)$.

Case (ii). Recall that in the proof of Theorem~\ref{thm:auto-correlation}, it is mentioned that the intersection of
$\set{I}_g$ and $\set{I}_g + (\tau_1, \tau_2)$ is either empty, or an arithmetic progressions in $G_{p,q}$ with common difference $(g,1)$. The intersection is non-empty if and only if $(\tau_1,\tau_2)$ equals $(g,1)k$ for some $k=\pm1, \pm2, \ldots, \pm(q-1)$.

Suppose that $\tau$ is a nonzero multiple of $p$ between 1 and $p^2-1$. By part (i) of Prop.~\ref{prop:M}, $\Phi'_{p,q}(\tau)=(0, \tau/p)$. ($\tau$ is one of the $p$ left-most entries in the first row of $\set{M}_{p,q}$.) In this case, $\Phi'(\tau)$ does not equal $(g,1)k$ for any $k \in \{1, 2, \ldots, p-1\}$. (We have used the assumption that $g$ is non-zero in this step.)  This implies that the ones in $\mat{A}_g$ with indices in
$\set{P}'_{\tau/p}$ are not covered by $\hat{s}_g(t)$. Thus, $$\set{B}_{g,\tau} \supseteq \Phi_{p,q}'^{-1}(\set{P}'_{\tau/p} \cap \set{I}_g).$$

Now suppose that $\tau$ is between 1 and $p^2-1$ but is not a multiple of~$p$. Let $(\tau_1, \tau_2) = \Phi'_{p,q}(\tau)$. Using similar argument as in the previous paragraph, if $(\tau_1, \tau_2) \not\in \set{I}_g$, then $$\set{B}_{g,\tau} \supseteq \Phi_{p,q}'^{-1}(\set{P}'_{\tau_2} \cap \set{I}_g).$$
Otherwise if $(\tau_1, \tau_2) \not\in \set{I}_g$, then the intersection of $\set{I}_g$ and $\set{I}_g + (\tau_1, \tau_2)$ equals
\[
\{ (g,1)k:\, k = \tau_2, \tau_2+1,\ldots, q-1\}.
\]
Consider the $p$ columns in $\mat{M}_{p,q}$ to the left of $\tau$, namely the time indices associated with $\set{P}'_{\tau_2-p}$ in $\mat{M}_{p,q}$. The corresponding time slots are not covered by $\hat{s}_g(t)$. Therefore,
$$\set{B}_{g,\tau} \supseteq \Phi_{p,q}'^{-1}(\set{P}'_{\tau_2-p} \cap \set{I}_g).$$
\end{proof}

\begin{figure}
\begin{center}
  \includegraphics[width=2.7in]{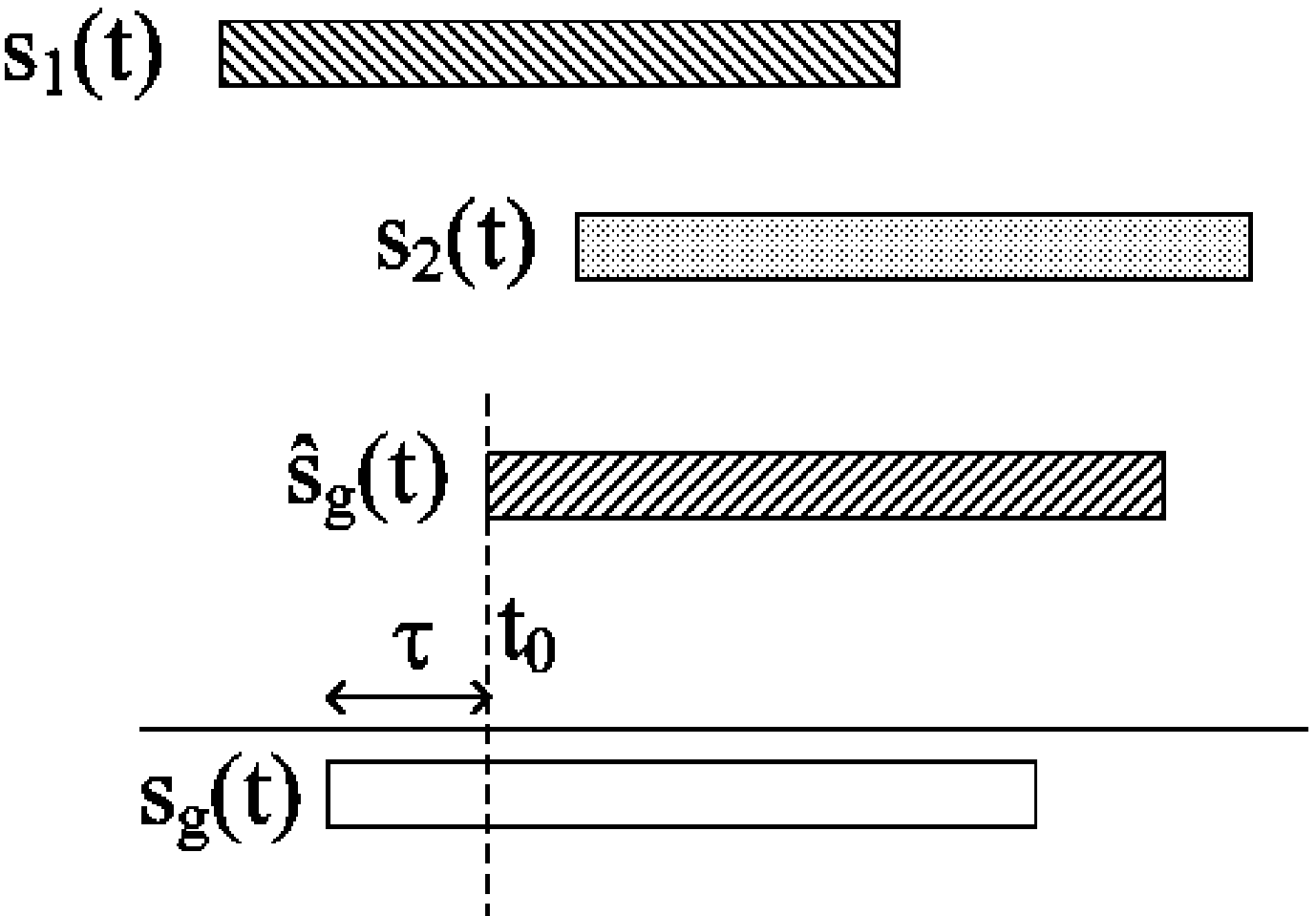}
\end{center}
\caption{The protocol sequence $\hat{s}_g(t)$ is transmitted at time $t_0$. Sequences $s_1(t)$ and $s_2(t)$ are transmitted at some other time. The receiver tries to match $c(t)$ with $s_g(t)$ at $t_0-\tau$.}
\label{fig:match}
\end{figure}

\medskip

Suppose that $\hat{s}_g(t)$ is actually transmitted at time $t_0$ and the receiver tries to match the channel-activity signal $c(t)$ with $s_g(t)$ at time $t_0-\tau$  (See Fig.~\ref{fig:match}).
If the channel-activity signal were mistakenly matched to $s_g(t)$ at $t_0 - \tau$, then by the previous proposition, either (i) the $p$ ``1'' in the first $p^2$ bits of $s_g(t)$, which occur at time indices
\[
( \Phi_{p,q}'^{-1}(\set{I}_g) \cap \{0,1,\ldots,p^2-1\}) + t - \tau,
\]
or (ii) the  $p$ ``1'' in $s_g(t)$, which occur at time indices in
\[
\Phi_{p,q}'^{-1}(\set{I}_g \cap \set{P}_y') + t - \tau,
\]
for some $y$ between 0 and $q-p$,
are covered by the other active users.
By Prop.~\ref{prop:aperiodic} and~\eqref{prop:P}, each of the other active users can contribute at most two overlapping slots. As
there are no more than $(p-1)/2$ other active users, the total number of ones that can be covered by other active users are at most $2\cdot(p-1)/2$, which is strictly smaller than~$p$. This proves that the channel-activity signal cannot be matched to $s_g(t)$ at $t_0 - \tau$ for any $\tau=1,2,\ldots, L-1$. This completes the proof of Theorem~\ref{thm:synch}.

%\bibliographystyle{IEEEtran}

%\bibliography{UI}

\begin{thebibliography}{10}
\providecommand{\url}[1]{#1}
\csname url@samestyle\endcsname
\providecommand{\newblock}{\relax}
\providecommand{\bibinfo}[2]{#2}
\providecommand{\BIBentrySTDinterwordspacing}{\spaceskip=0pt\relax}
\providecommand{\BIBentryALTinterwordstretchfactor}{4}
\providecommand{\BIBentryALTinterwordspacing}{\spaceskip=\fontdimen2\font plus
\BIBentryALTinterwordstretchfactor\fontdimen3\font minus
  \fontdimen4\font\relax}
\providecommand{\BIBforeignlanguage}[2]{{%
\expandafter\ifx\csname l@#1\endcsname\relax
\typeout{** WARNING: IEEEtran.bst: No hyphenation pattern has been}%
\typeout{** loaded for the language `#1'. Using the pattern for}%
\typeout{** the default language instead.}%
\else
\language=\csname l@#1\endcsname
\fi
#2}}
\providecommand{\BIBdecl}{\relax}
\BIBdecl

\bibitem{Massey85}
J.~L. Massey and P.~Mathys, ``The collision channel without feedback,''
  \emph{{IEEE} Trans. Inform. Theory}, vol.~31, no.~2, pp. 192--204, Mar. 1985.

\bibitem{SCSW09}
K.~W. Shum, C.~S. Chen, C.~W. Sung, and W.~S. Wong, ``Shift-invariant protocol
  sequences for the collision channel without feedback,'' \emph{{IEEE} Trans.
  Inform. Theory}, vol.~55, pp. 3312--3322, Jul. 2009.

\bibitem{Rocha00}
V.~C. da~Rocha, Jr., ``Protocol sequences for collision channel without
  feedback,'' \emph{IEE Electron. Lett.}, vol.~36, no.~24, pp. 2010--2012, Nov.
  2000.

\bibitem{ZSW}
Y.~Zhang, K.~W. Shum, and W.~S. Wong, ``On pairwise shift-invariant protocol
  sequences,'' \emph{IEEE Commun. Lett.}, vol.~13, no.~6, pp. 453--455, 2009.

\bibitem{Nguyen92}
N.~Q. A, L.~Gy{\"{o}}rfi, and J.~L. Massey, ``Constructions of binary
  constant-weight cyclic codes and cyclically permutable codes,'' \emph{{IEEE}
  Trans. Inform. Theory}, vol.~38, no.~3, pp. 940--949, May 1992.

\bibitem{GyorfiVajda93}
L.~Gy\"{o}fi and I.~Vajda, ``Construction of protocol sequences for
  multiple-access collision channel without feedback,'' \emph{{IEEE} Trans.
  Inform. Theory}, vol.~39, no.~5, pp. 1762--1765, Sep. 1993.

\bibitem{MZKZ95}
O.~Moreno, Z.~Zhang, P.~V. Kumar, and V.~A. Zinoviev, ``New constructions of
  optimal cyclically permutable constant weight codes,'' \emph{{IEEE} Trans.
  Inform. Theory}, vol.~41, no.~2, pp. 448--455, Mar. 1995.

\bibitem{Bitan95}
S.~Bitan and T.~Etzion, ``Constructions for optimal constant weight cyclically
  permutable codes and difference families,'' \emph{{IEEE} Trans. Inform.
  Theory}, vol.~41, no.~1, pp. 77--87, Jan. 1995.

\bibitem{chung89}
F.~R.~K. Chung, J.~A. Salehi, and V.~K. Wei, ``Optical orthogonal codes:
  design, analysis and applications,'' \emph{{IEEE} Trans. Inform. Theory},
  vol.~35, no.~3, pp. 595--604, May 1989.

\bibitem{Wong07}
W.~S. Wong, ``New protocol sequences for random access channels without
  feedback,'' \emph{{IEEE} Trans. Inform. Theory}, vol.~53, no.~6, pp.
  2060--2071, Jun. 2007.

\bibitem{GG07}
L.~Gy\"{o}rfi and S.~Gy\H{o}ri, \emph{Multiple Access Channel -- Theory and
  practice}.\hskip 1em plus 0.5em minus 0.4em\relax Amsterdam: IOS press, 2007,
  ch. Coding for multiple-acccess channel without feedback, pp. 299--326.

\bibitem{IrelandRosen}
K.~Ireland and M.~Rosen, \emph{A Classical Introduction to Modern Number
  Theory}.\hskip 1em plus 0.5em minus 0.4em\relax New York: Springer-Verlag,
  1990.

\bibitem{ShaarDavies83}
A.~A. Shaar and P.~A. Davies, ``Prime sequences: quasi-optimal sequences for
  {OR} channel code division multiplexing,'' \emph{IEE Electron. Lett.},
  vol.~19, no.~21, pp. 888--890, Oct. 1983.

\bibitem{Yang95}
G.-C. Yang and W.~C. Kwong, ``Performance analysis of optical {CDMA} with prime
  codes,'' \emph{IEE Electron. Lett.}, vol.~31, no.~7, pp. 569--570, Mar. 1995.

\bibitem{Pursley80}
D.~V. Sarwate and M.~B. Pursley, ``Crosscorrelation properties of pseudorandom
  and related sequences,'' \emph{Proc. {IEEE}}, vol.~68, no.~5, pp. 593--619,
  1980.

\end{thebibliography}

% Generated by IEEEtran.bst, version: 1.13 (2008/09/30)

\end{document}